\setlist{nolistsep}
\tikzset{main node/.style={draw,fill=white,circle,very thin,auto=left,inner sep=1pt}}
\newtheorem{theorem}{Theorem}
\newtheorem{corollary}[theorem]{Corollary}
\newtheorem{lemma}[theorem]{Lemma}
\newtheorem{definition}{Definition}
\newtheorem{question}{Question}
\newcommand{\BM}{{\sc BM }}
\newcommand{\probld}[3]{
	\begin{flushleft}
		\vspace{-.15cm}
		\fbox{
			\begin{minipage}{.98\textwidth}
				\noindent {\sc #1}\\
				{\bf Input:} #2\\
				{\bf Question:} #3
			\end{minipage}
		}
		\vspace{-.15cm}
	\end{flushleft}
}
\begin{document}
	
	\title{On the Computational Complexity of the\\ Bipartizing Matching Problem\footnotemark \footnotetext{This study was financed in part by the Coordena\c{c}\~ao de Aperfei\c{c}oamento de Pessoal de N\'ivel Superior - Brasil (CAPES) 
			Finance Code 001, by the Conselho Nacional de Desenvolvimento Cient\'ifico e Tecnol\'ogico - Brasil (CNPq) - CNPq/DAAD2015SWE/290021/2015-4, and FAPERJ.\\
			A conference version appeared in the Proc. of the 12th Annual International Conference on Combinatorial Optimization and Applications (COCOA), Volume 11346, pages 198--213, Atlanta, USA, December 2018.}
	}
	\author{Carlos V.G.C. Lima$^{1,\dagger}$, Dieter Rautenbach$^{2,\dagger}$, U\'{e}verton S. Souza$^{3,\dagger}$,\\ and Jayme L. Szwarcfiter$^{4}$\thanks{\textbf{Email addresses:} carloslima@dcc.ufmg.br (Lima), dieter.rautenbach@uni-ulm.de (Rautenbach), ueverton@ic.uff.br (Souza), and jayme@cos.ufrj.br (Szwarcfiter)}
	}
	\date{}\maketitle
	\begin{center}
		{\small
			$^1$ Departamento de Ciência da Computação, Universidade Federal de Minas Gerais, Belo Horizonte, Brazil.\\
			$^2$ Institute of Optimization and Operations Research, Ulm University, Ulm, Germany.\\
			$^3$ Instituto de Computação, Universidade Federal Fluminense, Niter\'{o}i, Brazil.\\
			$^4$ PESC, COPPE, Universidade Federal do Rio de Janeiro, Rio de Janeiro, Brazil.
		}
	\end{center}
	
	\begin{abstract}
		We study the problem of determining whether a given graph~$G=(V,E)$ admits a matching~$M$ whose	removal destroys all odd cycles of~$G$ (or equivalently whether~$G-M$ is bipartite).
		This problem is equivalent to determine whether~$G$ admits a~$(2,1)$-coloring, which is a~$2$-coloring of~$V(G)$ such that each color class induces a graph of maximum degree at most~$1$.
		We determine a dichotomy related to the~{\sf NP}-completeness of this problem, where we show that it is~{\sf NP}-complete even for $3$-colorable planar graphs of maximum degree~$4$, while it is known
		that the problem can be solved in polynomial time for graphs of maximum degree at most~$3$.
		In addition we present polynomial-time algorithms for some graph classes, including graphs in which every odd cycle is a triangle, graphs of small dominating sets, and~$P_5$-free graphs.
		Additionally, we show that the problem is fixed parameter tractable when parameterized by the clique-width, which implies polynomial-time solution for many interesting graph classes,
		such as distance-hereditary, outerplanar, and chordal graphs.
		Finally, an~$O\left(2^{O\left(vc(G)\right)} \cdot n\right)$-time algorithm and a kernel of at most~$2\cdot nd(G)$ vertices are presented, where~$vc(G)$ and~$nd(G)$ are the	vertex cover number
		and	the neighborhood diversity of~$G$, respectively.\\
		
			 \noindent \textbf{Keywords:} Graph modification problems $\cdot$ Edge bipartization $\cdot$ Defective coloring $\cdot$ Planar graphs $\cdot$ {\sf NP}-completeness $\cdot$ Parameterized complexity
	\end{abstract}

\section{Introduction}
\label{sec:intro}

	Given a graph~$G=(V, E)$ and a graph property~$\Pi$, the \textit{$\Pi$ edge-deletion problem} consists in determining the minimum number of edges required to be removed in order to
	obtain a graph satisfying~$\Pi$~\cite{bbd}.
	Given an integer~$k \geq 0$, the~\textit{$\Pi$ edge-deletion decision problem} asks for a set~$F\subseteq E(G)$
	with~$|F| \leq k$, such that the obtained graph by the removal of~$F$ satisfies~$\Pi$.
	Both versions have received widely attention on the study of their computational complexity, where we can cite~\cite{y2,y,bbd,nss,as,gjs,gfpp,Lima17}.
	They are particular cases of the general branch of graph editing problems.
	In such problems some editing operations on the vertices or edges are allowed in order to obtain a new graph satisfying the desired property.
	They include vertex and edge additions and deletions, as in the problem studied in this paper, flipping and contraction of edges.
	Examples of applications and variations of graph modification problems can be found, for example, in~\cite{Chuangpishit}.
	
	When the obtained graph is required to be bipartite, the corresponding edge (resp. vertex) deletion problem is called \textit{edge \textnormal{(resp.} vertex\textnormal{)} bipartization}~\cite{cnr,Abdullah,fkr} or
	\textit{edge \textnormal{(resp.} vertex\textnormal{)} frustration}~\cite{yr,garcia16}.
	
	Choi et al.~\cite{cnr} showed that edge bipartization is {\sf NP}-complete even for cubic graphs.
	Furma\'nczyk et al.~\cite{fkr} considered vertex bipartization of cubic graphs by removing an independent set.
	This problem has also been considered by Bonamy et al.~\cite{Bonamy2018}, where it is called {\sc Independent Odd Cycle Transversal}.
	A generalization of this problem is given by Agrawal et al.~\cite{Agrawal}, which generalizes cycle hitting problems.
	Given a graph~$G$, a subgraph~$H$ of~$G$, and a positive integer~$k$, the goal is to decide whether there exists a vertex subset~$S$ of~$G$ that intersects all desired cycles and~$S$ is an independent set in~$H$.
	They called this problem as \textsc{Conflict Free Feedback Vertex Set} and studied it from the view-point of parameterized complexity.
	
	In this paper we study the similar version for the edge deletion decision problem for odd cycles.
	That is, the problem of determining whether	a finite, simple, and undirected graph~$G$ admits a removal of an edge set that is a matching in~$G$ in order to obtain a bipartite graph.
	Formally, for a set~$S \subseteq E(G)$, let~$G-S$ be the graph with vertex set~$V(G)$ and edge set~$E(G)\setminus S$.
	We say that a matching~$M \subseteq E(G)$ is a {\it bipartizing matching} of~$G$ if~$G-M$ is bipartite.
	Denoting by~$\mathcal{BM}$ the family of all graphs admitting a bipartizing matching, we deal with the computational complexity of the following decision problem.
	
	\probld	
	{Bipartizing Matching (BM)}
	{A finite, simple, and undirected graph~$G$.}
	{Does~$G \in \mathcal{BM}$?}
	
	A more restricted version was considered by Schaefer~\cite{Schaefer}, which asked whether a given graph~$G$ admits a $2$-coloring of the vertices such that each vertex has \emph{exactly} one neighbor with same color as itself.
	We can see that the removal of the set of edges whose endvertices have same color, which is a perfect matching of~$G$, generates a bipartite graph.
	He proved that this problem is~{\sf NP}-complete even for planar cubic graphs.
	
	With respect to the minimization version, where the set of removed edges is not required to be a matching, the edge-deletion decision problem in order to obtain a bipartite graph is analogous to the \textsc{Simple Max Cut} problem, which was proved to be~{\sf NP}-complete by Garey et al.~\cite{gjs}.
	Yannakakis~\cite{y2} proved its~{\sf NP}-completeness even for cubic graphs.
	
	
	\BM can also be seen as a \textit{defective coloring}~\cite{Cowen97}.
	A~$(k, d)$-coloring of a graph~$G$ is a $k$-coloring of~$V(G)$ such that each vertex has at most~$d$ neighbors with same color as itself.
	Defective colorings were introduced independently by Andrews and Jacobson~\cite{Andrews85}, Harary and Jones~\cite{Harary85}, and Cowen et al.~\cite{Cowen86}, which received wide attention in the literature~\cite{Angelini17,Eaton99,bky,Cowen97,Axenovich}.
	We can see that any proper coloring is a $(k,0)$-coloring, for some~$k \geq 1$.
	Moreover, a graph belongs to~$\mathcal{BM}$ if and only if it admits a~$(2,1)$-coloring.
	
	Lov\'asz~\cite{lovasz66} proved that if a graph~$G$ satisfies~$(d_1+1) + (d_2+1) + \dots + (d_k+1) \geq \Delta(G)+1$, then~$V(G)$ can be partitioned into~$V_1, \dotsc, V_k$, such that each \textit{induced subgraph}~$G[V_i]$
	has maximum degree at most~$d_i$, $1\leq i \leq k$, where~$\Delta(G)$ is the \textit{maximum degree of~$G$}.
	A proof of this result can be found in~\cite{Cowen97}.
	This result implies that all \textit{subcubic graphs}~$G$, graphs satisfying~$\Delta(G) \leq 3$, are~$(2,1)$-colorable.
	
	Cowen et al.~\cite{Cowen97} proved that it is~{\sf NP}-complete to determine whether a given graph is~$(2,1)$-colorable, even for graphs of maximum degree~4 and even for planar graphs of maximum degree~5.
	Angelini et al.~\cite{Angelini17} presented a linear-time algorithm which determines that partial~$2$-trees, a subclass of planar graphs, are~$(2,1)$-colorable.
	We emphasize that a~$k$-tree has treewidth at most~$k$, for any~$k\geq 1$.
	
	Eaton and Hull~\cite{Eaton99} proved that all triangle-free outerplanar graphs are also~$(2,1)$-colorable.
	Borodin et al.~\cite{bky} studied graphs with respect to a sparseness parameter, the \textit{average degree}
	$$mad(G) = \max \left\{\frac{2|E(H)|}{|V(H)|}\textnormal{, for all } H \subseteq G\right\}.$$
	They proved that every graph~$G$ with~$mad(G) \leq \frac{14}{5}$ is~$(2,1)$-colorable, where this bound is sharp.
	Moreover, they defined the parameter~$\rho(G) = \min_{S \subseteq V(G)}\rho_{G}(S)$, such that~$\rho_G(S) = 7|S|-5|E(G[S])|$.
	They showed that~$G$ is~$(2,1)$-colorable if~$\rho(G) \geq 0$.
	By the Euler's formula, a planar graph~$G$ with \textit{girth}~$g$, the size of a smallest cycle, has~$mad(G) < \frac{2g}{g-2}$.
	Hence if~$G$ has girth at least~$7$, then it is~$(2,1)$-colorable.
	
	The similar problem where the opposite condition on the degrees is considered for each part of the bipartition, that is, each part induces a subgraph of minimum degree at least a given positive integer, is studied by Bang-Jensen and Bessy~\cite{Bang19}.
	
	Another variations are given in~\cite{Lima17,Fabio18}, where the resulting graph by the removal of a matching is required to be a forest, eliminating even the even cycles of the input graph.
	As these works, our study in this paper consists in trying to find a bipartizing matching instead of trying improperly color the graph.
		
	\subsection{Our Results}
	\label{subsec:Results}
	
	\subsubsection{{\sf NP}-completeness}
	
	In this paper we extend the results of Cowen et al.~\cite{Cowen97}.
	As we said before, they proved the {\sf NP}-completeness of determining whether a given graph is~$(2,1)$-colorable, even for graphs of maximum degree~4 and for planar graphs of maximum degree~5.
	However, the complexity of~$(2,1)$-color for planar graphs of maximum degree~4 has been left open, which is our first contribution.
	\begin{theorem}\label{thm:np-planar4}
		\BM is {\sf NP}-complete for $3$-colorable planar graphs of maximum degree~4.
	\end{theorem}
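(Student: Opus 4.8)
Membership in {\sf NP} is immediate: a certificate is the matching $M$ itself, and one checks in polynomial time that $M$ is a matching and that $G-M$ is bipartite (e.g.\ by a BFS $2$-coloring of each component). For the hardness part, the plan is to reduce from a satisfiability problem whose instances already carry a planar structure --- for instance \textsc{Planar $3$-SAT} in Lichtenstein's form, or one of its variants such as \textsc{Planar NAE-$3$-SAT} or \textsc{Planar $1$-in-$3$-SAT}, the symmetric variants often yielding a cleaner clause gadget for coloring-type targets. Given such a formula $\varphi$ together with a planar embedding of its variable--clause incidence graph, I would build in polynomial time a graph $G_\varphi$ that is planar, has maximum degree $4$, is $3$-colorable, and satisfies $G_\varphi\in\mathcal{BM}$ if and only if $\varphi$ is satisfiable. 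Together with the {\sf NP} membership above, this establishes {\sf NP}-completeness of \BM on the class of $3$-colorable planar graphs of maximum degree~$4$.

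The gadgetry would exploit two basic facts about $(2,1)$-colorings, equivalently about bipartizing matchings. First, by the consequence of Lov\'asz's theorem quoted above, every subcubic graph lies in $\mathcal{BM}$, so \emph{all} of the ``logic'' must be carried by vertices of degree exactly $4$; this severely limits the slack available in each gadget and forces the construction to be tight. Second, in any $(2,1)$-coloring of $G$ and any cycle $C$ of $G$, the monochromatic (``deleted'') edges lying on $C$ form a matching within $C$, and when $C$ is odd this matching has \emph{odd} size, hence is nonempty; short odd cycles therefore behave like switches with a small number of discrete states, and by hanging tiny pieces on them one can cut the number of admissible states down to exactly two. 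I would use this to design: (i) a \emph{variable gadget} with precisely two $(2,1)$-colorings up to the global color swap, read off as \textsc{true}/\textsc{false}; (ii) \emph{wire} gadgets --- thin chains of such odd-cycle switches --- that faithfully copy this bit along a path and, at branch points, to several clauses, all without exceeding degree $4$; and (iii) a \emph{clause gadget} attached to the three incident wires that admits a $(2,1)$-coloring exactly when at least one incoming literal is \textsc{true} (for the symmetric variants, the ``not all equal'' / ``exactly one'' version). The wires and gadgets would be laid out following the given planar embedding of the incidence graph, so that planarity of $G_\varphi$ is inherited for free and no crossover gadget is needed; an analogous planar \BM-reduction from \textsc{Planar $3$-Colorability} of degree-$4$ graphs, where each vertex becomes a triangle whose three ``defect positions'' encode the three colors, is a plausible alternative to keep in mind.

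The remaining obligations are routine to state but are where the work concentrates. One verifies by inspection that every vertex of $G_\varphi$ has degree at most $4$; that $G_\varphi$ is $3$-colorable, for which it suffices that each gadget is $3$-colorable and that the construction never creates a $K_4$ (which alone would already violate $3$-colorability); and then the two directions of correctness: a satisfying assignment of $\varphi$ is translated, variable by variable and clause by clause, into a bipartizing matching of $G_\varphi$, while conversely any bipartizing matching of $G_\varphi$ induces mutually consistent states on all variable gadgets and wires --- consistency being exactly the point of the odd-cycle design --- and the clause gadgets then certify that every clause is satisfied. The main obstacle I expect is the simultaneous satisfaction of the four structural requirements (planarity, maximum degree $4$, $3$-colorability, and correct logic): because degree-$3$ vertices are ``free'' by Lov\'asz, the clause gadget in particular has essentially no room, so exhibiting one that is planar, has maximum degree $4$, is $3$-colorable, and genuinely forbids the all-false pattern while permitting everything else is the delicate step. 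A natural competing route --- starting from the non-planar degree-$4$ hardness construction of Cowen et al.\ and planarizing it by inserting a crossover gadget at each wire crossing --- merely relocates the whole difficulty into designing a single such crossover gadget meeting the same four constraints, which seems no easier.
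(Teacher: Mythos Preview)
Your high-level plan matches the paper's route --- a reduction from a planar $1$-in-$3$-SAT variant, with gadgets laid out along the planar incidence embedding --- but the proposal stops precisely where the proof begins: you give no concrete gadget, and as you yourself concede, that is the entire content. The paper's construction rests on two specific devices that your outline does not anticipate. The first is a ``head'' graph with a distinguished \emph{neck} vertex, shown to have a \emph{unique} bipartizing matching which moreover saturates the neck; hanging a head on a vertex $v$ therefore forbids every other edge at $v$ from any bipartizing matching, a rigidity tool that replaces your vaguely described wires. The second is the odd $k$-pool with one border vertex deleted: the paper proves that such a graph has exactly one bipartizing matching per admissible internal-cycle edge, and this is what turns clause and variable gadgets into clean finite switches. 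A further structural point you miss is that the paper does \emph{not} aim for degree $4$ directly: it first establishes hardness at maximum degree $5$ (only the variable gadgets contain degree-$5$ vertices) and then observes that a degree-$3$ vertex already present inside each head can absorb one connection, shaving the last unit. Also, the bounded-occurrence variant \textsc{Planar $1$-in-$3$-SAT$_3$} (each variable in at most three clauses, negatives at most once) is proved {\sf NP}-complete separately and is essential for keeping the variable gadget small.

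Two smaller corrections. Your claim that $3$-colorability of $G_\varphi$ follows once ``each gadget is $3$-colorable and the construction never creates a $K_4$'' is false: absence of $K_4$ is necessary but nowhere near sufficient (odd wheels, the Gr\"otzsch graph), so $3$-colorability must be verified directly on the assembled gadgets, as the paper does. And your default clause semantics ``at least one true'' is the harder one here; the exactly-one semantics of $1$-in-$3$-SAT is what makes a pool-based clause gadget behave as a clean three-state switch, and is what the paper uses.
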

	
	More precisely, Lov\'asz~\cite{lovasz66} proved the following theorem that appears many times in the literature and whose proof can be found in~\cite{Cowen97}.
	\begin{theorem}[\cite{lovasz66}]\label{thm:Lovasz}
		For any integer~$k>0$, any graph of maximum degree~$\Delta$ admits a~$\left(k,\lfloor\Delta/k \rfloor\right)$-coloring.
	\end{theorem}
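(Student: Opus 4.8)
The plan is to use the classical \emph{minimal improper coloring} argument, i.e.\ a local-switching / potential-function method. First I would fix an arbitrary $k$-coloring $c\colon V(G)\to\{1,\dots,k\}$ and associate to it the potential $m(c)$, defined as the number of \emph{monochromatic} edges of $G$, that is, edges $uv\in E(G)$ with $c(u)=c(v)$. Since $V(G)$ admits only finitely many $k$-colorings, I may choose $c$ so that $m(c)$ is as small as possible, and the goal reduces to showing that this minimizing coloring $c$ is already a $\bigl(k,\lfloor\Delta/k\rfloor\bigr)$-coloring.

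Next I would argue by contradiction. Suppose some vertex $v$ has strictly more than $\lfloor\Delta/k\rfloor$ neighbors colored $c(v)$. The key combinatorial observation is a pigeonhole bound: the $\deg(v)\le\Delta$ neighbors of $v$ are distributed among the $k$ colors, so the \emph{least-used} color among the neighbors of $v$ is used by at most $\lfloor\deg(v)/k\rfloor\le\lfloor\Delta/k\rfloor$ of them, since if every one of the $k$ colors were used by at least $\lfloor\deg(v)/k\rfloor+1$ neighbors then the total number of neighbors would exceed $\deg(v)$. Pick such a color $i$; note that $i\ne c(v)$, because $c(v)$ is used by more than $\lfloor\Delta/k\rfloor$ neighbors of $v$ by assumption.

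Then I would recolor $v$ with color $i$ to obtain a new $k$-coloring $c'$. Every monochromatic edge of $G$ that is not incident to $v$ is unaffected by this change, while the number of monochromatic edges incident to $v$ drops from at least $\lfloor\Delta/k\rfloor+1$ to at most $\lfloor\Delta/k\rfloor$. Hence $m(c')<m(c)$, contradicting the minimality of $c$. Therefore no such vertex $v$ exists, i.e.\ in $c$ every vertex has at most $\lfloor\Delta/k\rfloor$ neighbors of its own color, which is precisely the definition of a $\bigl(k,\lfloor\Delta/k\rfloor\bigr)$-coloring.

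As for the main obstacle: there is essentially none beyond placing the floor correctly in the pigeonhole inequality and verifying that the single recoloring step strictly decreases the potential; the argument is entirely elementary, and in fact constructive, since $m(c)$ is a nonnegative integer that strictly decreases whenever a violating vertex is recolored, which also yields a polynomial-time algorithm producing such a coloring. The one point to state carefully is the uniform treatment of the boundary cases (e.g.\ $k=1$, where $\lfloor\Delta/k\rfloor=\Delta$, and $k>\Delta$, where $\lfloor\Delta/k\rfloor=0$ and the pigeonhole step instead exhibits a color unused on $N(v)$), but the same proof covers all of them without modification.
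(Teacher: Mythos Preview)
Your argument is correct and is exactly the classical Lov\'asz switching argument: minimize the number of monochromatic edges over all $k$-colorings, then recolor any violating vertex to a least-used color to strictly decrease the potential. Note that the paper does not actually give its own proof of this theorem; it merely states it and refers to~\cite{Cowen97} for a proof, and the proof there is precisely the one you outline.
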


	While Theorem~\ref{thm:Lovasz} ensures that all graphs of maximum degree~$\Delta = 2d+1$ admit a~$(2,d)$-coloring, Cowen et al.~\cite{Cowen97} proved that deciding whether a given graph of maximum degree~$2(d+1)$ is~$(2, d)$-colorable is~{\sf NP}-complete.
	Therefore they proposed the following question for the general case.
	\begin{question}[\cite{Cowen97}]\label{quest:original}
		In general, what is the complexity of deciding whether a given graph of maximum degree~$k(d+1)$ admits a~$(k, d)$-coloring?
	\end{question}
	
	Angelini et al.~\cite{Angelini17} answered Question~\ref{quest:original} for~$(3,1)$-coloring, showing that it is {\sf NP}-complete for graphs of maximum degree~6.
	However, they were unable to answer Question~\ref{quest:original} restricted to planar graphs, that is, to give the complexity of deciding whether a planar graph of maximum degree~$6$ admits a~$(3,1)$-coloring, proving the {\sf NP}-completeness for planar graphs of maximum degree~7.
	
	Cowen et al.~\cite{Cowen97} also proved that, for any integer~$d>0$, deciding whether a planar graph of maximum degree~$4d+3$ is~$(2,d)$-colorable is {\sf NP}-complete.
	Theorem~\ref{thm:np-planar4} shows that this bound on the maximum degree is not tight.
	
	Dorbec et al.~\cite{Dorbec14} proved the {\sf NP}-completeness of~$(2,2)$-coloring for planar graphs of maximum degree~4 and with no~$C_4$, but restricted to the case of bipartising the graph by the removal of 3-vertex star forests.
	Such defective colorings, where the induced subgraphs by each part are required to be forests, are known as \textit{(star, $k$)-coloring}~\cite{Angelini17}, when the size of each star is not bounded, or~\textit{$d$-star $k$-partition}~\cite{Dorbec14}, in the case that each star has at most~$d$ vertices.
	We emphasize that such defective colorings are restricted cases of $(k,d)$-colorings.
	
	The family of \textit{cographs} contains the single vertex~$K_1$ and is closed with respect to disjoint union and complementation.
	The \textit{cograph $k$-partition} of a graph~$G$ is a $k$-coloring of~$V(G)$ into color classes that are cographs.
	Note that the cographs can have unbounded maximum degree in each color class and the~(star, $k$)-coloring and $d$-star $k$-partition are particular cases of cograph $k$-partition.
	Gimbel and Ne\v{s}et\v{r}il~\cite{Gimbel10} proved that deciding whether a planar graph of maximum degree at most~6 is cograph $2$-partitionable is {\sf NP}-complete.
	However, each component of the bipartition can have a cograph of maximum degree greater than~$2$.
	
	In the proof of the upper bound~$4d+3$ on the maximum degree, Cowen et al.~\cite{Cowen97} considered a $d$-star $2$-partition of the graph, which also indicates that this bound can be improved.
	Note that Theorem~\ref{thm:Lovasz} establishes that all graphs of maximum degree at most~5 are~$(2,2)$-colorable.
	In this paper we also prove the following theorem based on Theorem~\ref{thm:np-planar4}.
	\begin{theorem}\label{thm:2,2-coloring}
		Deciding whether a planar graph of maximum degree~$6$ admits a $(2,2)$-coloring, that is not necessarily a cograph 2-partition, is {\sf NP}-complete.
	\end{theorem}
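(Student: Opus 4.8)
The plan is to reduce from \BM restricted to planar graphs of maximum degree~$4$, which is {\sf NP}-complete by Theorem~\ref{thm:np-planar4}; membership in {\sf NP} is immediate. Recall that a $(2,1)$-colouring (equivalently, a bipartizing matching) of a graph is a $2$-colouring of its vertices whose monochromatic edges form a matching, whereas a $(2,2)$-colouring is a $2$-colouring whose monochromatic edges form a disjoint union of paths and cycles. Given a planar graph~$G$ with $\Delta(G)\le 4$, we build a planar graph~$G'$ with $\Delta(G')\le 6$ by attaching to every vertex $v\in V(G)$ a fixed gadget $\Gamma_v$, joined to $v$ by two new edges $va_v$ and $vb_v$, where $a_v,b_v$ are two distinguished vertices of $\Gamma_v$. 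Since $\deg_G(v)\le 4$ we get $\deg_{G'}(v)\le 6$; the gadget is chosen so that each of its own vertices has degree at most~$6$, and $\Gamma_v$ is planar with $a_v,b_v$ on a common face, so it can be placed in a face incident to $v$ without destroying planarity. The purpose of $\Gamma_v$ is to consume exactly one unit of the extra defect available in a $(2,2)$-colouring of $G'$: we require \textbf{(i)} in \emph{every} $(2,2)$-colouring of $G'$, at least one of $va_v,vb_v$ is monochromatic, and \textbf{(ii)} for each of the two possible colours of $v$ there is a $(2,2)$-colouring of $\Gamma_v$ compatible with that colour of $v$ in which \emph{exactly} one of $va_v,vb_v$ is monochromatic and its $v$-endpoint has no other monochromatic edge inside $\Gamma_v$.

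Granting such a gadget, correctness is routine. If $G$ has a $(2,1)$-colouring $c$, colour every $\Gamma_v$ by the extension promised in~(ii) for the colour $c(v)$; then each $v$ is incident to at most $1+1=2$ monochromatic edges, every gadget vertex respects the defect bound, and (since the gadgets are pairwise disjoint and meet the rest of $G'$ only at $v$) there is no conflict, so $G'$ is $(2,2)$-colourable. Conversely, given a $(2,2)$-colouring of $G'$, restrict it to $V(G)$; by~(i) at least one of $va_v,vb_v$ is monochromatic, hence $v$ is incident to at most one monochromatic edge of $G$, and the restriction is a $(2,1)$-colouring of $G$. Therefore $G'$ admits a $(2,2)$-colouring if and only if $G$ admits a bipartizing matching; as $G'$ is planar with $\Delta(G')\le 6$ and is built in polynomial time, Theorem~\ref{thm:2,2-coloring} follows.

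The main obstacle is the construction and the analysis of $\Gamma_v$. A $(2,2)$-colouring is very permissive on sparse planar graphs, since its monochromatic part may be any union of paths and cycles; in particular a single triangle or a single $K_4$-type cell attached at $a_v,b_v$ does not enforce~(i), because its ports can always ``escape'' to the colour opposite $v$. A natural way to guarantee~(i) is to make $\Gamma_v$ force $a_v$ and $b_v$ to receive \emph{distinct} colours in every $(2,2)$-colouring: then, whatever colour $v$ takes, precisely one of $a_v,b_v$ agrees with it. Building such a planar ``inequality device'' of maximum degree~$6$ — or an equivalent but possibly different local device that still yields (i) and (ii) — requires combining several small cells between $a_v$ and $b_v$ so that colouring $a_v$ and $b_v$ alike cannot be completed to a $(2,2)$-colouring of $\Gamma_v$, while still leaving enough room to realise either split of the two colours over $\{a_v,b_v\}$ with the slack demanded by~(ii). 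Carrying out this design, and the attendant case analysis, while keeping every degree at most~$6$ and the whole construction planar, is the heart of the proof; once this gadget is in place, the reduction above gives the theorem.
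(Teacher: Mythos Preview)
Your proposal is correct and follows essentially the same route as the paper: reduce from \BM on planar graphs of maximum degree~$4$, attach to each vertex~$v$ a planar gadget through two edges to distinguished vertices~$a_v,b_v$ that are forced to take \emph{distinct} colours in every $(2,2)$-colouring, and observe that this consumes exactly one unit of the extra defect at~$v$. The paper provides precisely the ``inequality device'' you leave open (its Lemma~\ref{lem:2,2-coloring}), built by repeatedly using the fact that two vertices sharing $2d+1=5$ common neighbours must receive the same colour, and then chaining these equalities so that the two ports end up on opposite sides.
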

	
	Theorem~\ref{thm:2,2-coloring} is less restrictive than the result of Gimbel and Ne\v{s}et\v{r}il~\cite{Gimbel10}, being the best possible with respect to the maximum degree by Theorem~\ref{thm:Lovasz}.
	In the same way, we use Theorem~\ref{thm:2,2-coloring} in order to prove a similar result for~$(2,3)$-color planar graphs.
	\begin{corollary}\label{cor:2,3-coloring}
		It is {\sf NP}-complete to decide whether a planar graph of maximum degree~$8$ is $(2,3)$-colorable.
	\end{corollary}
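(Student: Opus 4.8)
The plan is to reduce from the problem shown {\sf NP}-complete in Theorem~\ref{thm:2,2-coloring}, namely deciding whether a planar graph of maximum degree~$6$ admits a $(2,2)$-coloring. Given such an input graph~$G$, I would construct in polynomial time a planar graph~$G'$ of maximum degree~$8$ such that $G'$ is $(2,3)$-colorable if and only if $G$ is $(2,2)$-colorable. The guiding idea is to attach to every vertex of~$G$ a small planar ``defect-absorbing'' gadget that, in any $(2,3)$-coloring of~$G'$, is forced to create exactly one extra monochromatic edge at that vertex, thereby lowering its effective defect tolerance from~$3$ back to~$2$, while being harmless whenever $G$ is already $(2,2)$-colored. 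Note that this is consistent with the maximum degree~$8$ being best possible, since Theorem~\ref{thm:Lovasz} (with $k=2$, $\Delta=7$) guarantees that every graph of maximum degree at most~$7$ is $(2,3)$-colorable.

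Concretely, I would first fix a planar gadget~$D$ with two distinguished \emph{ports} $a$ and $b$ such that: (i)~$D$ admits no $(2,3)$-coloring in which $a$ and $b$ receive the same colour; and (ii)~for each of the two ways of colouring $a,b$ with distinct colours, $D$ has a $(2,3)$-coloring realizing it in which each port has at most two neighbours of its own colour inside~$D$. To keep degrees under control I would also require every vertex of~$D$ to have degree at most~$8$ in~$D$, with $a$ and $b$ of degree at most~$7$. Then $G'$ is obtained from~$G$ by adding, for every $v\in V(G)$, a fresh vertex-disjoint copy~$D_v$ of~$D$ (with ports $a_v,b_v$) together with the two edges $va_v$ and $vb_v$. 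Since $G$ is planar, each $D_v$ can be drawn inside a face incident with~$v$, so $G'$ is planar; since $\deg_G(v)\le 6$, every vertex of~$G'$ has degree at most~$8$.

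Next I would verify the equivalence. \emph{Forward:} let $c'$ be a $(2,3)$-coloring of~$G'$. For each~$v$, all neighbours of the internal vertices of~$D_v$ lie in~$D_v$, and $a_v,b_v$ have only the single extra neighbour~$v$, so the restriction of~$c'$ to~$D_v$ is a $(2,3)$-coloring of~$D_v$; by~(i) it gives $a_v,b_v$ distinct colours, so exactly one of them equals $c'(v)$. Hence $v$ has a monochromatic edge into~$D_v$, so at most two monochromatic edges inside~$G$, and $c'$ restricted to~$V(G)$ is a $(2,2)$-coloring of~$G$. \emph{Backward:} let $c$ be a $(2,2)$-coloring of~$G$. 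For each~$v$, colour~$D_v$ by the colouring of~(ii) chosen so that the port agreeing with~$c(v)$ is~$b_v$. Then $v$ acquires exactly one new monochromatic neighbour (namely~$b_v$) and stays within defect~$3$; $b_v$ acquires one monochromatic neighbour (namely~$v$) and, by~(ii), stays within defect~$3$; $a_v$ and all internal vertices of~$D_v$ are unaffected. Thus $G'$ is $(2,3)$-colorable. Since membership in {\sf NP} is obvious, this proves {\sf NP}-completeness.

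The main obstacle is the construction of the gadget~$D$ in the first step: it must simultaneously be planar, have bounded degree with ports of degree at most~$7$, and force its two ports to take different colours in \emph{every} $(2,3)$-coloring while still being $(2,3)$-colorable with the ports unequal and with internal port-defect at most~$2$. If planarity were dropped this would be routine: a clique-based device (for instance $K_6$, in which every vertex has exactly two monochromatic neighbours in any $(2,2)$-coloring, and its $(2,3)$-analogue) immediately yields the required forcing. The real work is to replace such dense forcing gadgets by planar, bounded-degree ones --- most naturally by densifying near the ports the defect-absorbing gadget already used in the proof of Theorem~\ref{thm:2,2-coloring}, so that it ``bites'' one defect level higher. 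I expect the only delicate point to be the verification that this planar gadget genuinely has no $(2,3)$-coloring with equally coloured ports; everything else (planarity, the degree bookkeeping, and the two directions above) should be straightforward.
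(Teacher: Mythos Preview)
Your proposal is correct and follows essentially the same approach as the paper: reduce from Theorem~\ref{thm:2,2-coloring} by attaching to every vertex of the input graph a planar gadget whose two ports are forced to distinct colours in any $(2,3)$-colouring, thereby consuming one unit of defect at each original vertex. The paper carries out exactly this plan, supplying the concrete gadget (its Figure~\ref{fig:2,3-coloring}) obtained by densifying the Theorem~\ref{thm:2,2-coloring} gadget---precisely the construction you anticipate as ``the real work.''
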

	
	Theorems~\ref{thm:np-planar4}, ~\ref{thm:Lovasz}, ~\ref{thm:2,2-coloring}, and Corollary~\ref{cor:2,3-coloring} show a dichotomy with respect to the computational complexity of $(2,d)$-color planar graphs of bounded maximum degree for~$d \leq 3$.
	More precisely, they suggest that the bound of~$4d+3$ on the maximum degree to~$(2,d)$-color planar graphs can be improved.
	Unfortunately, our construction on the proofs cannot be extended for~$d \geq 4$ in a tight way, but is still better than the upper bound~$4d+3$ for~$d \leq 7$.
	Therefore, in order to generalize Question~\ref{quest:original} for planar graphs, we propose an initial one restricted to~$(2,d)$-color planar graphs.
	\begin{question}\label{quest:2,d-coloring}
		What is the minimum integer~$\Delta^*$, $2d+2 \leq \Delta^* \leq 4d+3$, such that deciding whether a given planar graph of maximum degree~$\Delta^*$ admits a~$(2, d)$-coloring is {\sf NP}-complete?
	\end{question}
	
	\subsubsection{Positive Results}
	
	On the positive side, we present polynomial-time algorithms for  on several graph classes.
	\begin{theorem}\label{thm:polynomial}
		{\sc BM} can be solved in polynomial time for the following graph classes:
		\begin{itemize}
			\item[\textnormal{(a)}] graphs having bounded dominating set;
			\item[\textnormal{(b)}] $P_5$-free graphs.
			\item[\textnormal{(c)}] graphs in which every odd-cycle subgraph is a triangle;
		\end{itemize}				
	\end{theorem}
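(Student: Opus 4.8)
The plan is to treat the three classes separately, deriving (b) from (a); throughout we use that $G\in\mathcal{BM}$ if and only if $G$ admits a $(2,1)$-coloring. For \textbf{part (a)}, let $D$ be a dominating set with $|D|=k=O(1)$. In any $(2,1)$-coloring each $v\in D$ has at most one neighbour sharing its colour. I would branch over the $2^{k}$ colourings of $D$ and, for every $v\in D$, over the at most $n+1$ choices of an ``exceptional neighbour'' of $v$ (its unique same-coloured neighbour, or none). In each of these $2^{k}(n+1)^{k}$ branches the colour of every vertex outside $D$ is forced, since such a vertex $u$ has a neighbour $v\in D$ and must receive the colour opposite to $c(v)$ unless $u$ was selected as the exceptional neighbour of $v$; a branch is rejected as soon as two dominators force different colours on a common vertex or the chosen exceptional neighbours clash. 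Otherwise the whole $2$-colouring is determined, and one checks in linear time whether it is a valid $(2,1)$-coloring (which also validates the guesses). Since a valid $(2,1)$-coloring, if any, arises from some branch, this is correct, and for constant $k$ it is polynomial; the only delicate point is the bookkeeping when a vertex has several dominators.

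For \textbf{part (b)}, if $G$ is disconnected we solve each component and take the conjunction, so assume $G$ connected and $P_5$-free. If $G$ contains $K_5$ -- tested by brute force in $O(n^{5})$ time -- then $G\notin\mathcal{BM}$, because in any $2$-colouring of five mutually adjacent vertices some colour occurs three times, giving a monochromatic triangle and hence a vertex with two same-coloured neighbours. Otherwise $\omega(G)\le 4$, and by the theorem of Bacs\'o and Tuza every connected $P_5$-free graph has a dominating clique or a dominating $P_3$; in either case $G$ has a dominating set of size at most $4$, which we find by testing all vertex subsets of size at most $4$, and then we invoke part (a).

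For \textbf{part (c)}, let $\mathcal{C}$ be the class of graphs in which every odd-cycle subgraph is a triangle, a property inherited by subgraphs. I would first record the reduction: if $G\in\mathcal{C}$ and $M$ is a matching then $G-M\in\mathcal{C}$, so $G-M$ is bipartite iff it is triangle-free, i.e.\ iff $M$ meets every triangle of $G$; hence $G\in\mathcal{BM}$ exactly when $G$ has a matching meeting all of its triangles. Next I would prove the structural lemma that $G\in\mathcal{C}$ iff every block of $G$ is bipartite, a triangle, a book $B_k$ (two adjacent vertices together with $k\ge 2$ further vertices, each adjacent to exactly those two), or $K_4$. For the non-trivial direction I would take a $2$-connected non-bipartite block, fix a triangle in it, and build the block by an open ear decomposition from that triangle: a short case analysis shows the first ear not already present must consist of a single internal vertex joined to two vertices of the triangle -- anything longer, or attached elsewhere, creates a $5$-cycle -- which yields $B_2$; from $B_k$ the only ear addition avoiding a $5$-cycle creates one more page, from $B_2$ one may instead add the missing chord to obtain $K_4$, and $K_4$ admits no further ear.

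The algorithm for (c) then runs on the block-cut tree. Bipartite blocks may be discarded, as their edges never need removal. A triangle or a book with at most two pages has three ``local solutions'' (for a triangle, the choice of the vertex not incident to the removed edge); a book with $k\ge 3$ pages forces removal of the spine edge; $K_4$ has three local solutions, each incident to all four vertices. Since the pages of a book have degree $2$ with both neighbours inside the block, every non-bipartite block has at most four cut vertices, so for each block and each subset of its cut vertices we can tabulate whether some local solution is incident to exactly that subset of cut vertices. A bottom-up dynamic program over the block-cut tree then decides whether local solutions can be selected so that every cut vertex is incident to a removed edge in at most one of its blocks, which by the two lemmas is exactly the condition $G\in\mathcal{BM}$. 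I expect the structural lemma -- the ear-decomposition case analysis pinning down triangles, books, and $K_4$ as the only $2$-connected non-bipartite members of $\mathcal{C}$ -- to be the main obstacle; once it is established, the dynamic program and parts (a), (b) are routine, the only remaining subtlety being the consistent treatment of cut vertices shared by several blocks.
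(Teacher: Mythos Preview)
For parts (a) and (b) your argument is essentially the paper's: branching over the $2$-colourings of a bounded dominating set and, for each dominating vertex, over its unique same-coloured neighbour is exactly the paper's enumeration of bipartitions $(A_D,B_D)$ of $D$ together with a choice of matching from $D$ into the rest; your derivation of (b) via Bacs\'o--Tuza and the $K_5$-freeness of $\mathcal{BM}$ matches the paper's one-line proof verbatim.

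For part (c) your route is genuinely different from the paper's and, in several respects, cleaner. The paper introduces the auxiliary problem {\sc ABM} (bipartizing matchings avoiding a forbidden set $F$), analyses a $2$-connected non-bipartite block by an ad-hoc longest-path argument starting from a triangle, and then recurses on the block-cut tree by enlarging $F$ at each leaf. You instead first reduce the question to ``does $G$ have a matching meeting every triangle'' --- valid because the class is hereditary and a triangle-free member is already bipartite --- and then isolate the sharp structural lemma that the only $2$-connected non-bipartite members are $K_3$, $K_4$ and the books $B_k$. The paper's path argument effectively lands in the same place (its subcase ``$w$ adjacent to $\{u,v_3\}$'' is actually vacuous, since $w\,u\,v_1\,v_2\,v_3\,w$ is a $5$-cycle), but never states the lemma; your ear-decomposition sketch is a more transparent way to get it. What your formulation buys is that no forbidden-edge bookkeeping is needed: the per-block analysis becomes a tiny finite list.

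There is, however, one genuine slip in your dynamic program. The sentence ``since the pages of a book have degree $2$ with both neighbours inside the block, every non-bipartite block has at most four cut vertices'' is false: having degree $2$ inside the block does not prevent a page from being a cut vertex of $G$ (just hang another block off it), so $B_k$ may have $k+2$ cut vertices and your ``tabulate over all subsets of cut vertices'' step is exponential as written. The repair is immediate and in the spirit of your own lemma: enumerate \emph{local solutions}, not subsets of cut vertices. For $B_k$ with $k\ge 3$ the unique triangle-hitting matching is the spine $\{ab\}$, so the incidence pattern is forced ($a,b$ matched, every page free); for $K_3$, $B_2$ and $K_4$ there are at most three local solutions. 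Storing, for each block, this constant-size list of incidence patterns on its cut vertices, and then running the bottom-up DP checking that every cut vertex is matched in at most one incident block, gives the polynomial algorithm you intended.
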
	
	We also study parameterized complexity aspects.
	Using Courcelle's meta-theorems~\cite{C90,C93,C97} we prove that \BM is fixed-parameter tractable when parameterized by the clique-width, which improves the previous result of Angelini et al.~\cite{Angelini17}.
	\begin{theorem}\label{thm:cwd}
		{\sc BM} is {\sf FPT} when parameterized by the clique-width.
	\end{theorem}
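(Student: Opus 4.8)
The plan is to invoke the algorithmic meta-theorem for clique-width (Courcelle, Makowsky and Rotics, building on Courcelle's work~\cite{C90,C93,C97}): every graph property expressible in monadic second-order logic \emph{without} edge-set quantification, denoted $\mathrm{MSO}_1$, can be decided in time $f(k)\cdot n$ on an $n$-vertex graph given together with a clique-width expression of width $k$. Since $\mathrm{MSO}_1$ only permits quantification over vertices, vertex sets, and the adjacency relation, we cannot quantify directly over the bipartizing matching, which is an edge set. This is precisely where the equivalence recorded in the introduction is essential: $G\in\mathcal{BM}$ if and only if $G$ admits a $(2,1)$-coloring, and a $(2,1)$-coloring is a statement purely about a partition of $V(G)$.

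Concretely, I would express membership in $\mathcal{BM}$ by the sentence
\[
\varphi\;\equiv\;\exists V_1\,\exists V_2\;\bigl(\mathrm{part}(V_1,V_2)\wedge\mathrm{lowdeg}(V_1)\wedge\mathrm{lowdeg}(V_2)\bigr),
\]
where $\mathrm{part}(V_1,V_2)$ abbreviates $\forall x\,\bigl((x\in V_1\vee x\in V_2)\wedge\neg(x\in V_1\wedge x\in V_2)\bigr)$, expressing that $(V_1,V_2)$ is a partition of $V(G)$, and
\[
\mathrm{lowdeg}(W)\;\equiv\;\neg\exists u\,\exists v\,\exists w\;\bigl(u\in W\wedge v\in W\wedge w\in W\wedge v\neq w\wedge\mathrm{adj}(u,v)\wedge\mathrm{adj}(u,w)\bigr),
\]
which says that no vertex of $G[W]$ has two distinct neighbours inside $W$, i.e.\ $\Delta(G[W])\le 1$. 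Thus $\varphi$ holds in $G$ exactly when $V(G)$ can be split into two classes each inducing a subgraph of maximum degree at most $1$, which by the remark above is equivalent to $G\in\mathcal{BM}$. Note that $\varphi$ is a \emph{fixed} sentence, so its length contributes only a constant to $f$.

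It remains to supply a clique-width expression. If one of width $k$ is given as part of the input, applying the meta-theorem to $\varphi$ already yields an $f(k)\cdot n$ algorithm. Otherwise we first run the approximation algorithm of Oum and Seymour, which in time $g(k)\cdot n^{O(1)}$ either certifies $\mathrm{cwd}(G)>k$ or returns an expression of width at most $2^{3k+2}-1$; feeding this expression into the meta-theorem gives an algorithm running in time $h\bigl(\mathrm{cwd}(G)\bigr)\cdot n^{O(1)}$, which proves that \BM is {\sf FPT} parameterized by clique-width. Since graphs of bounded treewidth have bounded clique-width, this in particular subsumes the linear-time algorithm of Angelini et al.~\cite{Angelini17} for partial $2$-trees.

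The only real obstacle is the one just addressed: keeping the formulation inside $\mathrm{MSO}_1$ rather than $\mathrm{MSO}_2$. Once the solution is described by its two color classes instead of by the deleted matching, both the partition requirement and the ``maximum degree at most one in each class'' requirement become first-order statements over the vertex set, so the encoding is routine and the rest is a black-box combination of the two cited results; no new combinatorial argument is needed.
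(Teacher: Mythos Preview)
Your proof is correct and follows essentially the same approach as the paper: both reformulate membership in $\mathcal{BM}$ as the existence of a $(2,1)$-coloring, encode this as an $\mathrm{MSO}_1$ sentence asserting a vertex bipartition into two classes each of maximum induced degree at most one, and then invoke the Courcelle--Makowsky--Rotics meta-theorem. Your additional remark about obtaining a clique-width expression via the Oum--Seymour approximation is a detail the paper leaves implicit.
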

	By Theorem~\ref{thm:cwd} we can solve {\sc BM} for several interesting graph classes in polynomial time, as for example \textit{distance-hereditary}, \textit{series-parallel}, \textit{control-flow}, and some subclasses of planar graphs such as \textit{outerplanar}, \textit{Halin}, and \textit{Apollonian networks}~\cite{Br05,G00,B98,T98}.
	The same follows for~($P_6$, claw)-free and (claw, co-claw)-free graphs~\cite{bkm06,bell06}.
	Moreover, since clique-width generalizes several graph parameters~\cite{L12}, it follows that {\sc BM} is in FPT when parameterized by the following parameters: neighborhood diversity; treewidth; pathwidth; feedback vertex set; and vertex cover.
	In addition, it also follows that
	\begin{corollary} \label{cor:chordal}
		{\sc BM} is polynomial-time solvable for chordal graphs.
	\end{corollary}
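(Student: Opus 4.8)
The plan is to reduce to the bounded-treewidth case by first observing that a chordal graph with a large clique can never lie in~$\mathcal{BM}$. Concretely, I would prove the following elementary fact: if $G$ contains a clique~$K$ with $|K|\ge 5$, then $G\notin\mathcal{BM}$. Suppose $G$ admitted a $(2,1)$-coloring (equivalently, a bipartizing matching). By the pigeonhole principle at least $\lceil |K|/2\rceil\ge 3$ vertices of~$K$ receive the same color; since $K$ is a clique, the subgraph induced on that color class contains a triangle, hence has a vertex of degree at least~$2$, contradicting the definition of a $(2,1)$-coloring. (In matching terms: a matching of~$G$ meets $E(K)$ in a matching of~$K$, so it deletes at most two edges of~$K_5$; by Tur\'an's bound a triangle-free graph on five vertices has at most six edges, while $K_5$ minus two edges has eight, so a triangle survives.) Thus every chordal $G\in\mathcal{BM}$ has clique number $\omega(G)\le 4$.

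The algorithm then proceeds as follows. Given a chordal graph~$G$, compute a perfect elimination ordering and read off $\omega(G)$ in linear time. If $\omega(G)\ge 5$, output ``no'' by the observation above. Otherwise $\omega(G)\le 4$, and since the treewidth of a chordal graph equals $\omega(G)-1$, we have $\mathrm{tw}(G)\le 3$. Graphs of bounded treewidth have bounded clique-width (clique-width at most $3\cdot 2^{\,\mathrm{tw}(G)-1}$), so Theorem~\ref{thm:cwd} applies and decides whether $G\in\mathcal{BM}$ in polynomial time; equivalently, one may run a direct Courcelle-style dynamic program on a width-$3$ tree decomposition. This settles the corollary.

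I do not expect a genuine obstacle here: the whole argument rests on the clique-size threshold, which is the one nontrivial step, together with the standard facts that $\omega$ is computable in linear time on chordal graphs, that $\mathrm{tw}=\omega-1$ for chordal graphs, and that bounded treewidth implies bounded clique-width. The only care needed is to state the clique bound sharply ($\ge 5$ already fails, while $K_4$ is $(2,1)$-colorable, e.g.\ by pairing its vertices), and to invoke Theorem~\ref{thm:cwd} only after the clique check has guaranteed bounded width.
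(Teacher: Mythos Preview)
Your proposal is correct and follows essentially the same approach as the paper: the paper observes (in Section~\ref{sec:preliminaries}) that every graph in~$\mathcal{BM}$ is $K_5$-free, and then notes in one line after Corollary~\ref{cor:parameterized} that chordal graphs in~$\mathcal{BM}$ therefore have bounded treewidth, so Theorem~\ref{thm:cwd} applies. Your write-up is simply a more explicit version of the same argument, spelling out the clique bound, the identity $\mathrm{tw}=\omega-1$ for chordal graphs, and the algorithmic check for $\omega(G)\ge 5$.
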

	
	We also show an exact~$2^{O(vc(G))}\cdot n$ algorithm, where~$vc(G)$ is the \textit{vertex cover number} of~$G$.
	Finally, for a generalization of \BM, we show a kernel with at most~$2 \cdot nd(G)$ vertices when a more general problem is parameterized by \textit{neighborhood diversity number}, $nd(G)$.			
	
	\subsubsection{Organization of the Paper}
	
	We summarize our results as follows.
	In Section~\ref{sec:preliminaries} we present some basic definitions and notation used throughout the paper and some initial properties of graphs in~$\mathcal{BM}$.
	In Section~\ref{sec:NPc} we present the proof of Theorem~\ref{thm:np-planar4}.
	In Section~\ref{sec:PolyResults} we present the proofs of the statements in Theorems~\ref{thm:polynomial}.
	In Section~\ref{sec:FixedTract} we consider the parameterized complexity aspects and present the proof of Theorem~\ref{thm:cwd}.
	In Section~\ref{sec:2,d-coloring} we consider the complexity of~$(2,d)$-color planar graphs of bounded degree, for~$d\geq 2$, and present the proofs of Theorem~\ref{thm:2,2-coloring} and Corollary~\ref{cor:2,3-coloring}.
	Finally in Section~\ref{sec:conclusion} we present the conclusions.
	
	\section{Preliminaries}
	\label{sec:preliminaries}
	
	\subsection{Basic Definitions and First Remarks}
	\label{subsec:defProp}
	
	Now we present some basic definitions, notations, and some initial remarks.
	We use standard notation and definitions of graph theory, where we consider only simple and undirected graphs.
	For more details on graph terminology and notation, see~\cite{Diestel10}.
	
	Given a graph~$G=(V,E)$, we denote by~$n(G)$ and~$m(G)$ the number of vertices and edges of~$G$, respectively.
	For a vertex~$v \in V(G)$, let~$N_G(v)$ be the \textit{neighborhood} of~$v$ in~$G$ and~$N_G[v] = \{v\} \cup N_G(v)$ its \textit{closed neighborhood} in~$G$.
	The \textit{degree} of~$v \in V(G)$,~$|N_G(v)|$, is denoted as~$d_G(v)$.
	The subscripts can change for a subgraph~$H$ of~$G$ when necessary.
	
	Given a set~$S \in V(G)$, let~$H = G[S]$ be the \textit{induced subgraph} of~$G$ by~$S$, such that~$V(H)=S$ and~$uv \in E(H)$ if and only if~$uv \in E(G)$ and~$u, v \in S$.
	We also say that~$S$ \textit{induces}~$H$ and that~$H$ is the graph induced by~$S$.
	
	Let~$P_n = v_1v_2 \dots v_n$ and~$C_n = v_1v_2 \dots v_nv_1$ be the induced \textit{path} and induced \textit{cycle} of order~$n$, respectively.
	Furthermore, we denote by~$K_n$ and~$K_{n, m}$ the \textit{complete graph} of order~$n$ and the \textit{complete bipartite graph} with parts of order~$n$ and~$m$, respectively.
	A \textit{diamond} is the graph obtained by removing one edge from the~$K_4$.
	
	A \textit{universal vertex}~$v$ of a graph~$G$ is one adjacent to all vertices in~$V(G)\setminus \{v\}$.
	For an integer~$k \geq 3$, let~$W_k$ denote the \textit{wheel graph} of order~$k+1$, that is, the graph obtained by connecting a universal vertex, called \textit{central}, to all the vertices of an induced cycle~$C_k$.
	The~$W_4$ and~$W_5$ are depicted in Figures~\ref{fig:W_4} and~\ref{fig:W_5}, respectively.
	
	We say that a graph~$G$ is a~\textit{$k$-pool} if it is composed by~$k$-edge disjoint triangles whose union of all the vertices	of their bases induces a~$C_k$.
	Formally, a~$k$-pool is obtained from an induced cycle~$C_{2k}=v_1 v_2 \dots v_{2k}v_1$, $k \geq 3$, such that the vertices	with index odd induce the \textit{internal cycle}~$p_1p_2 \dots p_{k}p_1$ of the~$k$-pool, where~$p_i = v_{2i-1}$, $1\leq i \leq k$.
	The vertex with index even~$b_i = v_{2i}$, $1\leq i \leq k$, is called the \textit{$i$-th-border} of the~$k$-pool.
	For~$G$ a~$k$-pool, if~$k$ is odd, then we say that~$G$ is an \textit{odd $k$-pool}, otherwise we say that it is an \textit{even $k$-pool}. 
	The~$3$-pool and~$5$-pool are depicted in Figures~\ref{fig:3-pool} and~\ref{fig:5-pool}, respectively.
	
	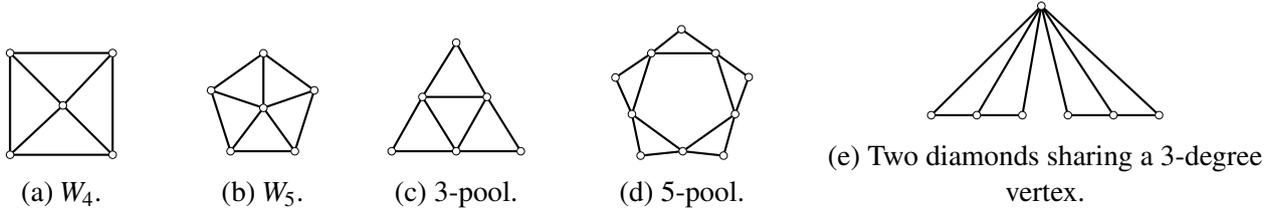
\begin{figure}[t]
		\centering
		\begin{subfigure}[b]{0.1\textwidth}
			\centering
			\begin{tikzpicture}[inner sep=1pt, outer sep=0pt]
			\node[main node] (c1) [label=93:$ $] {};
			\node[main node] (c2) [label=87:$ $, right = 1.25cm of c1] {};
			\node[main node] (c3) [label=-3:$ $, below = 1.25cm of c2] {};
			\node[main node] (c4) [label=183:$ $, left = 1.25cm and 1.25cm of c3] {};
			\node[main node] (u) [label=right:$ $, below right = 0.625cm and 0.625cm of c1] {};
			
			\path[draw,thick]
			(u) edge node {} (c1)
			(u) edge node {} (c2)
			(u) edge node {} (c3)
			(u) edge node {} (c4)
			(c1) edge node {} (c2)
			(c2) edge node {} (c3)
			(c3) edge node {} (c4)
			(c4) edge node {} (c1);
			\end{tikzpicture}
			\caption{$W_4$.}
			\label{fig:W_4}
		\end{subfigure} \qquad
		\begin{subfigure}[b]{0.1\textwidth}
			\centering
			\begin{turn}{180}
				\begin{tikzpicture}[scale=0.8, y=0.35pt, x=0.35pt, yscale=-1.000000, xscale=1.000000, inner sep=1pt, outer sep=0pt]
				\node[draw,fill=white,circle,very thin,auto=left,inner sep=1pt] (c1) at (316.8922,283.0575) [label=93:$ $] {};
				\node[draw,fill=white,circle,very thin,auto=left,inner sep=1pt] (c2) at (402.8922,283.0575) [label=87:$ $] {};
				\node[draw,fill=white,circle,very thin,auto=left,inner sep=1pt] (c3) at (428.8922,365.0575) [label=0:$ $] {};
				\node[draw,fill=white,circle,very thin,auto=left,inner sep=1pt] (c4) at (358.8922,415.0575) [label=270:$ $] {};
				\node[draw,fill=white,circle,very thin,auto=left,inner sep=1pt] (c5) at (290.8922,365.0575) [label=180:$ $] {};
				
				\node[main node] (u) [label=3:$ $, above = 0.625cm of c4] {};
				
				\path[draw,thick]
				(u) edge node {} (c1)
				(u) edge node {} (c2)
				(u) edge node {} (c3)
				(u) edge node {} (c4)
				(u) edge node {} (c5)
				(c1) edge node {} (c2)
				(c2) edge node {} (c3)
				(c3) edge node {} (c4)
				(c4) edge node {} (c5)
				(c5) edge node {} (c1);
				\end{tikzpicture}
			\end{turn}
			\caption{$W_5$.}
			\label{fig:W_5}
		\end{subfigure}\quad
		\begin{subfigure}[b]{0.15\textwidth}
			\centering
			\begin{tikzpicture}[rotate=180, inner sep=1pt, outer sep=0pt]
			\node[main node] (c1) [label=93:$ $] {};
			\node[main node] (c2) [label=87:$ $, right = 0.75cm of c1] {};
			\node[main node] (c3) [label=below:$ $, below right = 0.649519053cm and 0.375cm of c1] {};
			\node[main node] (b1) [label=above:$ $, above right = 0.649519053cm and 0.375cm of c1] {};
			\node[main node] (b2) [label=-3:$ $, right = 0.75cm of c3] {};
			\node[main node] (b3) [label=183:$ $, left = 0.75cm of c3] {};
			
			\path[draw,thick]
			(c1) edge node {} (c2)
			(c2) edge node {} (c3)
			(c3) edge node {} (c1)
			(c1) edge node {} (b1)
			(c2) edge node {} (b1)
			(c2) edge node {} (b2)
			(c3) edge node {} (b2)
			(c3) edge node {} (b3)
			(c1) edge node {} (b3);
			\end{tikzpicture}
			\caption{$3$-pool.}
			\label{fig:3-pool}
		\end{subfigure}\quad 
		\begin{subfigure}[b]{0.15\textwidth}
			\centering
			\begin{tikzpicture}[scale=0.8, y=0.35pt, x=0.35pt, yscale=-1.000000, xscale=1.000000, inner sep=1pt, outer sep=0pt]
			\node[draw,fill=white,circle,very thin,auto=left,inner sep=1pt] (c1) at (316.8922,283.0575) [label=93:$ $] {};
			\node[draw,fill=white,circle,very thin,auto=left,inner sep=1pt] (c2) at (402.8922,283.0575) [label=87:$ $] {};
			\node[draw,fill=white,circle,very thin,auto=left,inner sep=1pt] (c3) at (428.8922,365.0575) [label=0:$ $] {};
			\node[draw,fill=white,circle,very thin,auto=left,inner sep=1pt] (c4) at (358.8922,415.0575) [label=270:$ $] {};
			\node[draw,fill=white,circle,very thin,auto=left,inner sep=1pt] (c5) at (290.8922,365.0575) [label=180:$ $] {};
			\node[draw,fill=white,circle,very thin,auto=left,inner sep=1pt] (b1) at (357.1657,250.8029) [label=90:$ $] {};
			\node[draw,fill=white,circle,very thin,auto=left,inner sep=1pt] (b2) at (447.0006,315.2609) [label=87:$ $] {};
			\node[draw,fill=white,circle,very thin,auto=left,inner sep=1pt] (b3) at (413.4775,421.0017) [label=0:$ $] {};
			\node[draw,fill=white,circle,very thin,auto=left,inner sep=1pt] (b4) at (302.5624,421.0027) [label=180:$ $] {};
			\node[draw,fill=white,circle,very thin,auto=left,inner sep=1pt] (b5) at (268.8974,315.7932) [label=93:$ $] {};
			
			\foreach \from/\to in {c1/c2, c2/c3, c3/c4, c4/c5, c5/c1, c1/b1, b1/c2, b2/c2, b2/c3, b3/c3, b3/c4, b4/c4, b4/c5, b5/c5, b5/c1}
			\draw[thick] (\from) -- (\to);
			\end{tikzpicture}
			\caption{5-pool.}
			\label{fig:5-pool}
		\end{subfigure}	\quad
		\begin{subfigure}[b]{0.35\textwidth}
			\centering
			\begin{tikzpicture}[rotate=180, inner sep=1pt, outer sep=0pt]
			\node[main node] (u) [label=above:$ $] {};
			\node[main node] (c1) [label=below:$ $, below left = 1.95cm of u] {};
			\node[main node] (c2) [label=below:$ $, right = 0.5cm of c1] {};
			\node[main node] (c3) [label=below:$ $, right = 0.5cm of c2] {};
			\node[main node] (c4) [label=below:$ $, right = 0.5cm of c3] {};
			\node[main node] (c5) [label=below:$ $, right = 0.5cm of c4] {};
			\node[main node] (c6) [label=below:$ $, right = 0.5cm of c5] {};
			
			\path[draw,thick]
			(u) edge node {} (c1)
			(u) edge node {} (c2)
			(u) edge node {} (c3)
			(c1) edge node {} (c2)
			(c2) edge node {} (c3)
			(u) edge node {} (c4)
			(u) edge node {} (c5)
			(u) edge node {} (c6)
			(c4) edge node {} (c5)
			(c5) edge node {} (c6);
			\end{tikzpicture}
			\caption{Two diamonds sharing a $3$-degree vertex.}
			\label{fig:bidiamond}
		\end{subfigure}	
		\caption{Some examples of forbidden subgraphs.}
		\label{fig:forb_graphs}
	\end{figure}
	
	Since~$\mathcal{BM}$ is closed under taking subgraphs, we can search for families of minimal forbidden subgraphs, that is, graphs that cannot be in~$\mathcal{BM}$.
	Figure~\ref{fig:forb_graphs} depicts some examples of them,	while Lemma~\ref{lemma1} collects some properties of graphs in~$\mathcal{BM}$.\\
	\begin{lemma}\label{lemma1}
		For a graph~$G \in \mathcal{BM}$ and a bipartizing matching~$M$ of~$G$, the following statements hold:
		\begin{itemize}
			\item[\textnormal{(i)}] For every diamond~$D$ of~$G$, $M$ matches both vertices of degree~3 of~$D$.
			\item[\textnormal{(ii)}] For every~$v \in V(G)$, $G[N_G(v)]$ does not contain two disjoint~$P_3$.
			\item[\textnormal{(iii)}] $G$ does not contain a~$W_k$ as a subgraph, for all~$k\geq 4$.
			\item[\textnormal{(iv)}] $G$ does not contain an odd~$k$-pool as subgraph, for all odd~$k \geq 3$.
		\end{itemize}
	\end{lemma}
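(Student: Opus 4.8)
The plan is to prove each of the four statements by exploiting the same basic fact: in $G-M$ there are no odd cycles, so any odd cycle $C$ of $G$ must contain at least one edge of $M$, and since $M$ is a matching, the edges of $M$ lying on $C$ are pairwise non-adjacent. I would first dispose of (i) and (ii), which are essentially local, and then use them together with a counting/parity argument for (iii) and (iv).

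For (i), let $D$ be a diamond on vertices $\{a,b,x,y\}$ where $x,y$ are the two degree-$3$ vertices (so $xy\in E(G)$) and $a,b$ are the degree-$2$ vertices, adjacent to both $x$ and $y$. The triangles $xya$ and $xyb$ are odd cycles, so $M$ must hit each of them; since $M$ is a matching and cannot use two incident edges, the only way to hit both triangles while also handling the $4$-cycle $xayb$ is forced. Concretely, if $M$ does not contain $xy$, then to hit triangle $xya$ it must contain one of $xa,ya$, and to hit $xyb$ it must contain one of $xb,yb$; a short case check shows every such choice leaves an odd cycle in $G-M$ or violates the matching condition unless $xy\in M$, and once $xy\in M$ both $x$ and $y$ are matched, as claimed. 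For (ii), suppose $G[N_G(v)]$ contains two vertex-disjoint paths $P_3 = x_1x_2x_3$ and $P_3 = y_1y_2y_3$. Each of $vx_1x_2$, $vx_2x_3$, $vy_1y_2$, $vy_2y_3$ is a triangle; since $M$ is a matching, at most one edge of $M$ is incident to $v$, so $M$ can destroy at most one triangle of the form $vx_ix_{i+1}$ by using an edge at $v$, and similarly on the $y$-side — but then in one of the two $P_3$'s both triangles through $v$ survive via edges at $v$, forcing $M$ to contain both $x_1x_2$ and $x_2x_3$ (or both $y_1y_2,y_2y_3$), contradicting that $M$ is a matching. This is the part I'd write most carefully, tracking exactly which edge at $v$ is allowed.

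For (iii) and (iv) I would argue by parity on the rim cycle. In a wheel $W_k$ with hub $u$ and rim $C_k = c_1\cdots c_k c_1$, every spoke-rim triangle $uc_ic_{i+1}$ must be hit by $M$; at most one edge of $M$ touches $u$, so all but at most one of these triangles must be hit by a rim edge $c_ic_{i+1}$. Hence $M$ restricted to the rim is a matching that covers at least $k-1$ of the $k$ rim edges, which is impossible for $k\ge 4$ since a matching in $C_k$ has at most $\lfloor k/2\rfloor < k-1$ edges. For (iv), in an odd $k$-pool the $C_k$ formed by the internal vertices $p_1\cdots p_k$ is an odd cycle (here I would invoke Lemma~\ref{lemma1}(i) applied to suitable diamonds, or a direct argument); each border triangle $p_ip_{i+1}b_i$ forces $M$ to hit it, and again the matching structure forces almost all hits onto the internal cycle, while the odd internal cycle $C_k$ itself must also be hit by $M$ — combining these constraints and counting against the maximum matching size $\lfloor k/2\rfloor$ in $C_k$ yields a contradiction when $k$ is odd.

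The main obstacle I expect is part (iv): the $k$-pool has more edges available to $M$ than a wheel (the border edges and the two "sides" of each triangle), so the naive triangle-hitting count is not immediately tight, and one must be careful to show that putting $M$-edges on the sides $p_ib_i$ or $p_{i+1}b_i$ cannot simultaneously kill the border triangles and the odd internal cycle $C_k$. I would handle this by a discharging-style or direct parity argument: orient attention to the internal cycle $p_1\cdots p_k$, show that the edges of $M$ with both ends internal form a matching of $C_k$, that each border triangle not hit by such an internal edge forces an $M$-edge incident to exactly one internal vertex, and that the leftover odd cycle obtained from $C_k$ after contracting matched internal edges still has odd length — a contradiction. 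Parts (i)–(iii) I expect to be routine once the case analysis is laid out.
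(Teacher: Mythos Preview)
Your overall strategy is reasonable, but there are concrete errors in parts (i) and (iii), and your plan for (iv) is not yet a proof.

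\textbf{Part (i).} The claim that the case check forces $xy\in M$ is false. If $xy\notin M$ then $\{xa,yb\}$ and $\{ya,xb\}$ are both valid: each is a matching, hits both triangles, and leaves a path on four vertices. The correct statement is that $M\cap E(D)$ is one of $\{xy\}$, $\{xa,yb\}$, $\{ya,xb\}$, and in \emph{each} of these both degree-$3$ vertices are $M$-saturated. This is exactly the enumeration the paper records; your conclusion is right but your justification is not.

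\textbf{Part (ii).} Your direct triangle argument is correct. The paper takes a shorter route via (i): each $P_3$ together with $v$ spans a diamond in which $v$ has degree~$3$, so by (i) the vertex $v$ is $M$-saturated inside each of the two (vertex-disjoint except for $v$) diamonds, forcing two $M$-edges at $v$.

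\textbf{Part (iii).} A single spoke $uc_j$ lies in \emph{two} rim triangles, $uc_{j-1}c_j$ and $uc_jc_{j+1}$, not one. Hence your count should read ``at least $k-2$ rim edges lie in $M$'', and then $\lfloor k/2\rfloor<k-2$ fails precisely at $k=4$. The idea can be rescued by noting that those $k-2$ forced rim edges are consecutive on $C_k$ and so cannot all lie in a matching once $k\ge 4$; but that is not the argument you wrote. The paper proceeds differently: for $k$ large the rim $C_k$ contains two vertex-disjoint $P_3$'s in $N(u)$, so (ii) disposes of those cases, and $W_4$, $W_5$ are checked directly.

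\textbf{Part (iv).} Here your proposal does not amount to an argument. The ``contract the matched internal edges and observe that the leftover cycle is still odd'' step is unjustified (contracting $m$ edges of $C_k$ yields $C_{k-m}$, whose parity depends on $m$), and the discharging sketch does not explain how border edges $p_ib_i$ interacting with the internal cycle are controlled. The paper gives a clean forcing argument instead: since the internal cycle $C_k$ is odd, some edge $p_ip_{i+1}\in M$; this saturates $p_i$ and $p_{i+1}$, so in the adjacent border triangles the only usable edges are $b_{i-1}p_{i-1}$ and $b_{i+1}p_{i+2}$, which saturate $p_{i-1}$ and $p_{i+2}$; iterating, the two cascades propagate around $C_k$ and collide at the internal vertex opposite $p_ip_{i+1}$, where a border triangle is left with all three edges forbidden. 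Replacing your counting sketch with this propagation yields a complete proof.
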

	\begin{proof}
		(i) Let~$D$ be a diamond subgraph of~$G$, such that~$V(D) = \{u, v_1, v_2, v_3\}$ and~$d_D(u)=d_D(v_2)=3$.
		We can see that~$M \cap E(D)$ must be one of the following sets: $\{uv_1, v_2v_3\}$, $\{v_1v_2, uv_3\}$, $\{uv_2\}$.
		In each one, $u$ and~$v_2$ are matched by~$M$.
		
		(ii) Suppose for a contradiction that~$v \in V(G)$ is such that~$G[N_G(v)]$ contains two disjoint~$P_3$: $P$ and~$P^{\prime}$.
		This implies that~$G[\{v\}\cup P]$ and~$G[\{v\}\cup P^{\prime}]$ are diamonds sharing a vertex of degree three.
		Then, by Statement (i) it follows that~$M$ contains two incident edges, one for each diamond, a contradiction.
		
		(iii) Suppose for a contradiction that~$G$ contains a subgraph~$H$ isomorphic to~$W_k$, $k\geq 4$, where~$u$ is universal in~$H$.
		If~$k \geq 7$, it follows that~$u$ has two disjoint~$P_3$ in its neighborhood, a contradiction by Statement (ii).
		Then~$k \leq 6$ and it can be easily verified that~$W_4$ and~$W_5$ are not in~$\mathcal{BM}$.
		
		(iv) Suppose for a contradiction that~$G$ contains a subgraph~$H$ isomorphic to an odd~$k$-pool, for an odd~$k\geq 3$.
		Let~$C = p_1p_2 \ldots p_kp_1$ be its internal cycle and let $B = \{b_1, b_2, \dotsc, b_k\}$ be the set of vertices of the borders of~$H$, such that~$\{p_ib_i, p_{i+1}b_i\} \subset E(H)$, for all~$1 \leq i \leq k$ modulo~$k$.
		Clearly~$M$ must contain some edge of~$C$ and one edge of every triangle~$p_ib_ip_{i+1}p_i$.
		W.l.o.g., consider~$p_1p_2 \in M\cap E(C)$.
		This implies that~$M$ contains no edge in~$\{p_2b_2, p_2p_3, p_1b_k, p_1p_k\}$.
		Therefore, $p_kb_k$ and~$p_3b_2$ must be in~$M$, which forbids two more edges from the triangles~$p_{k-1}b_{k-1}p_kp_{k-1}$ and~$p_3b_3p_4p_3$.
		Continuing this process, it follows that~$c_{\left\lfloor\frac{k+3}{2}\right\rfloor}$, which is at the same distance of~$p_1$ and~$p_2$ in~$C$, must contain two incident edges in~$M$, a contradiction.
	\end{proof}		
	Clearly every graph in~$\mathcal{BM}$ admits a proper $4$-coloring.
	Hence every graph in~$\mathcal{BM}$ is~$K_5$-free, implying that all graphs in~$\mathcal{BM}$ have maximum clique bounded by~4.
	More precisely, every graph in~$\mathcal{BM}$ is~$W_5$-free, as in Lemma~\ref{lemma1}, which implies that even some properly~$3$-colorable graphs do not admit a decycling matching.
	
	\subsection{Basics on Parameterized Complexity}
	\label{subsec:paramComp}
	
	For a basic background on parameterized complexity, we refer to~\cite{CyganFKLMPPS15,DF13}, and only some definitions are given here.
	A \emph{parameterized problem} is a language~$L \subseteq \Sigma^* \times \mathbb{N}$.
	For an instance~$I=(x,k) \in \Sigma^* \times \mathbb{N}$, $k$ is called the \emph{parameter}.
	A parameterized problem is \emph{fixed-parameter tractable} ({\sf FPT}) if there exists an algorithm~$\mathsf{A}$, a computable function~$f$, and a constant~$c$ such that given an instance $I=(x,k)$, $\mathsf{A}$ (called an {\sf FPT} \emph{algorithm}) correctly decides whether~$I \in L$ in time bounded by~$f(k) \cdot |I|^c$.
	
	A fundamental concept in parameterized complexity is that of \emph{kernelization}.
	A kernelization algorithm, or just \emph{kernel}, for a parameterized problem~$\Pi $ takes an instance~$(x,k)$ of the problem and, in time polynomial in~$|x| + k$, outputs an instance~$(x',k')$ such that~$|x'|, k' \leqslant g(k)$ for some function~$g$, and $(x,k) \in \Pi$ if and only if~$(x',k') \in \Pi$.
	The function~$g$ is called the \emph{size} of the kernel and may be viewed as a measure of the ``compressibility'' of a problem using polynomial-time preprocessing rules.
	A kernel is called \emph{polynomial} (resp. \emph{linear}) if~$g(k)$ is a polynomial (resp. linear) function in $k$.
	It is nowadays a well-known result in the area that a decidable problem is in~{\sf FPT} if and only if it has a kernelization algorithm.
	However, the kernel that one obtains in this way is typically of size at least exponential in the parameter.
	A natural problem in this context is to find polynomial or linear kernels for problems that are in~{\sf FPT}.

	\section{{\sf NP}-Completeness}
	\label{sec:NPc}
	
	Let~$F$ be a Boolean formula in~$3$-CNF such that~$\textbf{X} = \{X_1, X_2, \dotsc, X_n\}$ is the variable set and~$\textbf{C} = \{C_1, C_2, \dotsc, C_m\}$ is the clause set of~$F$.
	We represent the positive and negative literals of~$X_i$ as~$x_i$ and~$\overline{x_i}$, respectively.
	The \textit{associated graph}~$G_F=(V, E)$ of~$F$ is the bipartite graph such that there exists a vertex for every variable and clause of~$F$, where~$(\textbf{X}, \textbf{C})$ is the bipartition of~$V(G_F)$, and there exists an edge~$X_iC_j \in E(G_F)$ if and only if~$C_j$ contains either~$x_i$ or~$\overline{x_i}$.
	We say that~$F$ is a \textit{planar formula} if~$G_F$ is planar.
	
	In order to prove Theorem~\ref{thm:np-planar4}, we first present a polynomial-time reduction from the well known problem {\sc Positive Planar 1-In-3-SAT}~\cite{mr} to {\sc Planar 1-In-3-SAT$_3$}, defined as follows.
	
	\vspace{-.05cm}
	\probld	
	{Positive Planar 1-In-3-SAT{\normalfont ~\cite{mr}}}
	{A planar formula~$F$ in~$3$-CNF with no negative literals.}
	{Is there a truth assignment to the variables of~$F$, in which each clause has exactly one literal assigned true?}
	
	\vspace{-.05cm}
	\probld
	{Planar 1-In-3-SAT$_3$}
	{A planar formula~$F$ in~CNF, where each clause has either~$2$ or~$3$ literals and each variable occurs at most~$3$ times.
		Moreover, each positive literal occurs at most twice, while every negative literal occurs at most once in~$F$.}
	{Is there a truth assignment to the variables of~$F$ in which each clause has exactly one true literal?}

	\begin{theorem}\label{thm:NPplanar1-3}
		{\sc Planar 1-In-3-SAT$_3$} is~{\sf NP}-complete.
	\end{theorem}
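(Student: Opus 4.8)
The plan is to prove membership in {\sf NP} (immediate: a truth assignment is a polynomial-size certificate that is checkable in polynomial time) and {\sf NP}-hardness by a polynomial-time reduction from {\sc Positive Planar 1-In-3-SAT}, which is {\sf NP}-complete by Mulzer and Rote~\cite{mr}. So let $F$ be a positive planar formula in $3$-CNF; we may assume each clause has three distinct variables, since a clause with a repeated variable forces values and can be eliminated in polynomial time. The only way $F$ fails the requirements of {\sc Planar 1-In-3-SAT$_3$} is that a variable $X$ may occur arbitrarily many times, always positively. I will replace each such $X$ by a small planar gadget producing several ``private copies'' of $X$, each used at most three times and with the sign pattern required by the target problem.

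First I would fix a planar embedding of $G_F$. For a variable $X$ occurring $d=d(X)\ge 3$ times, in clauses $C_{j_1},\dots,C_{j_d}$ listed in the rotation order around the vertex $X$, I introduce fresh variables $u_1,v_1,\dots,u_d,v_d$ together with the $2d$ two-literal clauses
\[
(u_1\vee v_1),\ (v_1\vee u_2),\ (u_2\vee v_2),\ \dots,\ (u_d\vee v_d),\ (v_d\vee u_1).
\]
In a $1$-in-$3$ assignment a clause $(a\vee b)$ is satisfied exactly when $a\ne b$, so this cyclic gadget has even length and hence forces $u_1=u_2=\cdots=u_d$ and $v_i=\overline{u_1}$ for every $i$; conversely, every choice of a common value for the $u_i$ extends to the gadget. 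Finally, in the original clause $C_{j_i}$ I replace the occurrence of $X$ by the negative literal $\overline{v_i}$ (and I do the same replacement for every other variable of degree at least~$3$ appearing in $C_{j_i}$). Since $v_i=\overline{X}$ under the forced assignment, $\overline{v_i}$ carries the value of $X$, so each rewritten clause is $1$-in-$3$-equivalent to the original one. Variables of $F$ occurring at most twice are kept unchanged. Let $F'$ be the resulting formula; the construction is clearly polynomial.

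Next I would verify the syntactic constraints of {\sc Planar 1-In-3-SAT$_3$}. Every gadget clause is a $2$-clause and every rewritten original clause is still a $3$-clause. Each $u_i$ occurs exactly twice, both times positively (in two consecutive gadget clauses); each $v_i$ occurs twice positively (in two gadget clauses) and once negatively (in the rewritten clause $C_{j_i}$), a total of three occurrences with exactly one negative; and an unchanged original variable keeps at most two positive and no negative occurrences. Planarity is preserved by a purely local replacement: the $d$ edges leaving $X$ in the fixed embedding are re-attached, in the same rotation, to the vertices $v_1,\dots,v_d$ of a small cycle drawn inside a disk around the former position of $X$, and all new variable- and clause-vertices of the gadget lie inside that disk, so no crossing is created; hence $G_{F'}$ is planar. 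The correctness equivalence is then exactly the two observations already recorded: a $1$-in-$3$ assignment of $F$ extends by setting $u_i:=X$ and $v_i:=\overline{X}$ in each gadget, and a $1$-in-$3$ assignment of $F'$ restricts by reading off the forced common value of the $u_i$ as the value of the corresponding $X$.

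The step I expect to need the most care is checking planarity and the occurrence bounds \emph{together}: in particular, making sure that the single spare slot at each copy vertex is a \emph{negative} occurrence. This is precisely why the gadget pairs each copy $u_i$ with its complement $v_i$ and feeds $\overline{v_i}$ (not $X$, nor a copy used a third time positively) into the original clause, rather than using a plain path or cycle of copies of $X$, which would create a third positive occurrence and violate the ``each positive literal at most twice'' restriction.
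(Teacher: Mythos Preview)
Your proof is correct and follows essentially the same approach as the paper: reduce from {\sc Positive Planar 1-In-3-SAT} by replacing each high-degree variable with a planar cycle of copies linked by two-literal clauses, which in $1$-in-$3$ semantics force all copies to take the same value. The only difference is an implementation detail: the paper uses just $k$ copies $X_i^1,\dots,X_i^k$ with clauses $(x_i^{j},\overline{x_i^{j+1}})$ and substitutes the \emph{positive} literal $x_i^j$ into the original clause, so each copy already has two positive and one negative occurrence without the auxiliary $v$-variables---your concern that ``a plain cycle of copies would create a third positive occurrence'' is avoided simply by putting the negation inside the gadget clauses rather than in the rewritten original clause.
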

	\begin{proof} Let~$F$ be a planar formula in~$3$-CNF such that~$\textbf{X} = \{X_1, X_2, \dotsc, X_n\}$ is the variable set
		and~$\textbf{C} = \{C_1, C_2, \dotsc, C_m\}$ is the clause set of~$F$.
		Since verifying whether a graph is planar can be done in linear time~\cite{ht74}, as well as whether
		a formula in~$3$-CNF has a truth assignment, the problem is in~{\sf NP}.
		
		We construct a formula~$F^\prime$ from~$F$ as follows.
		If~$X_i$ is such that~$d_{G_F}(X_i) = k \geq 3$, then we create~$k$ new variables~$X_i^z$
		replacing the~$j^{th}$ ($1 \leq j \leq k$) occurrence of~$X_i$ by a variable~$X_i^j$,
		where a literal~$x_i$ (resp.~$\overline{x_i}$) is replaced by a literal~$x_i^j$ $(\textnormal{resp. } \overline{x_i^j})$.
		In addition, we create~$k$ new clauses~$C_i^j = \left(x_i^{j}, \overline{x_i^{j+1}}\right)$, 
		for~$j \in \{1, \dotsc, k-1\}$, and~$C_i^k = \left(x_i^{k}, \overline{x_i^1}\right)$.
		
		Let~$S$ be the set of all vertices~$X_i \in V(G_F[\textbf{X}])$ with~$d_{G_F}(X_i) = k \geq 3$.
		For such a vertex~$X_i \in S$, let~$X'_i = \{X_i^1, \dotsc, X_i^{k}\}$ and~$C'_i = \{C_i^1, \dotsc, C_i^k\}$.
		
		Note that, the associated graph~$G_{F^\prime}$ can be obtained from~$G_F$
		by replacing the corresponding vertex of~$X_i\in S$ by a cycle of length~$2d_{G_F}(X_i)$
		induced by the corresponding vertices of the new clauses in~$C'_i$ and the new variables in~$X'_i$.
		In addition, for each~$X_i \in S$ and~$C_j \in N_{G_F}(X_i)$, an edge~$X_i^tC_j$ is added
		in~$E(G_{F^\prime})$, such that every corresponding vertex~$X_i^t\in X'_i$ has exactly one neighbor~$C_j \notin C'_i$.
		
		As we can see, every variable~$X$ occurs at most~$3$ times in the clauses of~$F^\prime$,
		since every variable~$X_i$ with~$d_{G_F}(X_i)\geq 3$ is replaced by~$d_{G_{F}}(X_i)$ new
		variables that are in exactly~$3$ clauses of~$F^\prime$.
		By the construction, each literal occurs at most twice.
		Moreover, if~$F$ has no negative literals, then only the new variables have a negative literal
		and each one occurs exactly once in~$F^{\prime}$.
		
		Consider a planar embedding~$\Psi$ of~$G_{F}$.
		We construct~$G_{F^{\prime}}$ replacing each corresponding vertex~$X_i \in S$ by a cycle of length~$2d_{G_{F}}(X_i)$,
		as described above.
		After that, in order to preserve the planarity, we can follow the planar embedding~$\Psi$ to
		add a matching between vertices corresponding to variables in such a cycle and vertices corresponding
		to clauses~$C_j \notin C'_i$ and that~$X_i \in C_j$.
		This matching indicates in which clause of~$C'_i$ a given new variable will replace~$X_i$ in~$F^{\prime}$.
		Thus, without loss of generality, if~$G_{F}$ is planar, then we can assume that~$F^{\prime}$ is planar as well.
		
		As we can observe, for any truth assignment of~$F^{\prime}$, all $X_i^t\in X_i$ (for a given variable~$X_i$ of~$F$)
		have the same value.
		Hence, any clause of~$F^{\prime}$ with exactly two literals has true and false values.
		At this point, it is easy to see that~$F$ has a 1-in-3 truth assignment if and only if~$F^{\prime}$ has a 1-in-3
		truth assignment.
	\end{proof}
	
	Now we show the {\sf NP}-completeness of {\sc BM}.
	Let us call the graph depicted in~Figure~\ref{fig:head} by \textit{head}.
	We also call the vertex~$v$ as the \textit{neck} of the head.
	Given a graph~$G$, the next lemma shows that this structure is very useful to ensure that some edges cannot be in any odd decycling matching of~$G$.
	
	\begin{figure}[t!]
		\begin{subfigure}[b]{0.45\textwidth}
			\centering
			\includegraphics[width=0.37\textwidth]{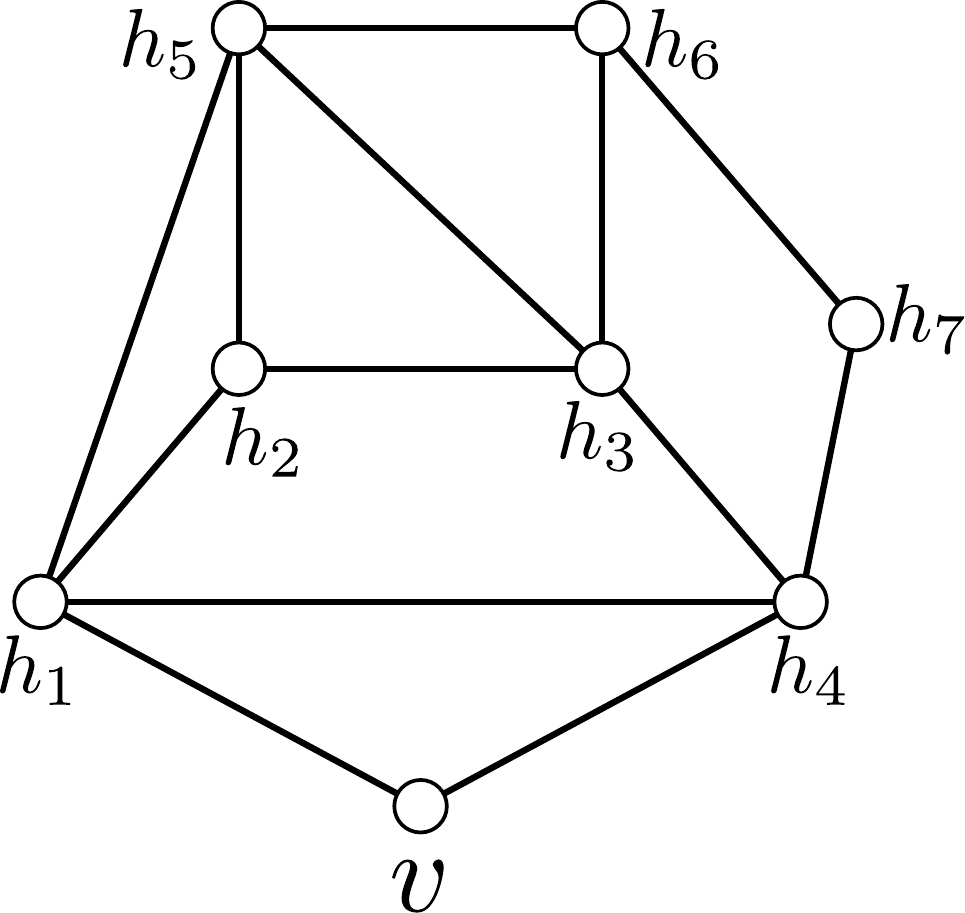}
			\caption{The head graph~$H$.}
			\label{fig:head}
		\end{subfigure} 
		\begin{subfigure}[b]{0.45\textwidth}
			\centering
			\includegraphics[width=0.37\textwidth]{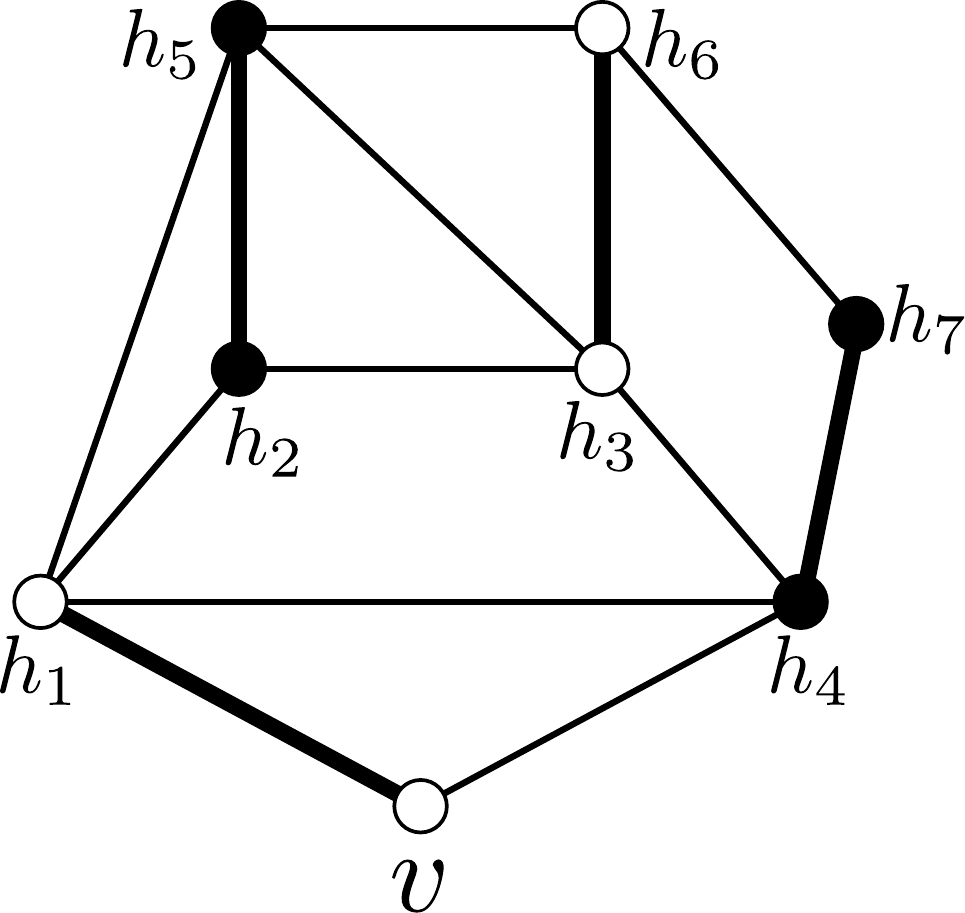}
			\caption{The bipartizing matching of~$H$.}
			\label{fig:head1}
		\end{subfigure} 
		\caption{The head graph and its unique bipartizing matching.
		}
		\label{fig:Head}
	\end{figure}
	
	\begin{lemma}\label{lem:head}
		An induced head~$H$ with neck~$v$ admits exactly one bipartizing matching~$M$.
		Moreover~$v$ is matched by~$M$.
	\end{lemma}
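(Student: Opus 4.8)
The plan is to turn the statement into a short forced case analysis driven by the triangle/diamond structure visible in Figure~\ref{fig:head}. Let $M$ be an arbitrary bipartizing matching of $H$. Since $H$ contains a triangle, $M$ must contain an edge of every triangle of $H$, so in particular $M\neq\emptyset$; the goal is to show that all of these local choices are in fact forced and together yield exactly the matching of Figure~\ref{fig:head1}, whose removal leaves a bipartite graph. That last point I would certify directly by exhibiting the corresponding $(2,1)$-colouring, i.e.\ a partition of $V(H)$ into two classes each inducing a graph of maximum degree at most $1$, so that $H-M$ has no odd cycle.

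First I would record the local constraints. By Lemma~\ref{lemma1}(i), for every diamond $D$ of $H$ with degree-$3$ vertices $u$ and $v_2$, the set $M\cap E(D)$ is one of $\{uv_1,v_2v_3\}$, $\{v_1v_2,uv_3\}$, $\{uv_2\}$, and in all three cases $u$ and $v_2$ are $M$-saturated. I would then use the matching condition to couple the choices made at overlapping diamonds and triangles: whenever two such subgraphs share a vertex they cannot both contribute an $M$-edge at that vertex, so a choice in one of them propagates a forced choice in the other. Starting the propagation from a well-chosen seed edge — one incident to a low-degree vertex or lying in a diamond, so that Lemma~\ref{lemma1}(i) bites immediately — collapses the branching to a single surviving assignment; any branch that does not collapse to it produces either a vertex meeting two edges of $M$ or an odd cycle surviving in $H-M$, i.e.\ a contradiction. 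Having pinned $M$ down uniquely, the ``moreover'' part is immediate: in the resulting matching the neck $v$ is an endpoint of one of the forced edges (for instance because $v$ plays the role of a degree-$3$ vertex of some diamond of $H$, which is $M$-saturated by Lemma~\ref{lemma1}(i)), so $v$ is matched by $M$.

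I expect the main obstacle to be bookkeeping rather than any deep idea: making the case analysis provably exhaustive while keeping it short. The leverage comes from choosing the right edge to branch on first and from the hypothesis that $H$ is an \emph{induced} head, which guarantees that the only triangles, diamonds, and odd cycles present are exactly those read off from Figure~\ref{fig:head}, so no hidden chord can create an extra constraint or an extra degree of freedom. A minor point to handle carefully at the end is to verify, for the one surviving candidate $M$, that $M$ really is a matching and that $H-M$ really is bipartite; both are finite checks on the picture, the second of which is precisely the colouring shown in Figure~\ref{fig:head1}.
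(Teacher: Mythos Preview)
Your overall strategy---a forced case analysis on the triangles of $H$, propagating constraints through shared vertices---is exactly what the paper does, and it is the right idea. The paper's execution is simply to branch on the neck $v$ first: it rules out the case that $v$ is unmatched (then $h_1h_4\in M$ and $h_2h_5\in M$, leaving the $5$-cycle $vh_1h_2h_3h_4v$), then rules out $vh_4\in M$ by exhibiting a surviving $5$-cycle in each sub-branch, so that $vh_1\in M$ and the rest is forced into the matching of Figure~\ref{fig:head1}.

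There is one concrete slip in your plan. You justify ``$v$ is matched'' by saying that $v$ is a degree-$3$ vertex of some diamond of $H$ and invoking Lemma~\ref{lemma1}(i). But in the head graph the neck $v$ has degree~$2$: its only neighbours are $h_1$ and $h_4$. So $v$ is never a degree-$3$ vertex of any diamond, and Lemma~\ref{lemma1}(i) does not apply to it. You therefore need a direct argument for this step, and the cleanest one is precisely the paper's: assume $v$ is unmatched, deduce $h_1h_4\in M$ from the triangle $vh_1h_4$, deduce $h_2h_5\in M$ from the triangle $h_1h_2h_5$ (since $h_1$ is now saturated), and observe that the odd cycle $vh_1h_2h_3h_4v$ has no edge in $M$. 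Once you fix this, your propagation scheme and the paper's coincide; the diamond lemma is still useful elsewhere in $H$ (e.g.\ for the diamond on $\{h_1,h_2,h_3,h_5\}$), just not at $v$.
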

	\begin{proof} 
		We can see that the thicker edges in Figure~\ref{fig:head1} compose a bipartizing matching~$M$ of~$H$,
		where the color of the vertices indicates the part of each one in~$G-M$.
		Moreover~$M$ satisfies the lemma.
		
		Now let us suppose that~$H$ admits another bipartizing matching~$M'$, which does not match~$v$.
		In this case, we get that~$vh_1$ and~$vh_4$ does not belong to~$M'$, implying that~$h_1h_4 \in M'$.
		By the triangle~$h_1h_2h_5$, it follows that~$h_2h_5$ must be in~$M'$.
		Hence the cycle~$vh_1h_2h_3h_4v$ remains in~$G-M'$, a contradiction.
		
		Let~$M''$ be a bipartizing matching that contains~$vh_4$.
		In this case, the edge~$h_1h_2$ cannot be in~$M''$, otherwise the cycle~$h_1h_4h_3h_2h_5h_1$ survives in~$G-M''$.
		In the same way, the edge~$h_1h_5 \notin M''$, otherwise the cycle~$h_1h_2h_5h_3h_4h_1$ is not destroyed by~$M''$.
		Therefore we get that~$h_2h_5$ must be in~$M''$, which implies that the cycle~$h_1h_4h_3h_2h_5h_1$ belongs to~$G-M''$, a contradiction.
	\end{proof}
	
	\begin{lemma}\label{lem:gagdetPool}
		Let~$G$ be an odd~$k$-pool with internal cycle~$C = p_1p_2 \dots p_kp_1$.
		Let~$b$ be a border of~$G$, where~$N_G(b) = \{p_1, p_k\}$.
		Then every bipartizing matching~$M$ of~$G-b$ contains exactly one edge of~$C$.
		Moreover, $c_1c_k \notin M$ for such bipartizing matchings.
	\end{lemma}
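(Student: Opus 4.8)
The plan is to run, on a local scale, the same kind of propagation argument that proves Lemma~\ref{lemma1}(iv), stopping one step short of the global contradiction: deleting the border $b$ breaks the cyclic chain of forced implications, and what remains is exactly the structure in the statement. Write $C=p_1p_2\dots p_kp_1$, and for $i\in\{1,\dots,k-1\}$ let $T_i$ be the triangle on the vertices $p_i$, $p_{i+1}$, $b_i$. These are precisely the triangles of $G$ that survive in $G-b$, since $b$ is the border sitting on the edge $p_1p_k$; in particular $p_1p_k$ is the unique edge of $C$ lying on no surviving triangle. Fix a bipartizing matching $M$ of $G-b$.

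First I would record two immediate facts. Since each $T_i$ is an odd cycle and $M$ is a matching, $M$ contains \emph{exactly one} edge of each $T_i$, $i\in\{1,\dots,k-1\}$. And since $C$ is an odd cycle, $M$ contains at least one edge of $C$.

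The heart of the argument is a local forcing rule: if $M$ saturates one of $p_i$, $p_{i+1}$ by an edge not belonging to $T_i$, then the unique $M$-edge of $T_i$ must be the border edge at the \emph{other} of $p_i$, $p_{i+1}$, which in turn saturates that vertex; hence the rule can be chained along the path $p_1,p_2,\dots,p_k$. I would start from an edge of $M\cap E(C)$. If $M$ contains an interior cycle edge $e_a=p_ap_{a+1}$ with $1\le a\le k-1$, then $p_a$ and $p_{a+1}$ are saturated by $e_a$, chaining leftward forces $p_{a-1}b_{a-1},\dots,p_1b_1\in M$, and chaining rightward forces $p_{a+2}b_{a+1},\dots,p_kb_{k-1}\in M$; now every vertex $p_1,\dots,p_k$ is saturated, so every cycle edge other than $e_a$ --- in particular $p_1p_k$ --- is excluded, and $e_a$ is the unique edge of $C$ in $M$. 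If on the other hand $M$ contained no interior cycle edge, then since $M$ meets $C$ it would contain $p_1p_k$; but then $p_1$ and $p_k$ are saturated, $T_1$ forces $p_2b_1\in M$, then $T_2$ forces $p_3b_2\in M$, and so on up to $p_kb_{k-1}\in M$ forced by $T_{k-1}$, whereas $T_{k-1}$ viewed from the saturated vertex $p_k$ forces instead $p_{k-1}b_{k-1}\in M$ --- two distinct edges both required to be the unique $M$-edge of $T_{k-1}$, a contradiction. Hence this second case cannot occur, so $M$ always contains exactly one edge of $C$ and that edge is not $p_1p_k$, which is what the lemma asserts (I read the ``$c_1c_k$'' of the statement as the edge $p_1p_k$ of the internal cycle).

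The one point that needs genuine care is the bookkeeping in the chaining: one must check that the chains of forced edges never run outside the index range $1,\dots,k-1$ on which the triangles $T_i$ exist --- this is exactly where the removal of $b$ is used --- and that no two forced edges ever share a vertex except in the engineered contradiction of the $p_1p_k$ case. Once the forcing rule is stated cleanly this is routine, and one can sanity-check the small case $k=3$ by hand.
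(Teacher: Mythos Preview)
Your proof is correct and follows essentially the same propagation argument as the paper: both argue that a matching edge of $C$ forces, triangle by triangle, the border edges on either side, and both derive the exclusion of $p_1p_k$ by showing the chain of forced edges becomes inconsistent. The only cosmetic difference is that the paper chains inward from both $p_1$ and $p_k$ and locates the contradiction at the middle vertex $p_{(k+1)/2}$, whereas you chain one-directionally and find it at $T_{k-1}$; your formulation of the forcing rule is in fact slightly cleaner than the paper's, which is a bit loose about which border edge ($b_jp_j$ versus $b_jp_{j+1}$) is forced on each side.
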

	\begin{proof} Let~$b_i$ be the $i$-th-border of~$G$, such that~$N_G(b_i)=\{p_i, p_{i+1}\}$, $1 \leq i \leq k-1$.
		Since~$k$ is odd, every bipartizing matching of~$G-b$ must contain at least one edge of~$C$.
		
		First, suppose that~$G-b$ has a bipartizing matching~$M$ containing~$p_1p_k$.
		In this case, we get that the edges in~$\{p_1p_2, p_1b_1, p_kp_{k-1}, p_kb_{k-1}\}$ cannot be in~$M$.
		Thus~$M$ must contain edges~$b_1p_2$ and~$b_{k-1}p_{k-1}$.
		In the same way, we can see that~$M$ cannot contain the edges $\{p_2p_3, p_2b_2, p_{k-1}p_{k-2},$ $p_{k-1}b_{k-2}\}$.
		Hence, it can be seen that all edges incident to~$p_{\frac{k+1}{2}}$ are forbidden to be in~$M$,
		which implies that the triangles containing~$p_{\frac{k+1}{2}}$ have no edge in~$M$, a contradiction by
		the choice of~$M$.
		
		Let~$p_ip_{i+1}$ be an edge of~$C$ in a bipartizing matching~$M$ of~$G-b$, with~$i\neq k$.
		In a same fashion, the edges in~$\{p_ip_{i-1}, p_ib_{i-1}, p_{i+1}p_{i+2}, p_{i+1}b_{i+1}\}$ cannot be in~$M$.
		Following this pattern, we can see that every edge~$b_jp_{j+1}$ must be in~$M$, for
		every~$j \in \{1, \dotsc, k\} \setminus \{i\}$.
		Since~$M$ contains only one edge of~$C$ and one edge of every triangle of~$G-b$,
		it follows that~$M$ is unique, for each edge~$p_ip_{i+1}$, $i \neq k$.
	\end{proof}
	
	\noindent \textbf{{\sf NP}-Completeness for Planar Graphs of Maximum Degree~4.}
	
	We prove the~{\sf NP}-completeness by a reduction from {\sc Planar 1-In-3-SAT$_3$}.
	We first prove it for~$3$-colorable planar graphs of maximum degree~5.
	The circles with an~$H$ in the figures are induced head graphs, whose neck is the vertex
	touching the circle.
	
	\begin{figure}[t]
		\centering
		\begin{subfigure}[b]{0.27\textwidth}
			\centering
			\includegraphics[width=\textwidth]{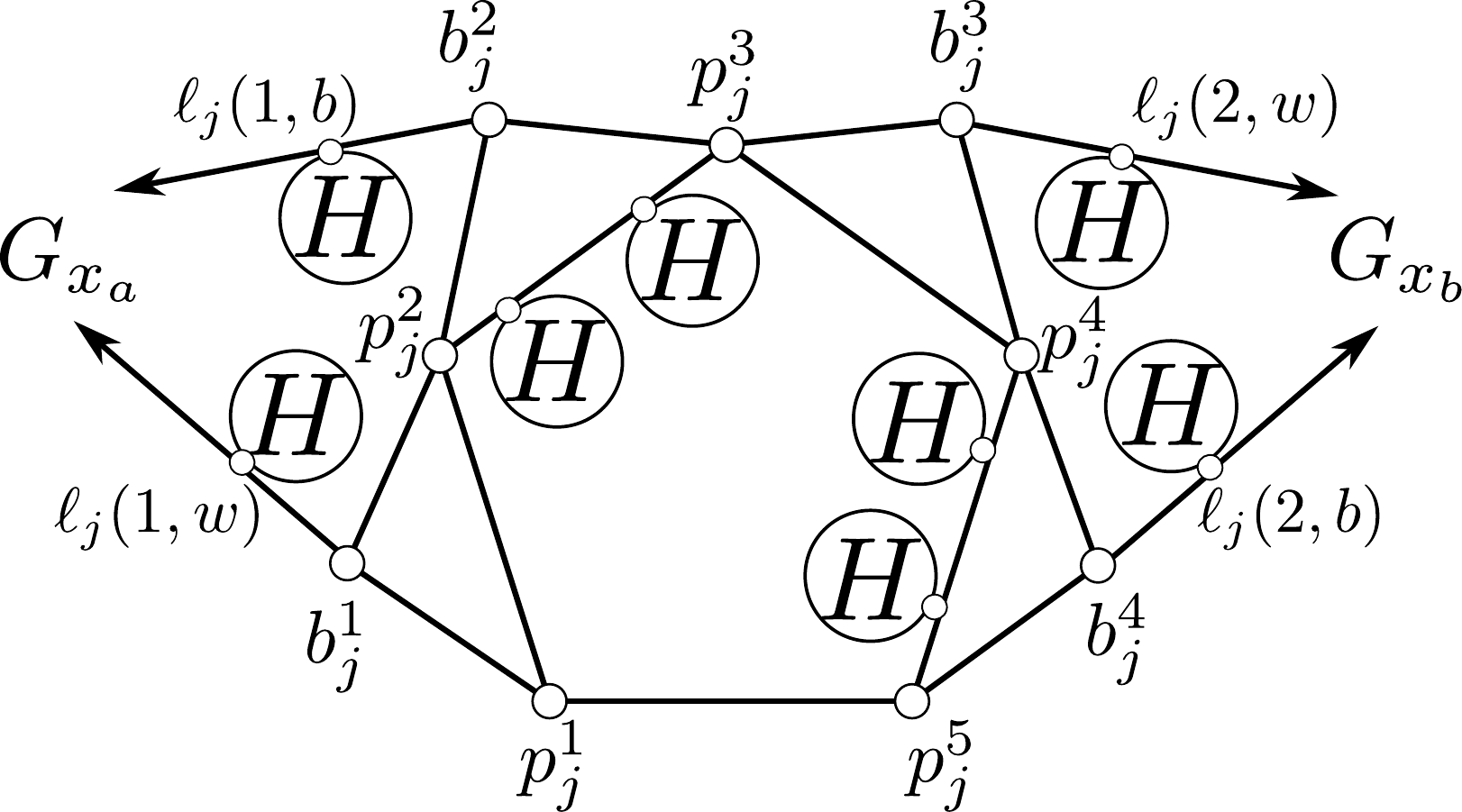}
			\caption{For clauses of size two.}
			\label{fig:ClGadget2}
		\end{subfigure}
		\begin{subfigure}[b]{0.37\textwidth}
			\centering
			\includegraphics[width=\textwidth]{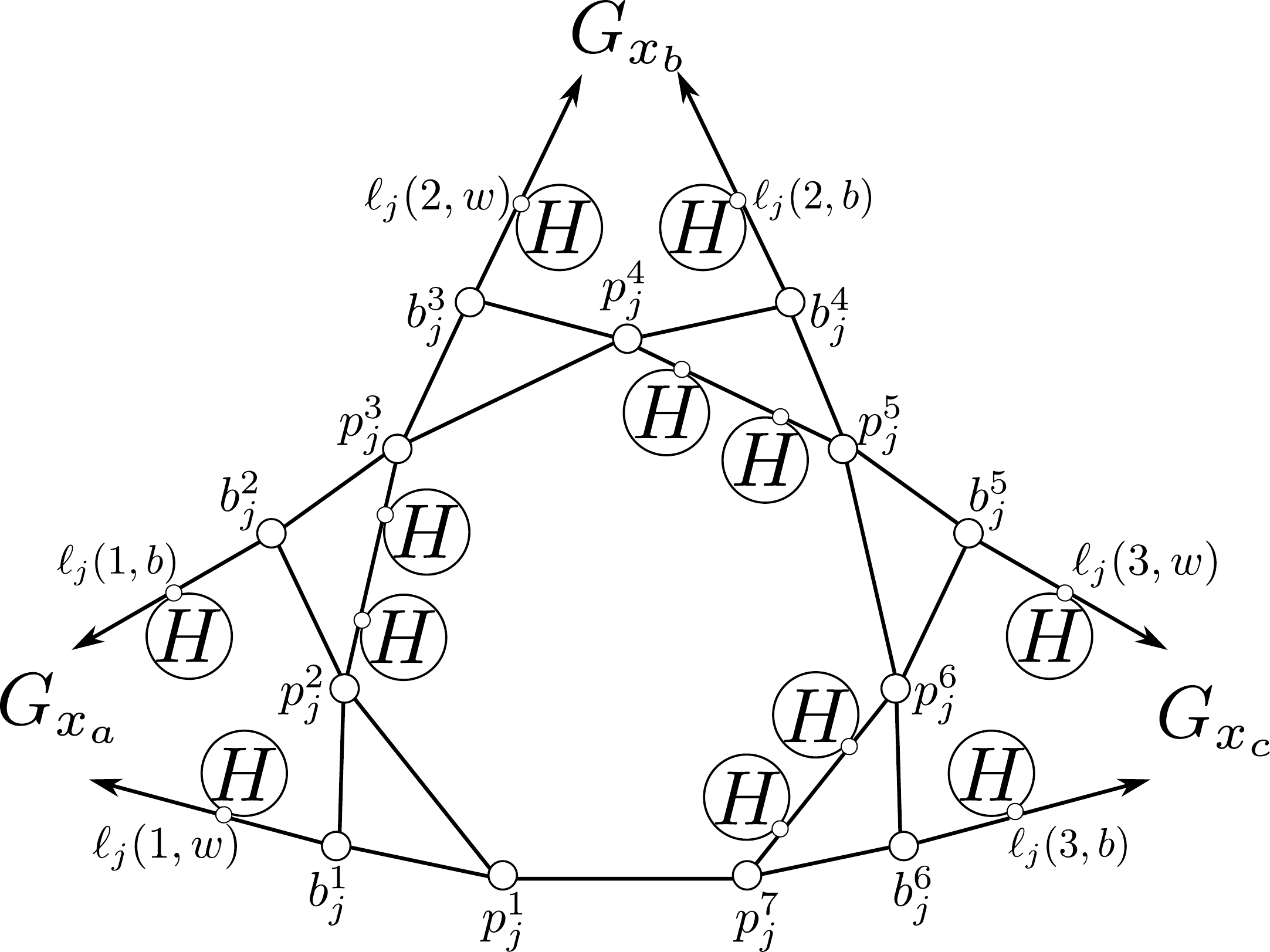}
			\caption{For clauses of size three.}
			\label{fig:ClGadget3}
		\end{subfigure}
		\begin{subfigure}[b]{0.34\textwidth}
			\centering
			\includegraphics[width=\textwidth]{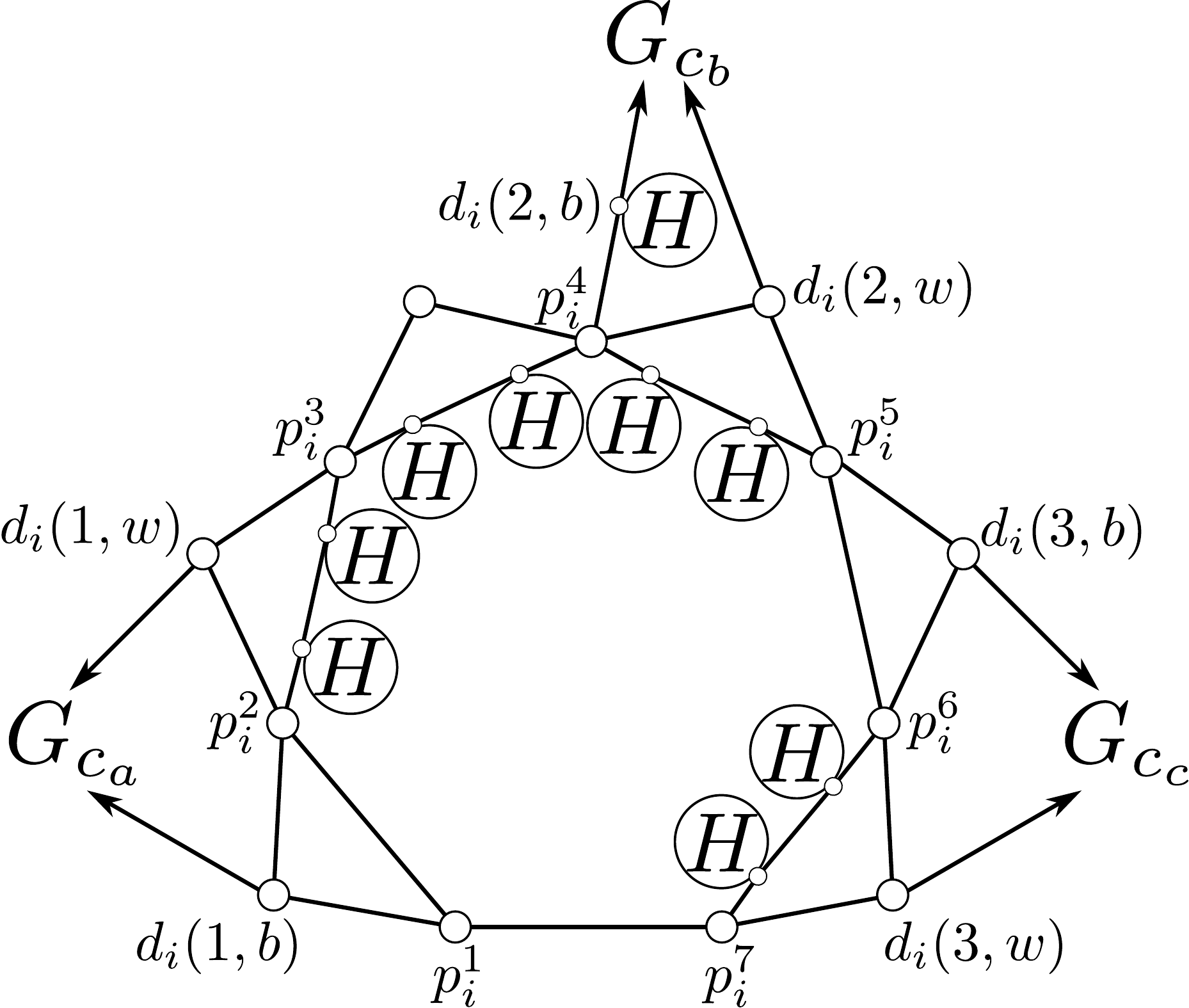}
			\caption{Variable gadget.}
			\label{fig:VarGadgetNP5}
		\end{subfigure}
		\caption{Figure~\ref{fig:ClGadget2} and~\ref{fig:ClGadget3} are the clause gadgets and Figure~\ref{fig:VarGadgetNP5}
			is the variable gadget in Theorem~\ref{thm:NPCompDeg5}.
			Each pair of arrow edges connects~$G_{X_i}$ to one clause gadget~$G_{C}$ such that~$X_i \in C$.}
		\label{fig:clausesDeg5}
	\end{figure}	
	
	\begin{theorem}\label{thm:NPCompDeg5}
		{\sc BM} is~{\sf NP}-complete for~$3$-colorable planar graphs~$G$ of~$\Delta(G)= 5$.
	\end{theorem}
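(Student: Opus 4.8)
The plan is to reduce from {\sc Planar 1-In-3-SAT$_3$}, which is {\sf NP}-complete by Theorem~\ref{thm:NPplanar1-3}. Given a planar formula $F$ with variable set $\mathbf{X}$ and clause set $\mathbf{C}$ of the restricted form (each clause of size $2$ or $3$, each variable occurring at most $3$ times, each positive literal at most twice and each negative literal at most once), I would build a planar graph $G=G(F)$ by gluing one variable gadget $G_{X_i}$ for each $X_i$ (Figure~\ref{fig:VarGadgetNP5}) to one clause gadget $G_{C_j}$ for each $C_j$ (Figure~\ref{fig:ClGadget2} or~\ref{fig:ClGadget3}, according to $|C_j|$), realizing each literal occurrence of $X_i$ in a clause $C$ by the corresponding pair of arrow edges joining a designated port of $G_{X_i}$ to a designated port of $G_C$. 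The heads attached throughout the gadgets serve to pin down edges: by Lemma~\ref{lem:head}, in any bipartizing matching $M$ of $G$ the restriction $M\cap E(H)$ to an induced head $H$ must be its unique bipartizing matching and must match the neck, so every edge joining the neck to the rest of $G$ is forbidden from $M$. This rigidity is what lets the gadgets transmit Boolean information along the arrow edges.

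The core of the argument is a local analysis of the two gadget types. First I would show that $G_{X_i}$ admits exactly two ``states'' of bipartizing matchings — a \emph{true} state and a \emph{false} state — and that the state determines, for every arrow edge leaving $G_{X_i}$, whether that edge is (forced to be) in $M$; the fact that each positive literal appears at most twice and each negative literal at most once keeps the number of ports small enough to stay within degree~$5$. Second, I would show that for a clause gadget $G_C$, a bipartizing matching of $G$ that restricts consistently to the variable gadgets of the literals in $C$ exists \emph{if and only if} exactly one of those literals is in its true state; here the odd-$k$-pool structure together with Lemma~\ref{lem:gagdetPool} (and the $W_k$-freeness of Lemma~\ref{lemma1}) supplies the parity obstruction that rules out ``zero true'' and ``two or three true''. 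Chaining the two analyses yields: $G\in\mathcal{BM}$ if and only if $F$ has a $1$-in-$3$ truth assignment.

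It then remains to verify the structural claims. Planarity: since $G_F$ is planar, fix a planar embedding, place each $G_{X_i}$ in a small disk around the image of $X_i$, each $G_{C_j}$ in a small disk around the image of $C_j$, route the arrow edges along the edges of $G_F$, and choose the cyclic order of ports around each gadget to match the rotation system of $G_F$; so $G$ is planar. Maximum degree: a direct inspection of Figures~\ref{fig:head}, \ref{fig:ClGadget2}, \ref{fig:ClGadget3}, and~\ref{fig:VarGadgetNP5} shows every vertex has degree at most~$5$, with equality attained, so $\Delta(G)=5$. Three-colorability: exhibit a proper $3$-coloring of the head and of each gadget, and check that these can be made to agree at the shared endpoints of the arrow edges; since the gadgets meet only along those edges, the colorings combine into a proper $3$-coloring of $G$. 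Membership in {\sf NP} is immediate. The main obstacle will be the gadget verification of the second paragraph: carefully enumerating, for each gadget, all matchings whose removal destroys every odd cycle, confirming that exactly the intended ones survive, and showing that the true/false states interlock correctly across a clause — this is precisely where Lemma~\ref{lem:head} and Lemma~\ref{lem:gagdetPool} must be applied edge by edge.
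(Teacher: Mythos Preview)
Your overall strategy is the paper's: reduce from {\sc Planar 1-In-3-SAT$_3$}, assemble $G$ from the clause and variable gadgets of Figures~\ref{fig:ClGadget2}--\ref{fig:VarGadgetNP5}, use Lemma~\ref{lem:head} to freeze neck-incident edges and Lemma~\ref{lem:gagdetPool} to enumerate the bipartizing matchings of the pool-based gadgets, then check planarity, $\Delta=5$, and $3$-colorability by inspection. That skeleton is correct.

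There is, however, a concrete misunderstanding in how information crosses between gadgets. You assert that the variable state ``determines, for every arrow edge leaving $G_{X_i}$, whether that edge is (forced to be) in $M$''. With these gadgets that mechanism cannot work: every port vertex $\ell_j(k,\cdot)$ carries a pendant head, so by Lemma~\ref{lem:head} it is already $M$-matched inside its head, and therefore \emph{no} arrow edge can ever lie in $M$, regardless of the state. The paper transmits truth differently. A bipartizing matching $M$ induces a $2$-coloring of $G-M$, and truth is encoded by the \emph{color parity} of each port pair: in a clause gadget exactly one pair $\{\ell_j(k,w),\ell_j(k,b)\}$ is monochromatic (this is the ``true'' literal) while the other pairs are bichromatic (Figure~\ref{fig:ClGadget3Mathing}); in the variable gadget the pairs $\{d_i(1,\cdot)\}$ and $\{d_i(2,\cdot)\}$ share one parity and $\{d_i(3,\cdot)\}$ the opposite (Figure~\ref{fig:VarGadgetMathing}). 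Since every arrow edge lies in $G-M$, its endpoints get opposite colors, which forces the parity of $\{\ell_j(k,w),\ell_j(k,b)\}$ to coincide with that of the connected $\{d_i(k',w),d_i(k',b)\}$ --- that is the consistency link. Your parity intuition for the clause side is right, but it is obtained from the enumeration via Lemma~\ref{lem:gagdetPool}, not from Lemma~\ref{lemma1}. Once you swap the ``arrow edge in $M$ or not'' mechanism for this color-parity mechanism, the rest of your plan goes through exactly as in the paper.
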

	\begin{proof} Let~$F$ be an instance of {\sc Planar 1-In-3-SAT$_3$}, with~$X=\{X_1, X_2, \dotsc, X_n\}$
		and $C = \{C_1, C_2, \dotsc,$ $C_m\}$ be the sets of variables and clauses of~$F$, respectively.
		We construct a planar graph~$G=(V, E)$ of maximum degree~$5$ as follows:
		\begin{itemize}
			\item For each clause~$C_j \in C$, we construct a gadget~$G_{C_j}$ as depicted in Fig.~\ref{fig:ClGadget2} and Fig.~\ref{fig:ClGadget3}.
			Such gadgets are just a~$5$-pool and a~$7$-pool that we remove a border vertex, for clauses of size~$2$ and~$3$, respectively.
			Moreover, for the alternate edges of the internal cycle we subdivide them twice and append a head to each such a new vertex.
			Finally, we add two vertices~$\ell_j(k, w)$ and~~$\ell_j(k, b)$, such that~$b_j^{2k-1}\ell_j(k, w) \in E(G)$
			and~$b_j^{2k}\ell_j(k, b) \in E(G)$, for~$k \in \{1,2,3\}$.
			We append a head to all such new vertices.
			\item For each variable~$X_i \in X$, we construct a gadget~$G_{X_i}$ as depicted in Figure~\ref{fig:VarGadgetNP5}.
			This gadget is a~$7$-pool that we remove a border vertex.
			Moreover, we subdivide twice~the edges~$p_i^2p_i^3$, $p_i^3p_i^4$, $p_i^4p_i^5$, and~$p_i^6p_i^7$,
			appending a head to each new vertex.~We rename each border vertex~$b_i^{2k-1}$ as~$d_i(k, b)$, $k \in\{1, 3\}$,
			and~$b_i^{2k}$ as~$d_i(k, w)$, $k \in \{1,2,3\}$.
			Finally, a vertex~$d_i(2, b)$ with a pendant head and adjacent to~$p_i^4$ is added.
			\item The connection between clause and variable gadgets is as in Figure~\ref{fig:clausesDeg5}.
			Each pair of arrow edges in a variable gadget~$G_{X_i}$ corresponds to the same pair in a clause
			gadget~$G_{C_j}$, where~$X_i \in C_j$.
			Precisely, if~$x_i \in C_j$, then we add the edges $\ell_j(k, b)d_i(k^{\prime}, b)$
			and $\ell_j(k, w)d_i(k^{\prime}, w)$, for some~$k \in \{1, 2, 3\}$ and some~$k^{\prime} \in \{1, 2\}$.
			Note that each $\ell_j(k, b)$ (and $\ell_j(k, w)$) represents precisely one variable and it is connected to only
			one of~$d_i(k', b)$ (and $d_i(k',w)$) in~$G_{X_i}$.
			However, if~$\overline{x_i} \in C_j$, then we add the edges $\ell_j(k, b)d_i(3, b)$ and $\ell_j(k, w)d_i(3, w)$,
			for some~$k \in \{1, 2, 3\}$.
			Note that~$d_i(3, b)$ and $d_i(3, w)$ represent~$\overline{x_i}$.
			\item For a variable occurring exactly twice in~$F$, just consider those connections corresponding to literals
			of~$X_i$ in the clauses of~$F$, i.e., the pair~$d_i(3, b)$, $d_i(3, w)$ (resp. $d_i(2, b)$, $d_i(2, w)$) will be used
			to connect to a clause gadget only if $\overline{x_i}$ occurs (resp. does not occur) in~$F$.
		\end{itemize}					
		Let~$G$ be the graph obtained from~$F$ by the above construction.
		We can see that~$G$ has maximum degree~$5$, where the only vertices with degree~$5$ are those~$p_i^4$,
		for each variable gadget.
		Furthermore, it is clear that~$G$ is~$3$-colorable.
		
		It remains to show that if~$G_F$ is planar, then~$G$ is planar.
		Consider a planar embedding~$\psi$ of~$G_F$.
		We replace each vertex~$v_{X_i}$ of~$G_F$ by a variable gadget~$G_{X_i}$, as well as every vertex~$v_{C_j}$
		of~$G_{F}$ by a clause gadget~$G_{C_j}$.
		The clause gadgets correspond to clauses of length two or three, which depends on the degree of~$v_{C_j}$ in~$G_F$.
		Since the clause and variable gadgets are planar, we just need to show that the	connections among them keep	the planarity.
		Given an edge~$v_{X_i}v_{C_j} \in E(G_F)$, we connect~$G_{X_i}$ and~$G_{C_j}$ by duplicating such an edge as parallel
		edges~$\ell_j(k, w)d_i(k^{\prime}, w)$ and~$\ell_j(k, b)d_i(k^{\prime}, b)$, for some~$k \in \{1, 2, 3\}$
		and some~$k^{\prime} \in \{1, 2\}$, or~$\ell_j(k, b)d_i(3, b)$ and~$\ell_j(k, w)d_i(3, w)$, for some~$k \in \{1, 2, 3\}$,
		as previously discussed.
		Hence~$G$ is also planar.
		
		\begin{figure}[t]
			\begin{subfigure}[b]{.33\textwidth}
				\centering
				\includegraphics[width=\textwidth]{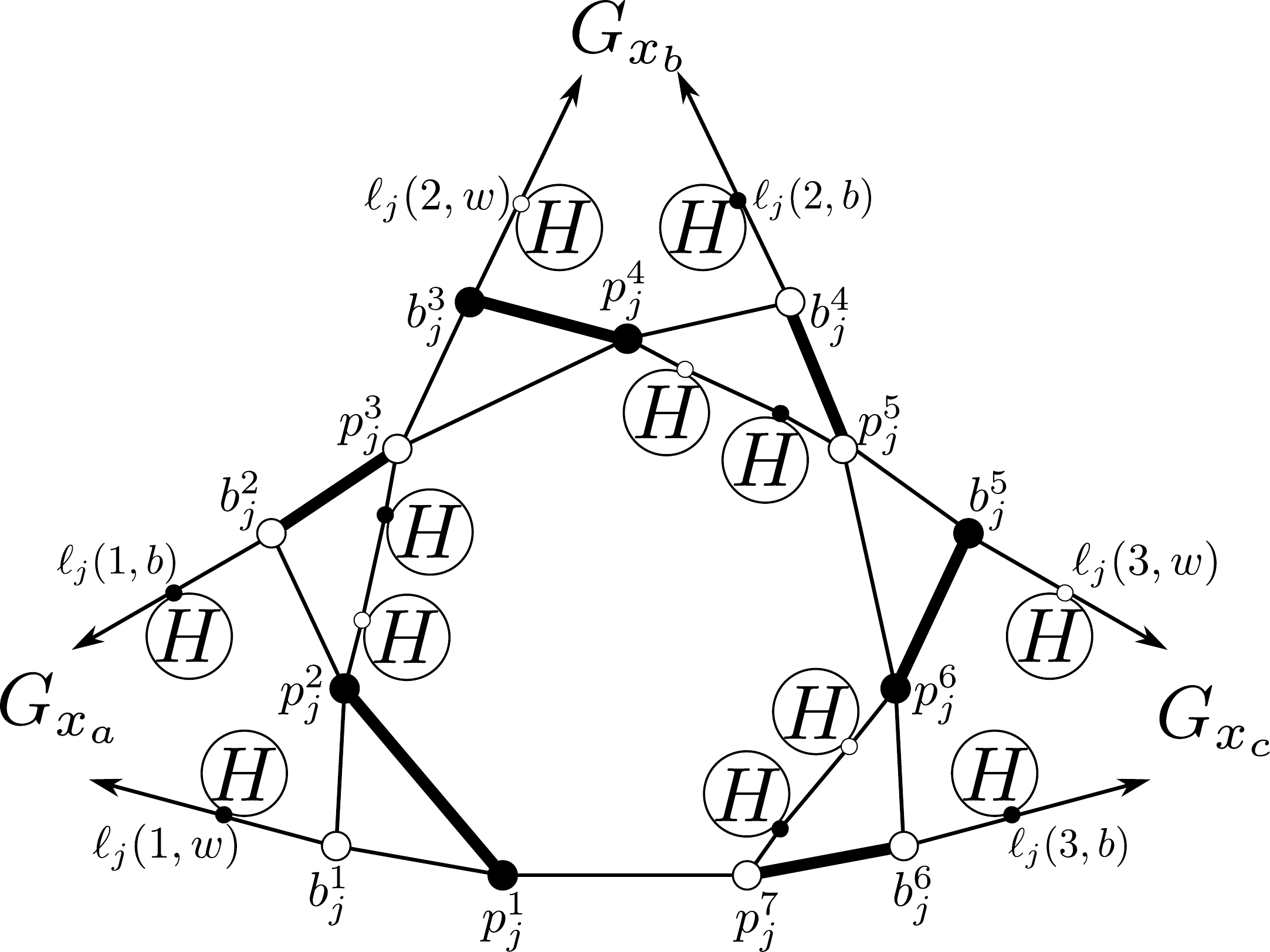}
				\caption{$p_j^1p_j^2 \in M$}
				\label{fig:ClGadget3Mathing1}
			\end{subfigure} 
			\begin{subfigure}[b]{.33\textwidth}
				\centering
				\includegraphics[width=\textwidth]{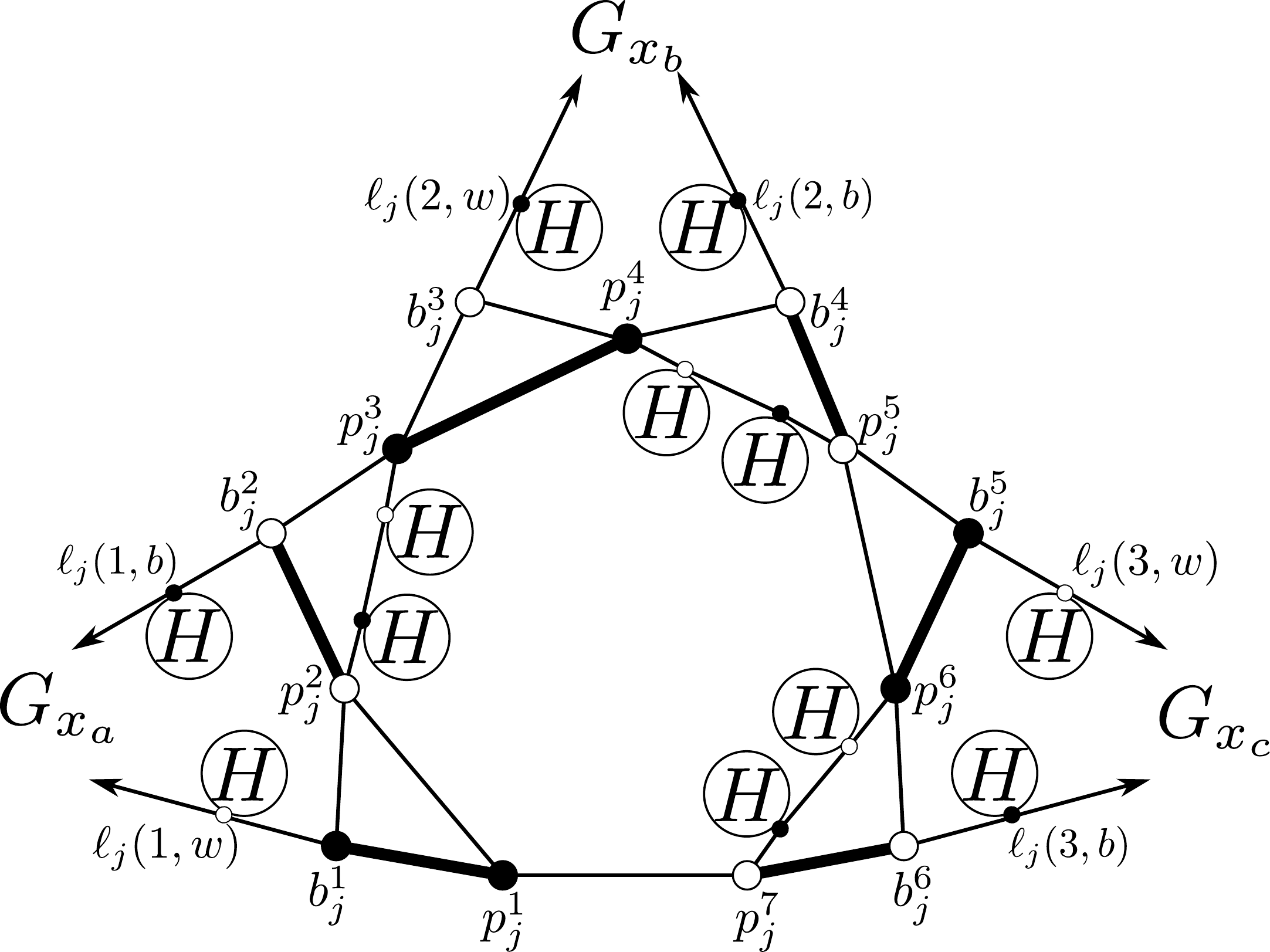}
				\caption{$p_j^3p_j^4 \in M$}
				\label{fig:ClGadget3Mathing2}
			\end{subfigure}
			\begin{subfigure}[b]{.33\textwidth}
				\centering
				\includegraphics[width=\textwidth]{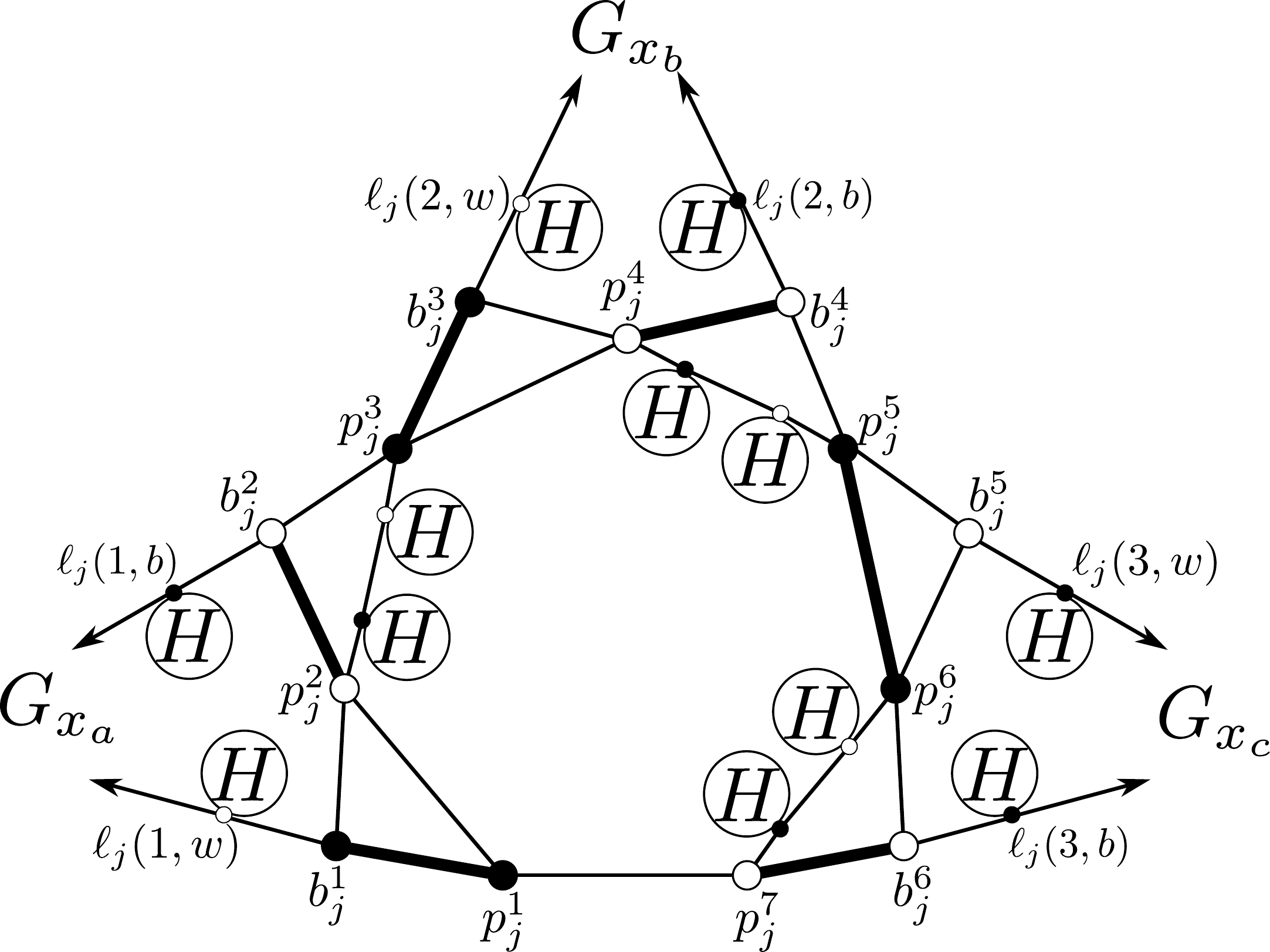}
				\caption{$p_j^5p_j^6 \in M$}
				\label{fig:ClGadget3Mathing3}
			\end{subfigure}
			\caption{Configurations by removing a bipartizing matching clause gadget of size three.}
			\label{fig:ClGadget3Mathing}
		\end{figure}
		
		In order to prove that~$F$ is satisfiable if and only if~$G \in \mathcal{BM}$, we discuss some considerations
		related to bipartizing matchings of the clause and variable gadgets.
		By Lemma~\ref{lem:gagdetPool}, we know that the graph obtained by removing a border from an odd~$k$-pool admits a unique
		bipartizing	matching for each edge of the internal cycle, except that whose endvertices are adjacent to the removed border.
		Furthermore, Lemma~\ref{lem:head} implies that each external edge incident to the neck of an induced head
		cannot be in any bipartizing matching of~$G$.
		Figure~\ref{fig:ClGadget3Mathing} shows the possible bipartizing matchings~$M$, stressed edges,
		for the clause gadget~$G_{C_j}$	of clauses of size three.
		The black and white vertex assignment represents the bipartition of~$G_{C_j}-M$.
		Exactly one pair of vertices~$\ell_j(k, w)$ and~$\ell_j(k, b)$ ($k \in \{1, 2, 3\}$) is such that
		they have the same color, while the others pairs have opposite colors.
		Precisely, we can see that~$\ell_j(k, w)$ has the same color for each pair with opposite color vertices
		as well	as~$\ell_j(k, b)$, for each bipartizing matching of~$G_{C_j}$.
		Hence, we can associate one literal~$x_j^1$, $x_j^2$, and~$x_j^3$ to each pair of
		vertices~$\ell_j(k, w)$ and~$\ell_j(k, b)$, $k \in \{1, 2, 3\}$.
		A similar analysis can be done for clause gadgets of clauses of size two.
		
		In the same way, each variable gadget~$G_{X_i}$ admits two possible bipartizing	matchings~$M$,
		as depicted in Figure~\ref{fig:VarGadgetMathing}.
		We can see that the pair~$d_i(3,b)$ and~$d_i(3,w)$ has a different assignment for the other two
		pairs~$d_i(k,b)$ and~$d_i(k,w)$, $k \in \{1, 2\}$.
		Moreover, the last two pairs have the same assignment, as depicted in Figure~\ref{fig:VarGadgetMathing1}
		and Figure~\ref{fig:VarGadgetMathing2}.
		One more detail is that the unique possibilities for such pairs is that~$d_i(3,b)$ and~$d_i(3,w)$
		have opposite assignments if and only if the vertices~$d_i(k,b)$ and~$d_i(k,w)$ have the same assignment, $k \in \{1, 2\}$.
		Therefore we can associate the positive literal~$x_i$ to the pairs~$d_i(k,b)$ and~$d_i(k,w)$, $k \in \{1, 2\}$,
		while~$\overline{x_i}$ can be represented by~$d_i(3,b)$ and~$d_i(3,w)$.
		
		As observed above for clause gadgets, we can associate true value to the pair of vertices~$\ell_j(k, w)$
		and~$\ell_j(k, b)$ with same color, $k \in \{1, 2, 3\}$.
		Hence exactly one of them is true, that is, exactly one literal of~$C_j$ is true.
		Moreover, each variable gadget has two positive literals and one negative.
		Hence, if~$G \in \mathcal{BM}$, then every clause gadget has exactly one true literal and every variable has
		a correct truth assignment, which implies that~$F$ is satisfiable.
		
		Conversely, if~$F$ is satisfiable, then each clause has exactly one true literal.
		Thus, for each clause gadget~$G_{C_j}$ we associate a same color to the pair of vertices corresponding to its true literal.
		By Figure~\ref{fig:ClGadget3Mathing}, there is an appropriate choice of a bipartizing matching for each true literal of~$C_j$.
		The same holds for each variable gadget.
	\end{proof}
	
	\begin{figure}[t]
		\begin{subfigure}[b]{.49\textwidth}
			\centering
			\includegraphics[width=.68\textwidth]{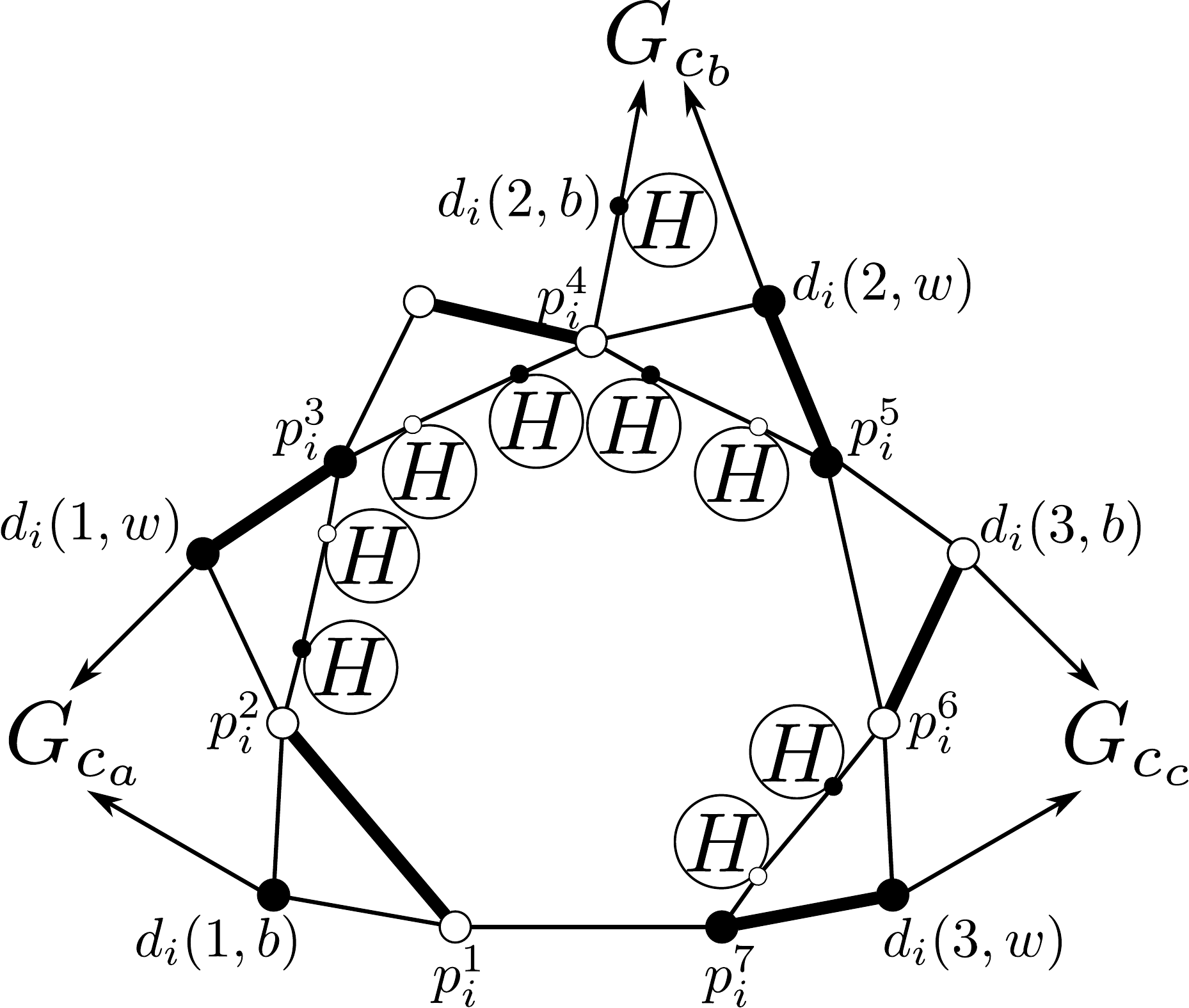}
			\caption{$p_i^1p_i^2 \in M$.}
			\label{fig:VarGadgetMathing1}
		\end{subfigure} 
		\begin{subfigure}[b]{.49\textwidth}
			\centering
			\includegraphics[width=.68\textwidth]{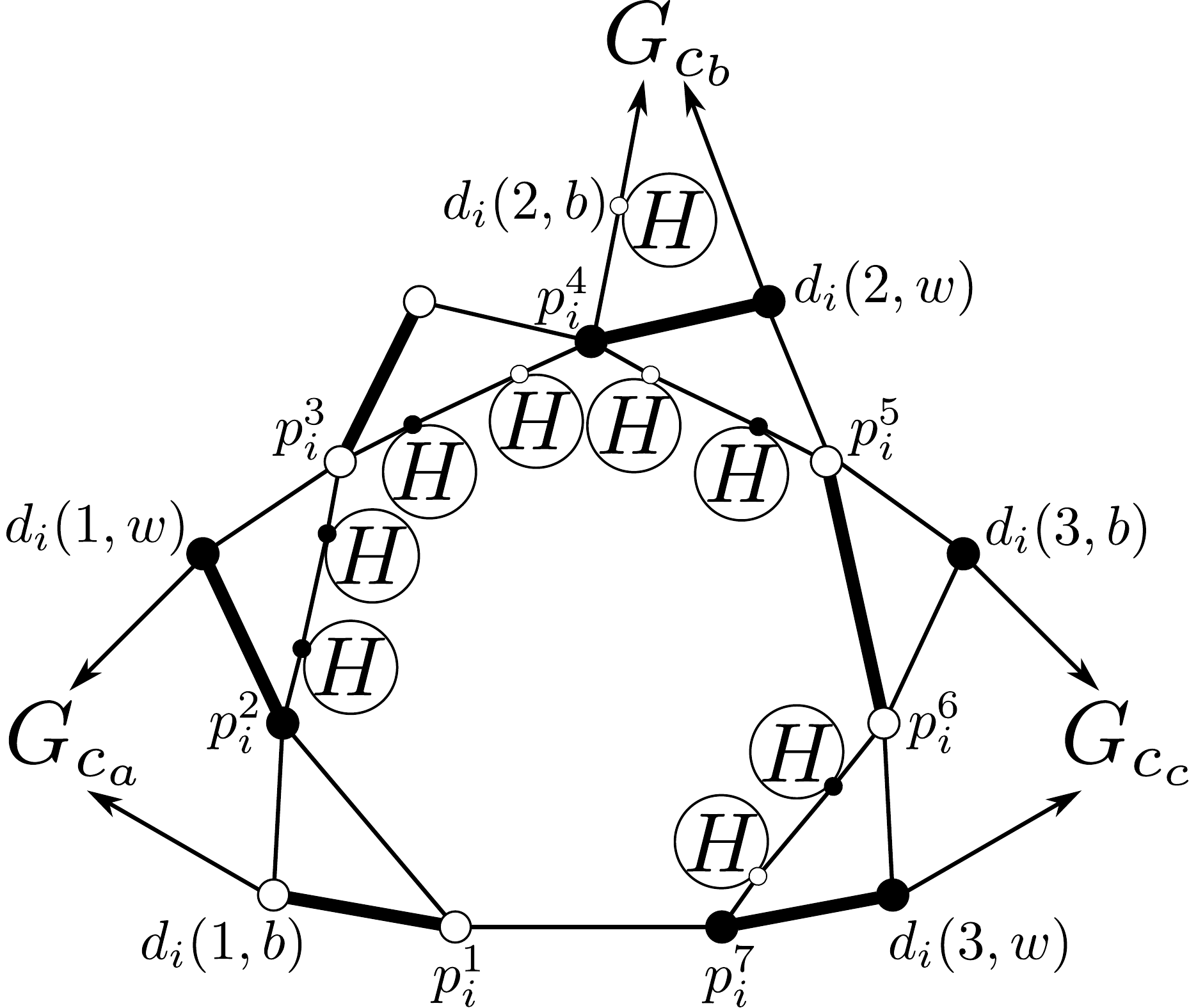}
			\caption{$p_i^5p_i^6 \in M$.}
			\label{fig:VarGadgetMathing2}
		\end{subfigure}
		\caption{All configurations by removing a bipartizing matching~$M$ of a variable gadget.
		}
		\label{fig:VarGadgetMathing}
	\end{figure}
	
	\noindent \textbf{Proof of Theorem~\ref{thm:np-planar4}.}
	Let~$G$ be the graph obtained by the construction in Theorem~\ref{thm:NPCompDeg5}.
	Since the only vertices of degree~5 are those~$p^4_i$ in the variable gadgets,
	we slightly modify the variable gadget as in Figure~\ref{fig:VarGadgetNP4}.
	In Figure~\ref{fig:Head} we can see that vertex~$h_6$ has degree~$3$, which allows us to use it to connect the variable gadget to the clause one.
	Figure~\ref{fig:VarGadgetNP4Match1} and Figure~\ref{fig:VarGadgetNP4Match2} show the possible bipartizing matchings of the new variable gadget.
	Since such configurations are analogous to those of the original variable gadget, with respect to the vertex that connect to clause gadgets, the theorem follows.
	
	\begin{figure}[h]
		\begin{subfigure}[b]{.31\textwidth}
			\centering
			\includegraphics[width=\textwidth]{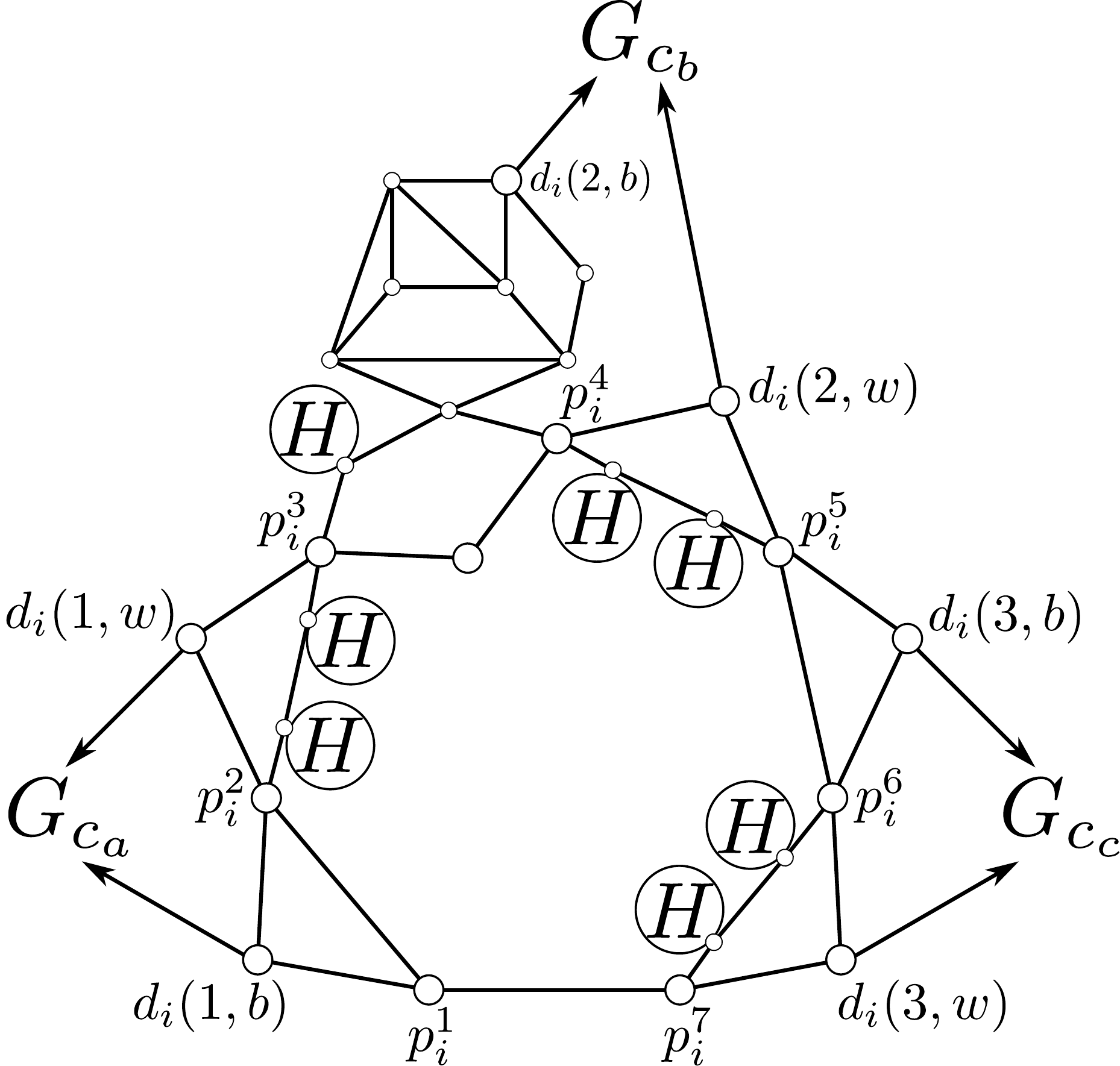}
			\caption{Modified variable gadget.}
			\label{fig:VarGadgetNP4}
		\end{subfigure}
		\begin{subfigure}[b]{.31\textwidth}
			\centering
			\includegraphics[width=\textwidth]{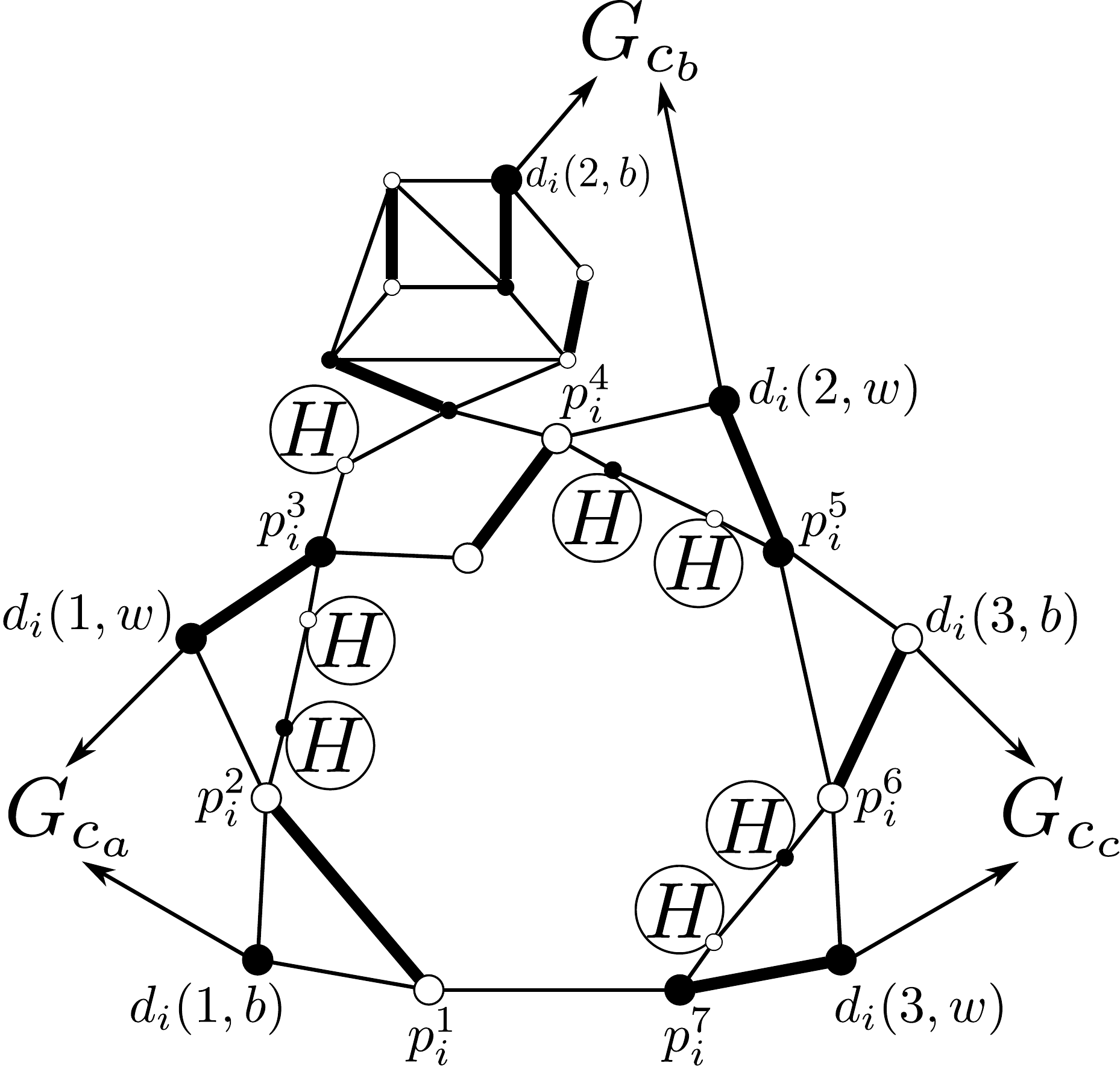}
			\caption{$p_i^1p_i^2 \in M$.}
			\label{fig:VarGadgetNP4Match1}
		\end{subfigure}
		\begin{subfigure}[b]{.31\textwidth}
			\centering
			\includegraphics[width=\textwidth]{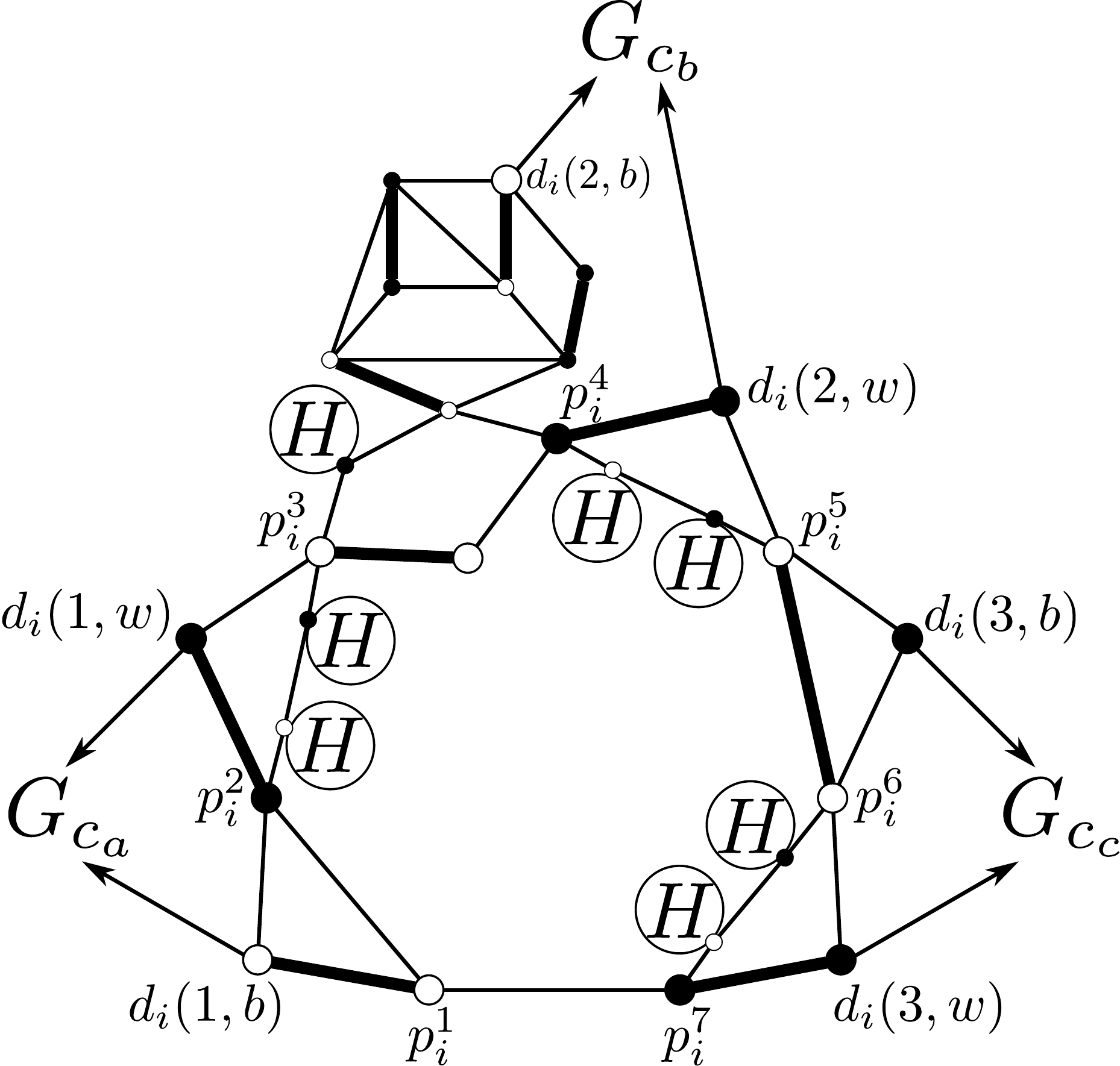}
			\caption{$p_i^5p_i^6 \in M$.}
			\label{fig:VarGadgetNP4Match2}
		\end{subfigure}
		\caption{The modified variable gadget and its all configurations given by removing a bipartizing matching~$M$.
		}
		\label{fig:MVarGadgetMathing}
	\end{figure}
	
	\section{Positive Results}
	\label{sec:PolyResults}

	\subsection{Graphs having Bounded Dominating Sets.}
	\label{subsec:BoundDomSet}
	
	A \textit{dominating set}~$S \subseteq V(G)$ of a graph~$G=(V,E)$ is a vertex set such that all vertices of~$V(G) \setminus S$ has a neighbor in~$S$.
	The cardinality of a minimum dominating set of~$G$ is the \textit{domination number of~$G$}.
	Now, consider that the domination number of the input graph~$G$ is bounded by a constant~$k$.
	
	\begin{proof}[Theorem~\ref{thm:polynomial}\textnormal{(a)}]
		A dominating set of order at most~$k$ can be found in time~$O(n^{k+2})$, by enumerating each vertex subset
		of size~$k$ and checking in time~$O(n(G)+m(G))$ whether it is a dominating set or not.
		Let~$D$ be such a dominating set of~$G$ of order at most~$k$.
		Let~$\mathcal{P_D}$ be the set of all bipartitions~$P_D$ of~$D$ into sets~$A_D$ and~$B_D$,
		such that~$G[A_D]$ and~$G[B_D]$ do not have any vertex of degree~2.
		Note that~$|\mathcal{P_D}| = O(2^k)$.
		
		Let~$P_D \in \mathcal{P_D}$ be a bipartition of~$D$.
		We partition all the other vertices of~$V(G)\setminus V(D)$ in such a way that~$P_D$ defines a bipartition
		of~$G-M_D$, if one exists, where~$M_D$ is a matching that will be removed, given the choice of~$D$.
		We do the following tests and operations for each vertex~$v \in V(G)\setminus V(D)$:
		\begin{itemize}
			\item If~$d_{G[A_D \cup \{v\}]}(v) \geq 2$ and~$d_{G[B_D]}(v) \geq 2$, then~$P_D$ is not a valid partition;
			\item If~$d_{G[A_D \cup \{v\}]}(v) \geq 2$, then~$B_D \gets B_D \cup \{v\}$;
			\item If~$d_{G[B_D \cup \{v\}]}(v) \geq 2$, then~$A_D \gets A_D \cup \{v\}$.
		\end{itemize}
		
		Iteratively we allocate the vertices of~$V(G) \setminus V(D)$ as described above into the
		respective sets~$A_D$ e~$B_D$, or we stop if it is not possible to acquire a valid bipartition.
		After these operations, set~$V' = V(G) \setminus \{A_D \cup B_D\}$ can be partitioned into three sets:
		\begin{itemize}
			\item $X = \{u \in V' :d_{G[A_D \cup \{u\}]}(u)=1 \mbox{ and } d_{G[B_D \cup \{u\}]}(u)=0\}$;
			\item $Y = \{u \in V' :d_{G[A_D \cup \{u\}]}(u)=0 \mbox{ and } d_{G[B_D \cup \{u\}]}(u)=1\}$;
			\item $Z = \{u \in V' :d_{G[A_D \cup \{u\}]}(u)=1 \mbox{ and } d_{G[B_D \cup \{u\}]}(u)=1\}$;
		\end{itemize}
		
		Since every vertex in~$V(G)\setminus V(D)$ has a neighbor in~$D$, it follows that the neighborhood of all the vertices
		of~$V'' = X\cup Y \cup Z$ in~$A_D \cup B_D$ is in~$D$.
		In this way, we can make a choice of a matching~$M_D$ to be removed, where the vertices of~$V''$
		are allocated either in~$A_D$ or in~$B_D$, and~$G-M_D$ is bipartite.
		Since each vertex of~$D$ can be matched to at most one vertex of~$V''$, there are~$O\left((n-k)^k\right)$ possibilities of choices for~$M_D$.
		
		Hence we obtain the following complexity:
		\[O\left(\sum_{i=1}^{k} n^{i+2} \cdot 2^i \cdot (n-i)^i\right) = O\left(k \cdot 2^k \cdot n^{k+2} \cdot (n-k)^k\right) = O\left(n^{2k+2}\right). \pushQED{\qed} \qedhere\]
	\end{proof}
	
	Theorem~\ref{thm:polynomial}(a) allows us to prove Theorem~\ref{thm:polynomial}(b).
	\begin{proof}[Theorem~\ref{thm:polynomial}\textnormal{(b)}]
		Every connected $P_5$-free graph has a dominating clique or a dominating $P_3$~\cite{Camby2016}, and graphs in~$\mathcal{BM}$ do not admit~$K_5$ as a subgraph.
		Thus, $P_5$-free graphs in~$\mathcal{BM}$ have domination number at most four.
	\end{proof}

	\subsection{Graphs with Only Triangles as Odd Cycles.}
	
	Consider now a slightly general version of \BM, where some edges are
	forbidden to be in any bipartizing matching.
	
	\begin{flushleft}
		\fbox{
			\begin{minipage}{.95\textwidth}
				\noindent {\sc Allowed Bipartizing Matching (ABM)}\\
				{\bf Instance:} A graph~$G$ and a set~$F$ of edges of~$G$. \\
				{\bf Task:} Decide whether~$G$ has a bipartizing matching~$M$ that does not intersect~$F$, and determine such a matching if it exists.
		\end{minipage}}
	\end{flushleft}
	
	A matching~$M$ as in {\sc ABM} is called an \textit{allowed bipartizing matching} of~$(G,F)$.
	
	We may clearly assume~$G$ as connected and bridge-free.
	Moreover, note that if~$(G,F)$ has an allowed bipartizing matching, then~$G \in \BM$.
	
	\begin{proof}[Theorem~\ref{thm:polynomial}\textnormal{(c)}]
		Let~$G$ be a graph having no $C_{2k+1}$, for~$k>1$, and let~$F \subseteq E(G)$.
		
		First, consider~$G$ a non-bipartite graph with no cut vertex, and let~$v_1v_2v_3v_1$ be an triangle of~$G$.
		Without loss of generality, we can assume that there is a vertex, say $v_1$, such that~$\{v_1v_2, v_1v_3\}$ is not an edge cut, otherwise $G$ would be a triangle.
		Then~$G-\{v_1v_2, v_1v_3\}$ has a path~$P$ from~$v_1$ to~$\{v_2, v_3\}$.
		Consider~$P$ as a longest one of length at least~$2$, and let~$v_2$ be the first vertex reached by~$P$ between~$v_2$ and~$v_3$.
		Thus~$P$ must be of the form~$v_1uv_2$, otherwise either~$G[V(P) \cup\{v_3\}]$ or~$G[V(P)]$ contains an odd cycle of length at least~$5$,
		when~$P$ has either an even or odd number of vertices, respectively.
		
		If~$w \in V(G)$ has exactly one neighbor~$z \in \{v_1,v_2,v_3,u\}$, then~$w$ is in a path~$P'$ of length at least
		two between~$z$ and~$z' \in \{v_1,v_2,v_3,u\}$, $z \neq z'$.
		Hence~$P' \cup \{v_1,v_2,v_3,u\}$ contains an odd cycle of length at least~5.
		Hence consider that~$w$ has at least two neighbors in~$\{v_1,v_2,v_3,u\}$.
		If~$uv_3 \in E(G)$, then~$G[\{v_1,v_2,v_3,u\}]$ is a~$K_4$, which implies that~$V(G) = \{v_1,v_2,v_3,u\}$, since~$\{w,v_1,v_2,v_3,u\}$
		induces an odd cycle.
		Thus~$G$ has an allowed bipartizing matching if and only if a maximal matching of~$G$ no intersecting~$F$.
		Otherwise if~$uv_3 \notin E(G)$, then~$w$ must be adjacent to either~$u$ and~$v_3$ or to~$v_1$ and~$v_2$, and no other vertex in~$\{v_1,v_2,v_3,u\}$.
		Moreover, the vertices adjacent to both~$u$ and~$v_3$ induce an independent set of~$G$, as well as the vertices	adjacent to both~$v_1$ and~$v_2$.
		In this case we can see that~$G$ has an allowed bipartizing matching if and only if $v_1v_2 \notin F$.
		
		Now, we consider a block decomposition of~$G$ with block-cut tree~$T$.
		Let~$B$ be a block containing exactly one cut-vertex~$v$, that is, $B$ is a leaf in~$T$.
		If~$(B, F)$ has an allowed bipartizing matching which is not incident to~$v$, then~$(G,F)$ has an allowed bipartizing matching if
		and only if~$(G^{'}, F)$ also admits one, where $G^{'} = \left(\left(V(G)\setminus V(B)\right) \cup\{v\}, E(G)\setminus E(B)\right)$. 
		Otherwise, $(G,F)$ has an allowed bipartizing matching if and only if~$(B, F)$ and $(G^{''},F^{''})$ admit allowed bipartizing matchings,
		where~$G^{''} = \left(\left(V(G)\setminus V(B)\right) \cup\{v\}, E(G)\setminus E(B)\right)$ and~$F^{''} = F \cup \{uv~|~u\in N_{G^{''}}(v)\}$.
		
		As in a block the desired matchings can be found, if any exists, in polynomial time, it is easy to see that we can solve {\sc ABM} in polynomial time.
	\end{proof}
	
	\section{Fixed-Parameter Tractability}
	\label{sec:FixedTract}
	
	In this section, we consider the parameterized complexity of \BM, and present an analysis of its complexity when parameterized by some classical parameters.	
	\begin{definition}
		The \textit{clique-width} of a graph~$G$, denoted by~$cwd(G)$, is defined as the minimum number of labels needed to construct~$G$, using the following four operations~\cite{Br05}:		
		\begin{enumerate}
			\item Create a single vertex~$v$ with an integer label~$\ell$ (denoted by~$\ell(v)$);
			\item Disjoint union of two graphs (i.e. co-join) (denoted by~$\oplus$);
			\item Join by an edge every vertex labeled~$i$ to every vertex labeled~$j$ for~$i \neq j$ (denoted by~$\eta(i,j)$);
			\item Relabeling all vertices with label~$i$ by label~$j$ (denoted by~$\rho(i,j)$).
		\end{enumerate}
	\end{definition}
	
	Courcelle et al.~\cite{Co00} stated that for any graph~$G$ with clique-width bounded by a constant~$k$, and for each graph property~$\Pi$ that can be formulated in a {\em  monadic second order logic}~($MSOL_1$), there is an~$O(f(cwd(G))\cdot n)$ algorithm that decides if $G$ satisfies $\Pi$~\cite{C90,C93,C97,Co00,Co11}.
	In~$MSOL_1$, the graph is described by a set of vertices~$V$ and a binary adjacency relation~$edge(.,.)$, and the graph property in question may be defined in terms of sets of vertices of the given graph, but not in terms of sets of edges.
	Therefore, in order to show the fixed-parameter tractability of \BM when parameterized by clique-width, it remains to show that the related property is $MSOL_1$-expressible.
	
	\begin{proof}[Theorem~\ref{thm:cwd}]
		Remind that the problem of determining whether~$G$ admits an odd decycling matching is equivalent to determine whether~$G$ admits an~$(2,1)$-coloring, which is a~$2$-coloring.
		Thus, it is enough to show that the property ``$G$ has an~$(2,1)$-coloring'' is $MSOL_1$-expressible.
		
		We construct a formula~$\varphi(G)$ such that~$G\in \BM \Leftrightarrow \varphi(G)$ as follows:
		\begin{align*}
		\exists~S_1, S_2 \subseteq V(G)~:
			& ~(S_1~\cap~S_2~=~\emptyset)~\wedge \\
			& ~(S_1~\cup~S_2~=~V(G))~\wedge \\
			& ~(\forall~v_1\in~S_1 [~\nexists~u_1,w_1\in S_1:~ (u_1\neq w_1) \wedge~edge(u_1,v_1) \wedge~edge(w_1,v_1)]) ~\wedge\\
			& ~(\forall~v_2\in~S_2 [~\nexists~u_2,w_2\in S_2 :~(u_2\neq w_2) \wedge~edge(u_2,v_2) \wedge~edge(w_2,v_2)]). \pushQED{\qed} \qedhere
		\end{align*}
	\end{proof}		
	Since clique-width generalizes several graph parameters~\cite{L12}, we obtain the following corollary.
	\begin{corollary}\label{cor:parameterized}
		{\sc Odd Decycling Matching} is in {\sf FPT} when parameterized by the following parameters:
		\begin{itemize}
			\item neighborhood diversity;
			\item treewidth;
			\item pathwidth;
			\item feedback vertex set;
			\item vertex cover.
		\end{itemize}
	\end{corollary}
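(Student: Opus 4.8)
The plan is to derive Corollary~\ref{cor:parameterized} directly from Theorem~\ref{thm:cwd} by showing that, for each listed parameter $p(G)$, there is a computable function $h$ with $cwd(G) \le h(p(G))$, together with a way of producing a corresponding clique-width expression within an FPT time budget; composing $h$ with the function from Theorem~\ref{thm:cwd} then yields an $f(p(G))\cdot n$-time algorithm for {\sc BM}.

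First I would recall the standard hierarchy of width parameters in order to reduce most cases to treewidth. Pathwidth dominates treewidth, $tw(G)\le pw(G)$, so every treewidth bound transfers to pathwidth. If $G$ has a feedback vertex set of size $k$, deleting it leaves a forest; re-inserting those $k$ vertices into every bag of an (optimal, width $\le 1$) tree decomposition of the forest gives a tree decomposition of $G$ of width at most $k+1$, so bounded feedback vertex number implies bounded treewidth. A vertex cover of size $k$ likewise gives $tw(G)\le pw(G)\le k$. Hence pathwidth, feedback vertex set, and vertex cover all reduce to the treewidth case, and for treewidth one invokes the classical inequality $cwd(G)\le 3\cdot 2^{tw(G)-1}$ of Corneil and Rotics. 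For neighborhood diversity $nd(G)=d$, contracting each of the $d$ twin classes to a single labelled vertex and applying a bounded number of disjoint-union, join, and relabel operations yields $cwd(G)\le d+1$ (indeed neighborhood diversity bounds even more restrictive parameters). In each case the bound $h$ is explicit, so Theorem~\ref{thm:cwd} gives the desired FPT algorithm.

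The one point that needs a little care — and what I would flag as the main, though routine, obstacle — is that the $MSOL_1$ model-checking procedure underlying Theorem~\ref{thm:cwd} requires an actual $k$-expression as input, not merely the existence of one. For treewidth this is handled by first computing a tree decomposition of width $O(tw(G))$ in FPT time (e.g.\ via Bodlaender's algorithm) and then translating it bag by bag into a clique-width expression of the width guaranteed above; for vertex cover and feedback vertex set one first computes the corresponding set in FPT time by standard bounded-depth branching and then proceeds through the treewidth translation; for neighborhood diversity the twin partition is computable in polynomial time and the $(d+1)$-expression is written down directly from it. In every case the clique-width expression is obtained within the allotted $f(p(G))\cdot n^{O(1)}$ time, so Theorem~\ref{thm:cwd} applies and the corollary follows.
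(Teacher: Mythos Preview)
Your proposal is correct and follows essentially the same approach as the paper: the paper simply states, citing~\cite{L12}, that ``clique-width generalizes several graph parameters'' and then lists the corollary without further argument. Your write-up supplies the explicit chain of inequalities (vertex cover, feedback vertex set, and pathwidth bounding treewidth; treewidth bounding clique-width via Corneil--Rotics; neighborhood diversity bounding clique-width directly) and also addresses the constructive issue of actually producing a $k$-expression, which the paper leaves entirely implicit.
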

	
	Since~$K_5$ is a forbidden subgraph, chordal graphs in~$\mathcal{BM}$ have bounded treewidth~\cite{rs86}, and thus, \BM is polynomial-time solvable for this class and Corollary~\ref{cor:chordal} follows.
	
	Courcelle's theorem is a good classification tool, however it does not provide a precise {\sf FPT}-running time.
	The next result shows the exact upper bound for \BM parameterized by the \textit{vertex cover number}~$vc(G)$ of~$G$.
	A \textit{vertex cover}~$S \subseteq V(G)$ of~$G$ is a vertex subset such that~$G[V(G)\setminus S]$ has no edge, that is, it is an independent set.
	The vertex cover number is the cardinality of a minimum vertex cover.
	\begin{theorem}\label{thm:VertexCover}
		\BM admits a~$2^{O(vc(G))}\cdot n(G)$ algorithm.
	\end{theorem}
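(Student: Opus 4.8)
The plan is to exploit the fact that once a minimum vertex cover $S$ is fixed, the independent set $I = V(G) \setminus S$ is ``dominated'' by $S$, so the entire solution is essentially determined by how we split $S$ into the two color classes together with a bounded amount of extra information about matching edges inside $S$. First I would compute a minimum vertex cover $S$ with $|S| = vc(G)$ in time $2^{O(vc(G))} \cdot n$ using the standard bounded search tree algorithm. Then I would guess, over all $2^{|S|}$ choices, a bipartition $S = A \sqcup B$ that is intended to be the trace on $S$ of the final $(2,1)$-coloring, i.e. $A$ goes to color class $1$ and $B$ to color class $2$. For each such guess I additionally guess the matching edges of the bipartizing matching $M$ that lie inside $G[S]$: since $M$ is a matching, at most $|S|/2$ such edges exist, and they form a set of at most $\binom{|S|}{2}$-many candidate pairs, so all submatchings of $G[S]$ can be enumerated in time $2^{O(vc(G))}$ (a submatching is picked by a partition-into-blocks-of-size-$\le 2$ style enumeration, or more crudely by iterating over all $2^{|E(G[S])|} = 2^{O(vc(G)^2)}$ subsets and discarding non-matchings — if a clean $2^{O(vc)}$ bound is wanted one can first observe that in a graph in $\mathcal{BM}$ each vertex of $S$ has bounded degree within $S$, or simply absorb this into the exponent since $vc(G)^2$ still yields $2^{O(vc(G))} \cdot n$ only if we are careful; I would instead enumerate matchings of $G[S]$ directly, of which there are at most $(|S|+1)!! = 2^{O(vc(G)\log vc(G))}$, and note this is still $2^{O(vc(G))}$ after a mild degeneracy argument, or accept the slightly weaker but still-FPT bound). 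Having fixed $(A,B)$ and the internal matching $M_S \subseteq E(G[S])$, the constraint on the coloring restricted to $S$ is checkable in $O(|S|^2)$ time: each vertex of $A$ must have at most one $A$-neighbor in $S \setminus (\text{those matched into } I)$, and similarly for $B$, accounting for the edges of $M_S$ already ``used up.''

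Next I would handle the independent set $I$. Each vertex $v \in I$ has all its neighbors in $S$, and its behavior is local: once $(A,B)$ is fixed, whether we may place $v$ in color class $1$ is governed solely by $N(v) \cap A$ (it is allowed, possibly after matching $v$ to exactly one vertex of $N(v) \cap A$, iff $|N(v) \cap A| \le 1$, or $|N(v)\cap A|\ge 2$ is impossible), and symmetrically for color class $2$ via $N(v) \cap B$. The only global coupling is through the matching: a vertex $u \in A$ (resp. $B$) can ``absorb'' at most one neighbor $v \in I$ that is monochromatic with it, and those absorptions, together with $M_S$, must jointly form a matching. Thus the residual problem is: assign each $v \in I$ to class $1$ or class $2$ (only classes for which it has $\le 1$ same-colored neighbor in $S$ are legal, and a class with exactly one such neighbor requires that neighbor to be the matching partner of $v$), such that no vertex of $S$ is the matching partner of two different $v$'s and no vertex of $S$ already matched inside $M_S$ is reused. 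This is a bipartite $b$-matching / assignment feasibility problem — more concretely, partition $I$ into $I_1$ (freely placeable in class $1$, no edge needed), $I_2$ (freely placeable in class $2$), $I_{12}$ (freely placeable in either), and the vertices requiring a matching edge, and then check feasibility of a matching from the demanding vertices into the still-available vertices of $S$. This can be done in polynomial time in $n$ (e.g.\ via Hall's theorem or a single max-flow computation), but since we only need it once per guess and want the running time $2^{O(vc(G))} \cdot n$, I would instead observe that the demanding vertices of $I$ each demand a specific vertex of $S$, so feasibility reduces to: for each $u \in S$, at most one demanding $v \in I$ names $u$, and $u \notin V(M_S)$; and independently, for the freely placeable vertices there is nothing to check. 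This is a linear-time scan. Summing over the $2^{|S|} \cdot 2^{O(|S|)}$ guesses, each processed in $O(n + |S|^2)$ time, gives the claimed $2^{O(vc(G))} \cdot n(G)$ bound.

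The main obstacle I expect is the bookkeeping around the matching: it is tempting to think that after fixing $(A,B)$ the rest is trivial, but the fact that a single vertex of $S$ has limited capacity to serve as a matching partner — it can be matched either to another vertex of $S$ (inside $M_S$) or to at most one vertex of $I$, but not both and not two of the latter — means the guess must carry enough state to resolve these conflicts without a second exponential blowup. The clean way to control this is the observation above that the ``trace'' $(A, B, M_S)$ on $S$ has size $2^{O(vc(G))}$ and that everything on $I$ then decouples per-vertex up to these capacity constraints on $S$, which are themselves checkable greedily. A secondary subtlety is ensuring the internal-matching enumeration stays within $2^{O(vc(G))}$: if one insists on a truly linear dependence in the exponent rather than $vc(G)\log vc(G)$, one uses that any graph in $\mathcal{BM}$ is $K_5$-free and $W_5$-free (Lemma~\ref{lemma1}), hence $G[S]$ has bounded clique number and one can argue its matchings are enumerable in $2^{O(vc(G))}$; but for the statement as written, $2^{O(vc(G))}$ with an unspecified constant in the $O(\cdot)$ already subsumes the $vc(G)\log vc(G)$ term up to adjusting the algorithm, so this is a minor point.
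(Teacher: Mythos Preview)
Your overall strategy---enumerate bipartitions of a minimum vertex cover $S$ and then process the independent set $I = V(G)\setminus S$---matches the paper's, but two points need correction. First (minor): guessing $M_S$ separately is redundant, since once $S = A \sqcup B$ is fixed the monochromatic edges inside $G[S]$ are exactly $E(G[A]) \cup E(G[B])$; validity just means $G[A]$ and $G[B]$ have maximum degree at most~$1$, so your enumeration-of-matchings worry (and the $2^{O(vc^2)}$ versus $(vc)!!$ discussion) disappears. Second, and this is the real gap: your linear-time feasibility check for $I$ is wrong. The claim that ``the demanding vertices of $I$ each demand a specific vertex of $S$'' fails for the vertices $z \in I$ with $|N(z)\cap A| = |N(z)\cap B| = 1$. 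Such a $z$ may go to either class but must then be matched to its unique neighbour on that side, so it demands \emph{one of two} vertices of $S$, not a specific one. These binary choices are coupled through the capacity-$1$ constraints on $S$ (each $u\in S$ can absorb at most one such $z$, and none if $u$ already has a monochromatic neighbour in $S$), so the residual problem is a genuine orientation/matching instance, not a per-vertex scan; a generic flow or matching routine here gives only $2^{O(vc)}\cdot \mathrm{poly}(n)$, not $2^{O(vc)}\cdot n$.

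The paper handles exactly this set (called $Z$ there) by a second bounded search tree after the bipartition guess: pick any $z\in Z$ and branch on placing $z$ in $A$ (which saturates its $A$-neighbour $a_z$ and forces every $z'\in Z$ with $a_{z'}=a_z$ into $B$) or in $B$ (symmetrically saturating $b_z$). Each branch saturates a new vertex of $S$, so the tree has depth at most $vc(G)$ and hence at most $2^{vc(G)}$ leaves, with $O(n)$ work per node. Combined with the $2^{vc(G)}$ initial bipartitions this yields the stated $2^{O(vc(G))}\cdot n(G)$ bound.
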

	\begin{proof}
		Let~$S$ be a vertex cover of~$G$ such that~$|S| = vc(G)$.
		The algorithm follows in a similar way to the algorithm in the proof of Theorem~\ref{thm:polynomial}\textnormal{(a)}.
		Let~$\mathcal{P_S}$ be the set of all bipartitions~$P_S$ of~$S$ into sets~$A_S$ and~$B_S$, such that~$S[A_S]$ and~$S[B_S]$ do not have any vertex of degree~2.
		Note that~$|\mathcal{P_S}| = O(2^k)$.
		
		For each~$P_S \in \mathcal{P_S}$, we will check if an odd decycling matching of~$G$ can be obtained from~$P_S$ by applying the following operations:
		
		For each vertex~$v \in V(G)\setminus V(S)$ do
		\begin{itemize}
			\item If~$d_{A_S}(v) \geq 2$ and~$d_{B_S}(v) \geq 2$, then~$P_S$ is not a valid partial partition;
			\item If~$d_{A_S}(v) \geq 2$, then~$A_S \gets A_S \cup \{v\}$;
			\item If~$d_{B_S}(v) \geq 2$, then~$B_S \gets B_S \cup \{v\}$.
		\end{itemize}		
		After that, if for all vertices the first condition is not true, then $V(G) \setminus V(A_S \cup B_S)$ can be partitioned into three sets:
		\begin{itemize}
			\item $X = \{u \in V(G) \setminus V(A_S \cup B_S):d_{A_S}(u)=1 \mbox{ and } d_{B_S}(u)=0\}$;
			\item $Y = \{u \in V(G) \setminus V(A_S \cup B_S):d_{A_S}(u)=0 \mbox{ and } d_{B_S}(u)=1\}$;
			\item $Z = \{u \in V(G) \setminus V(A_S \cup B_S):d_{A_S}(u)=1 \mbox{ and } d_{B_S}(u)=1\}$;
		\end{itemize}		
		Since~$V(G)\setminus V(S)$ is an independent set, it follows that all edges of vertices in~$X \cup Y$ can remain in the graph~$G$.
		For each~$z \in Z$, denote by $a_z\in A_S$ and~$b_z \in B_S$ the neighbors of~$z$ in~$G$.
		
		Now, we apply a bounded search tree algorithm.
		While~$G[A_S]$ and~$G[B_S]$ have both maximum degree equal to one, and~$Z\neq \emptyset$ do.
		Remove a vertex~$z\in Z$ and apply recursively the algorithm for the following cases:
		\begin{enumerate}
			\item $z$ is added to $A_S$, and all vertices in $Z\cap N(a_z)$ is added to $B_S$.
			\item $z$ is added to $B_S$, and all vertices in $Z\cap N(b_z)$ is added to $A_S$.
		\end{enumerate}
		Note that the search tree~$T$ has height equals~$vc(G)+1$.
		Finally, if~$T$ has a leaf representing a configuration with~$G[A_S]$ and~$G[B_S]$ having both maximum degree equal to one, and~$Z = \emptyset$, then~$G \in \mathcal{BM}$.
	\end{proof}
	
	Now we analyze the parameterized complexity of \BM considering the \textit{neighborhood diversity number}, $nd(G)$, as parameter.	
	\begin{definition}
		A graph~$G(V,E)$ has neighborhood diversity~$nd(G)=t$ if we can partition~$V$ into~$t$ sets~$V_1, \dotsc, V_t$ such that, for every~$v\in V$ and all~$i\in {1, \dotsc, t}$, either~$v$ is adjacent to every vertex in~$V_i$ or it is adjacent to none of them.
		Note that each part~$V_i$ of~$G$ is either a clique or an independent set.
	\end{definition}	
	The neighborhood diversity parameter is a natural generalization of the vertex cover number.
	In 2012, Lampis~\cite{L12} showed that for every graph~$G$ we have~$nd(G)\leq 2^{vc(G)}+vc(G)$.
	The optimal neighborhood diversity decomposition of a graph~$G$ can be computed in~$\mathcal{O}(n^3)$ time~\cite{L12}.
	
	\begin{theorem}\label{thm:NeighborDivers}
		\BM admits a kernel with at most~$2\cdot nd(G)$ vertices when parameterized by neighborhood diversity number.
	\end{theorem}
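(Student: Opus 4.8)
The plan is to use the neighborhood‑diversity partition as a blueprint for a kernelization that keeps only a bounded number of vertices in each part. First I would compute, in $O(n^3)$ time, an optimal decomposition $V_1,\dots,V_t$ with $t=nd(G)$; recall that each $V_i$ is a clique or an independent set and that all vertices of $V_i$ have the same neighbourhood $S_i:=N_G(V_i)\setminus V_i$ outside $V_i$. The goal is a set of polynomial‑time reduction rules such that, once none applies, every part contains at most two vertices; then the reduced graph $G'$ has $|V(G')|\le 2t=2\cdot nd(G)$, and each rule will be shown to preserve membership in $\mathcal{BM}$.

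The engine is the following observation about how a $(2,1)$‑colouring behaves on a part and its boundary. Since every colour class induces a graph of maximum degree at most $1$: (i)~inside a clique part each colour class has at most two vertices, so $|V_i|\le 4$, and a clique part with $|V_i|\ge 5$ is an obstruction (this is just the $K_5$‑freeness remarked after Lemma~\ref{lemma1}); (ii)~every $u\in S_i$ shares its colour with at most one vertex of $V_i$; (iii)~if $V_i$ is an independent set with $|V_i|\ge 3$, some colour occurs on at least two of its vertices, whence by (ii) $S_i$ must be monochromatic in the \emph{other} colour, and a one‑line check then shows that either $V_i$ is entirely monochromatic or $|S_i|\le 1$. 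Thus a ``large'' part is rigid: it either certifies a no‑instance, or is a trivially colourable component (e.g.\ a $K_3$‑ or $K_4$‑component, which by (i)--(ii) is exactly what a clique part with $S_i=\emptyset$ or $|V_i|=4$ looks like), or it forces its whole boundary to be one colour, or it communicates with the rest of $G$ through at most one vertex. Lemma~\ref{lemma1} supplies the extra facts needed for the few small cases (triangles with $|S_i|\ge 2$, diamonds, wheels, odd pools).

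From this classification I would extract the rules: (R1) if some clique part has $\ge 5$ vertices, or a small part sits in a configuration ruled out by (i)--(iii) or Lemma~\ref{lemma1}, output a fixed no‑instance; (R2) delete any part that forms a component trivially in $\mathcal{BM}$, which changes neither side of the equivalence and removes a part; (R3) otherwise replace a part $V_i$ with $|V_i|\ge 3$ by an equivalent gadget on at most two vertices reproducing its effect on $G-V_i$: when $|S_i|\le 1$ the part is essentially a star or isolated and two non‑adjacent copies suffice, and when $|S_i|\ge 2$ the large part's only role is to force $S_i$ monochromatic, which two copies attached to $S_i$ (with, in the boundary case $|S_i|=2$, the extra edge or the forbidden‑edge bookkeeping available from the block/triangle analysis of Section~\ref{sec:PolyResults}) reproduce without side effects. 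Each rule is clearly polynomial‑time, and since (R1)--(R3) either decide the instance or strictly shrink a part, the process terminates with all parts of size at most two, giving the claimed bound.

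The substance of the proof is correctness of each replacement in (R3). The forward direction is free, since $\mathcal{BM}$ is closed under taking subgraphs, so the real work is \emph{lifting}: from a $(2,1)$‑colouring of the reduced graph, reconstruct one of $G$ by recolouring the deleted vertices of $V_i$. Here the classification above is exactly what is needed — it forces the colours on $S_i$ so that all deleted vertices of a large independent part can be given the majority colour, and it pins down the small‑$|S_i|$ cases. I expect the crux, and the reason the gadget cannot be chosen naively, to be ruling out the ``spurious'' colourings of the bounded gadget that do not descend from any colouring of $G$ (the worry is precisely that a two‑vertex substitute for a large independent part could be $2$‑coloured non‑monochromatically even though the original part could not); showing this is impossible for the chosen gadget, especially when $|S_i|=2$, is the main obstacle. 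With that in hand, the size bound $|V(G')|\le 2\,nd(G)$ and the polynomial running time are immediate.
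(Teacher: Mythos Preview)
Your plan has a genuine gap exactly where you yourself flag ``the main obstacle''. For a large independent part $V_i$ with $|S_i|\ge 2$, your proposed two-vertex substitute does \emph{not} reproduce the monochromaticity forcing: colour the two gadget vertices with different colours, and each $u\in S_i$ now has one gadget neighbour of each colour, so the gadget imposes no constraint on the colouring of $S_i$. Hence the reduced graph can be $(2,1)$-colourable while $G$ is not, and the backward implication of (R3) fails. You anticipate this spurious-colouring problem but do not solve it, and in fact no gadget on at most two vertices with unconstrained edges can simulate a $\ge 3$-vertex independent part for this reason; making the two gadget vertices adjacent does not help either.

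The paper's fix is the point you only mention in passing: it kernelizes the \emph{generalization} {\sc ABM}, where a set $F$ of edges is forbidden from the matching, and uses $F$ as the central device rather than a local patch. A large independent part is contracted to a \emph{single} vertex $v_i$ with \emph{all} its incident edges placed in $F$; then in any allowed $(2,1)$-colouring $v_i$ has no monochromatic incident edge, so $N(v_i)=S_i$ is forced into the opposite colour, exactly as in $G$, and the lift is immediate. Likewise a triangle part with a unique outside neighbour $v$ is deleted and all remaining edges at $v$ go into $F$ (encoding that $v$ is already matched inside the triangle). With these rules every surviving part has size at most two, giving $|V(G')|\le 2\,nd(G)$. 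Your structural observations (i)--(iii) are correct and match the paper's case analysis, but without committing to the forbidden-edge framework throughout, rule (R3) as stated cannot be made sound.
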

	\begin{proof}
		Given an instance~$(G,F)$ of {\sc ABM} such that~$G$ is a graph and~$F\subseteq E(G)$ a set of forbidden edges.
		The kernelization algorithm consists on applying the following reduction rules:
		\begin{enumerate}
			\item \textit{If~$G$ contains a~$K_5$, then~$G$ has no allowed odd decycling matching; otherwise}
			\item \textit{If a part~$V_i$ induces a~$K_3$ and exist two vertices in~$V(G)\setminus V_i$ adjacent to~$V_i$, then~$G$ has no allowed bipartizing matching; otherwise}
			\item \textit{If a subgraph of $G$ induces either a~$K_3$ or a~$K_4$ and does not admit an allowed bipartizing matching, then~$G$ has no allowed bipartizing matching; otherwise}
			\item \textit{Remove all parts isomorphic to a~$K_4$;}
			\item \textit{Remove all isolated parts isomorphic to a~$K_3$;}
			\item \textit{If~$V_i$ is a part that induces a~$K_3$ and $v \in V(G)\setminus V_i$ is adjacent to~$V_i$ (note that $\{v\}$ is a part), then remove~$V_i$ and~$F \gets F \cup \{uv: u \in N_G(v) \setminus V_i\}$;}
			\item \textit{If a part~$V_i$ induces an independent set of size at least~3, then contract it into a single vertex~$v_i$ (without parallel edges) and forbids all of its incident edges.}
		\end{enumerate}
		It is easy to see that all reduction rules can be applied in polynomial time, and after applying them any remaining part has size at most two.
		As the resulting graph~$G'$ has~$nd(G')\leq nd(G)$ then~$|V(G')|\leq 2\cdot nd(G)$.
		Thus, it remains to prove that the application of each reduction rule is correct.
		As~$K_5$ and~$K_5 - e$ are forbidden subgraphs, and any bipartizing matching of a~$K_4$ is a perfect matching, then rules~$1, 2, 3, 4, 5$, and~$6$ can be applied in this order.
		Finally, the correctness of rule~$7$ follows from the following facts: $(i)$ if~$G'$ has an allowed bipartizing matching, then~$G$ has also an allowed bipartizing matching, because bipartite graph class is closed under the operation of replacing vertices by a set of false twins, which have the same neighborhood as the replaced vertex; $(ii)$ if~$G'$ does not admit an allowed bipartizing matching then~$G$ also does not admit an allowed bipartizing matching, because if a contracted single vertex~$v_i$ is in an odd cycle in~$G'$, then even replacing~$v_i$ by~$V_i$ ($|V_i|\geq 3$) and removing some incident edges of~$V_i$ that form a matching, some vertex of~$V_i$ remains in an odd cycle.
	\end{proof}
	
	\section{On $(2,d)$-Coloring Planar Graphs of Bounded Degree}
	\label{sec:2,d-coloring}
	
	\begin{figure}[t]
		\centering
		\includegraphics[width=0.6\textwidth]{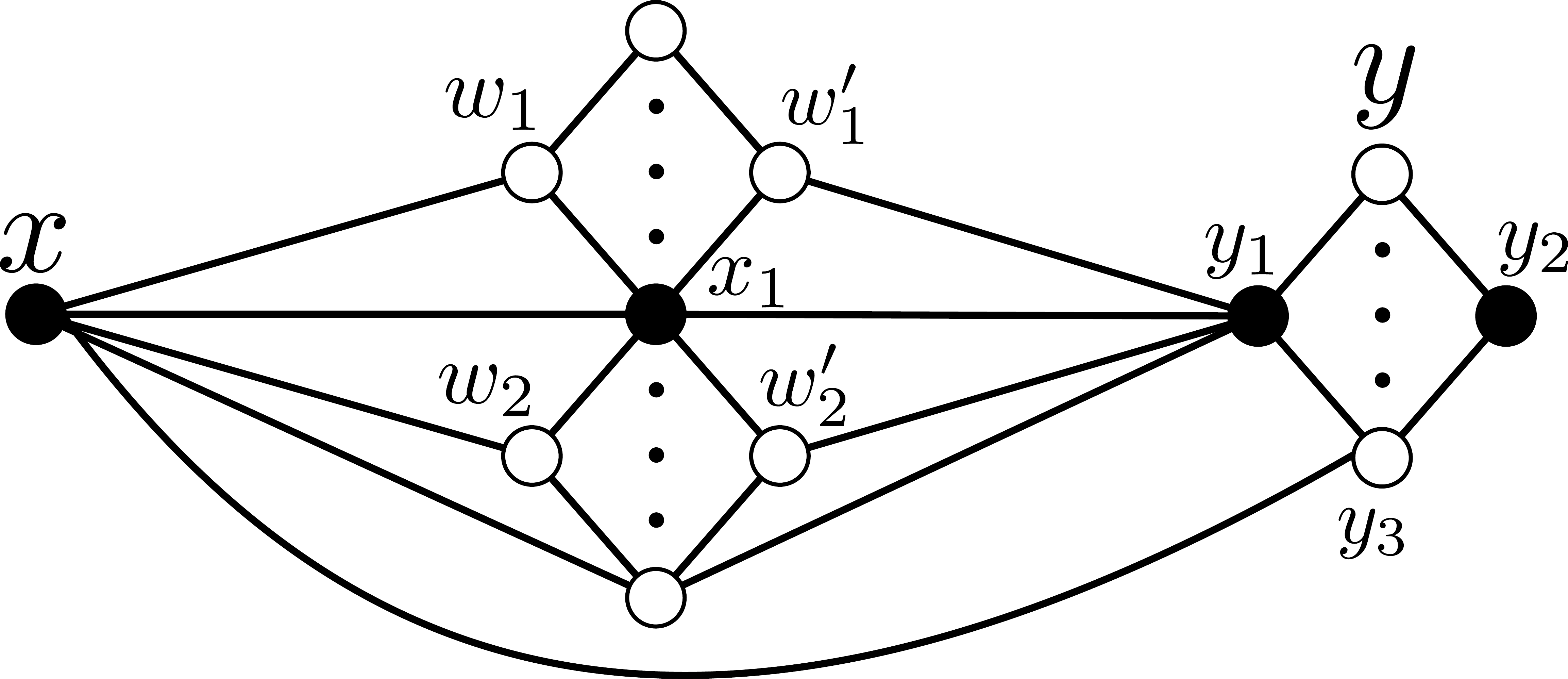}
		\caption{The gadget used in the reduction of Theorem~\ref{thm:2,2-coloring} $(2,2)$-colored with colors black and white.
			The vertices~$x$ and~$y$ must receive different colors in any $(2,2)$-coloring of~$H$.}
		\label{fig:2,2-coloring}
	\end{figure}
	
	In this section we present the proofs of Theorems~\ref{thm:2,2-coloring} and Corollary~\ref{cor:2,3-coloring}.
	The main idea in the proof of Theorem~\ref{thm:2,2-coloring} is to find a planar gadget~$H$ (see Figure~\ref{fig:2,2-coloring}) of maximum degree~6 containing two vertices~$x$ and~$y$	of degree at most~5 and such that they must receive different colors in any~$(2,2)$-coloring of~$H$.
	We prove this on Lemma~\ref{lem:2,2-coloring}.
	
	We complete the proof by a reduction from~$(2,1)$-coloring for a planar graph~$G$ of maximum degree~$4$, that is {\sf NP}-complete by Theorem~\ref{thm:np-planar4}.
	Let~$G'$ be the graph obtained by adding for each vertex~$v$ of~$G$ a gadget~$H_v$ as in Figure~\ref{fig:2,2-coloring}, such that~$v$ is adjacent to only~$x$ and~$y$ in~$H_v$.
	Therefore it follows that~$G'$ is a planar graph of maximum degree~6.
	
	If~$G \in \mathcal{BM}$, then each vertex~$v$ of~$G$ can be colored in such a way that~$v$ has at most one neighbor colored as itself and then~$v$ can have one more neighbor in~$H_v$ with the same color.
	We can complete the~$(2,2)$-coloring of~$G'$ as that one depicted in Figure~\ref{fig:2,2-coloring}, where the black and white vertices define the bipartition.
	Conversely, if~$G'$ admits a~$(2,2)$-coloring~$c$, then each vertex~$v \in V(G)$ has exactly one neighbor in~$H_v$ colored as itself, which implies that~$v$ admits at most one more neighbor in~$G$ colored as the same way.
	Hence~$c$ restricted to the vertices of~$G$ is a~$(2,1)$-coloring of~$G$, and the theorem follows.
	
	Next we present the proof of Lemma~\ref{lem:2,2-coloring}.
	In a first observation, for every pair of vertices~$u, v \in V(G)$ that share at least~$2d+1$ neighbors, it follows that both~$u$ and~$v$ must have the same color in any~$(2,d)$-coloring of~$G$.
	Otherwise, one of them has at least~$d+1$ neighbors with the same color as itself.
	Therefore, by simplicity, we omit the complete neighborhood in Figure~\ref{fig:2,2-coloring} of the vertices~$w_i$, $w_i'$, $y$, and~$y_3$, $i \in \{1,2\}$, where~$w_i$ ($y$) shares~$5$ neighbors with~$w_i'$ ($y_3$).
	We can see that~$w_i$, $w_i'$, $y_1$, $y_3$ and~$x_1$ have degree~$6$, while~$x$ and~$y$ have degree~5.
	
	\begin{lemma}\label{lem:2,2-coloring}
		In any~$(2,2)$-coloring of~$H$, $x$ and~$y$ must have distinct colors.
	\end{lemma}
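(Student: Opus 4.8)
The plan is to argue by contradiction: assume a $(2,2)$-coloring $c$ of $H$ in which $x$ and $y$ receive the same color, say both are black, and derive a contradiction by a local propagation of forced colors through the gadget. The main tool, already noted before the statement, is that any two vertices sharing $2d+1=5$ neighbors must be monochromatic in every $(2,2)$-coloring; this forces $c(w_i)=c(w_i')$ for $i\in\{1,2\}$ and $c(y)=c(y_3)$. So once $y$ is black, $y_3$ is black as well, and already $y_3$ (which has degree $6$) has $y$ as one black neighbor; this leaves $y_3$ room for at most one more black neighbor among its remaining neighbors, which will be the key budget constraint driving the argument.

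First I would pin down, from Figure~\ref{fig:2,2-coloring}, the explicit adjacencies among $x, x_1, y, y_1, y_3, w_1, w_1', w_2, w_2'$ and use the $(2,2)$ degree bound ($\le 2$ same-colored neighbors) at each of the high-degree vertices $w_i, w_i', x_1, y_1, y_3$. The strategy is: starting from $c(x)=c(y)=\text{black}$, the shared-neighbor rule forces the colors of the twin pairs; then at each degree-$6$ vertex with two black neighbors already determined, every further neighbor is forced white; propagating these forced whites to the next layer of degree-$6$ vertices, those in turn accumulate two white neighbors and force their remaining neighbors black; iterating, I expect to reach a vertex (plausibly $x_1$, or one of the $w_i'$) that is compelled to have three neighbors of one color, contradicting the $(2,2)$ constraint. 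Symmetrically, the construction should be such that assuming instead $c(x)=c(y)=\text{white}$ leads to the same contradiction by the same propagation, so the two colors of $\{x,y\}$ genuinely must differ. Finally I would confirm the positive direction is not needed here — Lemma~\ref{lem:2,2-coloring} is only the ``forced different'' half, and the displayed $(2,2)$-coloring in Figure~\ref{fig:2,2-coloring} already witnesses realizability with $x,y$ differently colored — so the lemma is exactly this impossibility statement.

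The main obstacle will be bookkeeping: making the forced-color propagation airtight requires knowing the gadget's edges precisely and checking that at every step the ``two same-colored neighbors already used'' count is genuinely reached before the contradiction, with no escape route where a vertex could legally split its neighbors $2$ black / $\ge$ rest white in a way that halts the cascade. I would handle this by organizing the argument as a short sequence of numbered forcing steps, each citing one vertex and its degree/adjacency, so that the chain $c(x)=c(y) \Rightarrow \text{twins forced} \Rightarrow \text{layer-1 neighbors forced} \Rightarrow \text{layer-2 forced} \Rightarrow$ over-budget vertex is transparent. A secondary subtlety is the omitted neighborhoods of $w_i, w_i', y, y_3$: I must make sure the contradiction is extracted only from edges that are actually drawn (or from the shared-neighbor rule), never assuming anything about the hidden neighbors beyond what the twin condition guarantees.
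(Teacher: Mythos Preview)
Your plan has the right opening move---the twin rule forcing $c(w_i)=c(w_i')$ and $c(y)=c(y_3)$---but the ``propagation'' step as you describe it does not work as stated. You write that ``at each degree-$6$ vertex with two black neighbors already determined, every further neighbor is forced white''; this is only true if you already know that the vertex itself is black, and at the start you know the colors of none of $w_1,w_1',w_2,w_2',x_1,y_1$. So the cascade cannot get off the ground the way you outline, and you would be forced into a case split on the colors of these intermediate vertices, which is exactly the bookkeeping morass you flag as a worry.

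The paper avoids this entirely with one further structural observation you are missing: once $w_i$ and $w_i'$ are known to be equicolored, the pair $\{w_i,w_i'\}$ acts as a single ``doubled'' neighbor for the purpose of the shared-neighbor count, and with this collapse $x$ and $y_1$ effectively share five neighbors. Hence the same rule gives $c(x)=c(y_1)$ directly, with no propagation and no case analysis. From there the proof is two lines: a single degree count at $x_1$ shows that $y_1$ must have at least one neighbor of its own color among $N_H(y_1)\setminus\{y,y_3\}$; since $y$ and $y_3$ are twins and both adjacent to $y_1$, they would contribute two more same-colored neighbors if $c(y)=c(y_1)$, overflowing $y_1$'s budget. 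Thus $c(y)\neq c(y_1)=c(x)$. Note this is a direct argument, not a proof by contradiction on $c(x)=c(y)$.
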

	\begin{proof}
		As~$w_i$ and~$w_i'$ shares~$5$ neighbors, $i \in \{1,2\}$, it follows that they must receive the same color in any~$(2,2)$-coloring of~$H$.
		The same occurs for~$y$ and~$y_3$.
		In this way, $w_i$ and~$w_i'$ can be seen as the same vertex in terms of any~$(2,2)$-coloring of~$G$, implying that~$x$ and~$y_1$ also ``shares~$5$ neighbors'' with respect to the coloring.
		This implies that~$x$ and~$y_1$ must receive the same color as well.
		Now it is enough to see that at least one vertex in~$N_H(y_1) \setminus \{y, y_3\}$ must receive the same color as~$y_1$, otherwise~$x_1$ would have four vertices with the same color.
		Therefore, the color of~$y$ and~$y_3$ have to be different from that of~$y_1$, implying that~$x$ and~$y$ receive different colors.
	\end{proof}
	
	We can use the same approach in the proof of Theorem~\ref{thm:2,2-coloring} to obtain the bound of~$2d+2$ on the maximum degree on the {\sf NP}-completeness of $(2,3)$-color planar graphs.
	We just modify the gadget used in Theorem~\ref{thm:2,2-coloring} as in Figure~\ref{fig:2,3-coloring}.
	With the same arguments, we can see that~$y_1$ must be adjacent to at least two vertices of the same color in~$N_H(y_1)\setminus \{y, y_3\}$, otherwise~$x_1$ or~$x_2$ would have four neighbors of the same color as themselves.
	Since~$x$ and~$y$ have degree~7 and receive different colors in any~$(2,3)$-coloring of~$H$, we can use the same strategy of adding a gadget~$H$ of Figure~\ref{fig:2,3-coloring} to each vertex of a planar graph~$G$ of maximum degree~6, obtaining a new graph~$G'$ of maximum degree~8.
	We conclude the reduction from Theorem~\ref{thm:2,2-coloring}, that ensures the {\sf NP}-completeness for~$G$.
	In this way, it follows that Corollary~\ref{cor:2,3-coloring} follows.
	
	\begin{figure}[t]
		\centering
		\includegraphics[width=0.6\textwidth]{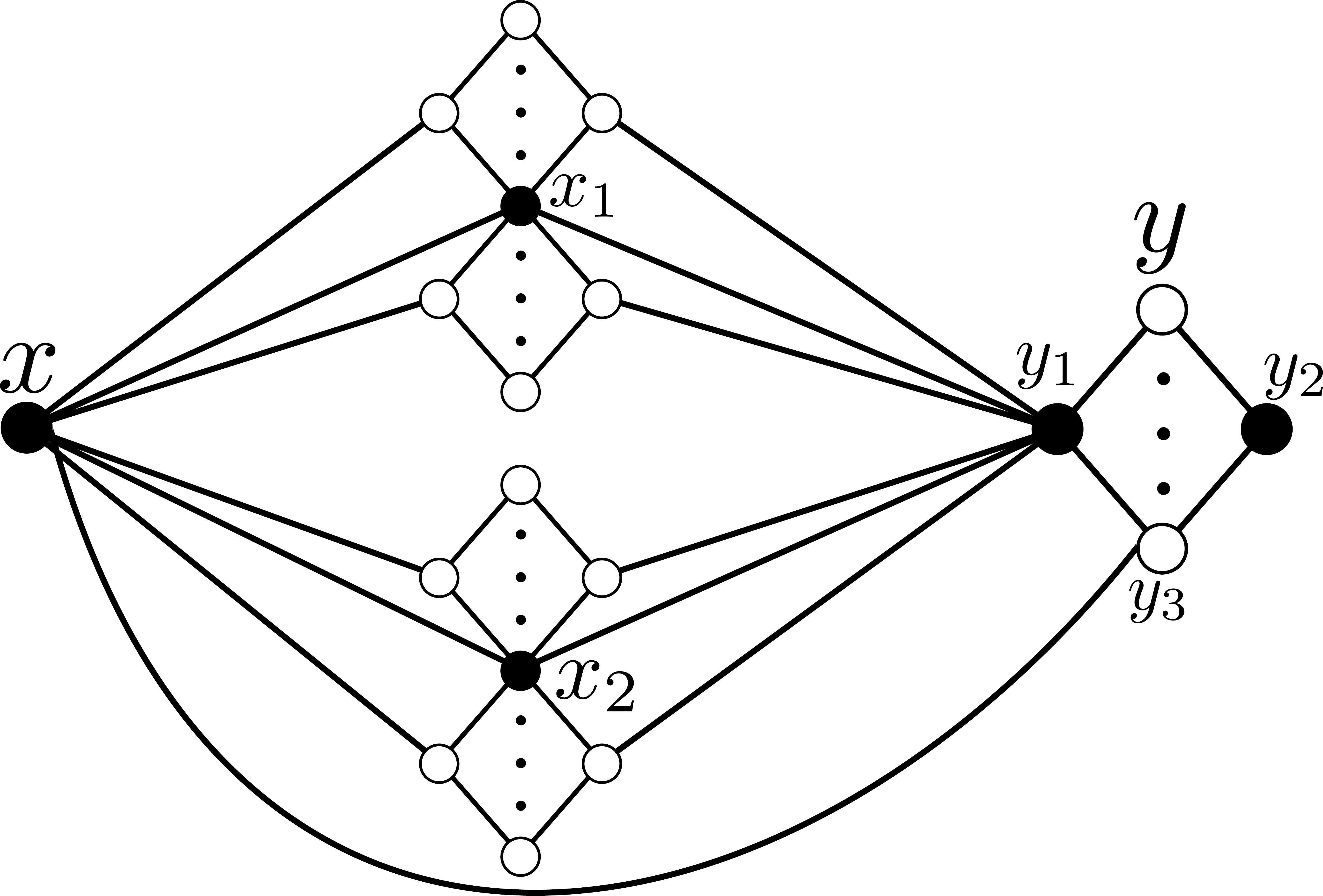}
		\caption{The gadget used in the reduction of Corollary~\ref{cor:2,3-coloring} $(2,3)$-colored with colors black and white.
			The vertices~$x$ and~$y$ must receive different colors in any $(2,3)$-coloring of~$H$.}
		\label{fig:2,3-coloring}
	\end{figure}
	
	It is not hard to see that we can extend the idea to values of~$d$ greater than~$3$.
	The central idea is in taking a vertex that has at least~$d-1$ vertices of the same color as itself in any~$(2,d)$-coloring.
	This holds for~$y_1$ in the both previous reduction.
	Then, we can add~$d-1$ structures in the neighborhood of~$y_3$, like in Figure~\ref{fig:2,3-coloring}, in order to ensure that~$y_3$ has~$d-1$ one neighbors colored as itself in each structure.
	We can obtain such neighbors by extending the neighborhood of each~$x_i$ as in Figure~\ref{fig:2,4-coloring} and Figure~\ref{fig:2,5-coloring}, for~$(2,4)$-coloring and~$(2,5)$-coloring, respectively.
	We can see that each vertex~$x_i$ cannot receive the same color as all of its neighbors less~$x$.
	So~$x$ has at least one neighbor colored as itself for each~$x_i$, $1 \leq i \leq d-1$ implying that~$x$ and~$y$ must receive different colors in any~$(2,d)$-color of~$H$.
	Hence we can use this gadget in order to obtain a~$(2,d)$-coloring of a planar graph~$G$ by adding~$H$ to each vertex~$v$ of~$G$, as in the proofs of Theorem~\ref{thm:2,2-coloring} and Corollary~\ref{cor:2,3-coloring}, where~$v$ is adjacent only to~$x$ and~$y$ in~$H$, by using the gadget for~$(2,d-1)$-color~$G$.
	
	It is not hard to see that the above construction cannot be applied in general, since the degree of~$x$ increases in a quadratic way on~$d$.
	Even so, we can see that it gives an upper bound on the maximum degree better than~$4d+3$ for~$d\leq 7$.
	
	\begin{figure}
		\begin{subfigure}[b]{.45\textwidth}
			\centering
			\includegraphics[width=\textwidth]{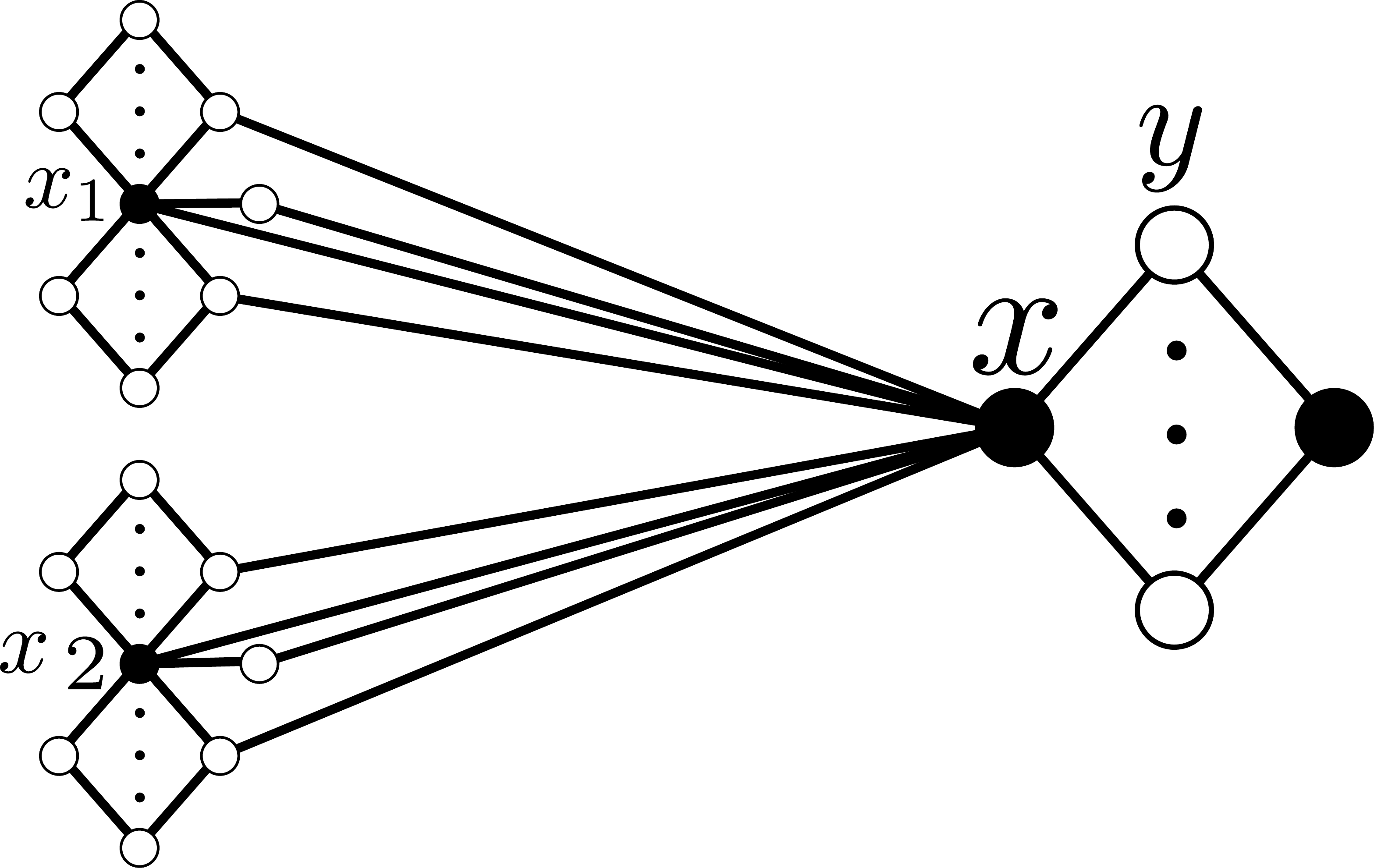}
			\caption{Gadget for $(2,4)$-coloring.}
			\label{fig:2,4-coloring}
		\end{subfigure} \qquad
		\begin{subfigure}[b]{.45\textwidth}
			\centering
			\includegraphics[width=\textwidth]{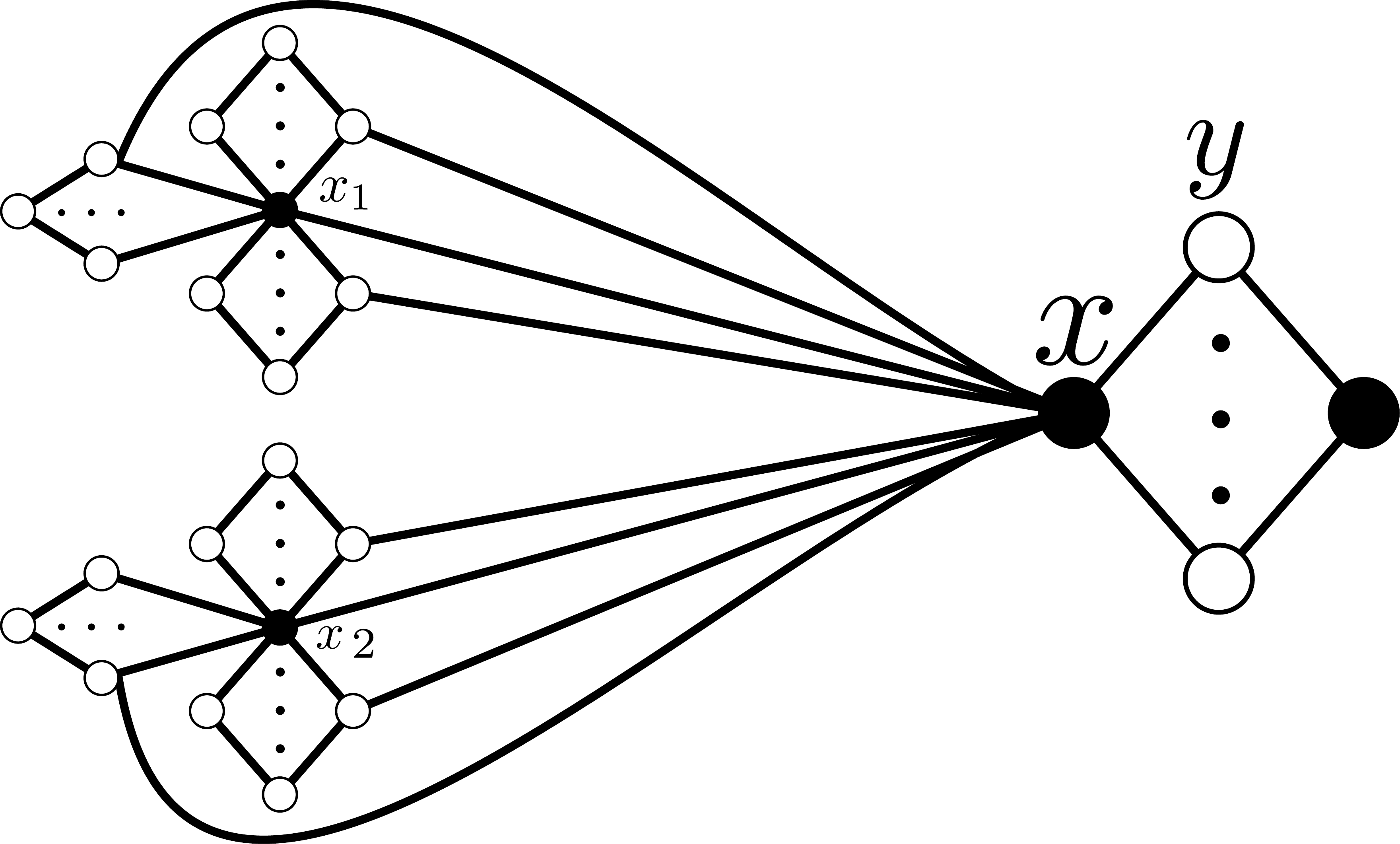}
			\caption{Gadget for $(2,5)$-coloring.}
			\label{fig:2,5-coloring}
		\end{subfigure}
		\caption{Generalized gadgets for~$(2,d)$-color planar graphs.
		}
		\label{fig:2,d-coloring}
	\end{figure}
	
	\section{Conclusion}
	\label{sec:conclusion}
	
	In this paper we have obtained the complete dichotomy on the computational complexity for~$(2,1)$-color a planar graph in terms of the maximum degree, where all graphs of maximum degree~$3$ admits a bipartizing matching and we prove the {\sf NP}-completeness for maximum degree~$4$.
	We extended the result for~$(2,2)$-coloring and~$(2,3)$-coloring.
	We left open Question~\ref{quest:2,d-coloring} restricted~$(2,d)$color planar graphs.
	We also give some parameterized complexity results, proving that bipartizing a graph by the removal of a matching is {\sf FPT} when parameterized by the clique-width, which results in polynomial-time algorithms for a number of important graph classes.
	
	Interesting properties regarding the chromatic number of graphs in~$\mathcal{BM}$ can be proposed.
	For example, which graphs~$G \in \mathcal{BM}$ are such that~$\chi(G-M) \leq \chi(G)$, for a bipartizing matching~$M$ of~$G$?
	Moreover, what is the maximum size of the gap between~$\chi(G-M)$ and~$\chi(G)$?

\end{document}